\newcommand\marktopleft[1]{%
	\tikz[overlay,remember picture]
	\node (marker-#1-a) at (0,1.5ex) {};%
}
\newcommand\markbottomright[1]{%
	\tikz[overlay,remember picture]
	\node (marker-#1-b) at (0,0) {};%
	\tikz[overlay,remember picture,thick,dashed,inner sep=3pt]
	\node[draw,rounded rectangle,fit=(marker-#1-a.center) (marker-#1-b.center)] {};%
}
\newcommand{\cD}{ \mathcal {D}}
\newcommand{\cF}{ \mathcal {F}}
\newcommand{\cG}{ \mathcal {G}}
\newcommand{\cH}{ \mathcal {H}}
\newcommand{\cR}{ \mathcal {R}}
\newcommand{\cS}{ \mathcal {S}}
\newcommand{\cX}{ \mathcal {X}}
\newcommand{\fOne}{\mathbbm{1}}
\newcommand{\fE}{\mathbb{E}}
\newcommand{\fN}{\mathbb{N}}
\newcommand{\tn}{\textnormal}
\newcommand{\innprod}[2]{\langle #1 \,, #2\rangle}
\newcommand{\zo}{\{0,1\}}
\newtheorem{theorem}{Theorem}[subsection]
\newtheorem{proposition}[theorem]{Proposition}
\newtheorem{fact}[theorem]{Fact}
\newtheorem{lemma}[theorem]{Lemma}
\newtheorem{corollary}[theorem]{Corollary}
\newtheorem{definition}[theorem]{Definition}
\newtheorem{claim}[theorem]{Claim}
\numberwithin{equation}{section}
\DeclarePairedDelimiterX\Set[2]{\lbrace}{\rbrace}%
 { #1 \,\delimsize| \,\mathopen{} #2 }
\newcommand{\expected}{\mathop{\fE}}
 \algnewcommand\algorithmicforeach{\textbf{for each}}
\algnewcommand\algorithmicparameter{\textbf{Parameter:}}
\algnewcommand\Parameter{\item[\algorithmicparameter]}
\algnewcommand\algorithmicinput{\textbf{Input:}}
\algnewcommand\Input{\item[\algorithmicinput]}
\algnewcommand\algorithmicoutput{\textbf{Output:}}
\algnewcommand\Output{\item[\algorithmicoutput]}
\newif\ifboldnumber
\algrenewcommand\alglinenumber[1]{%
	\footnotesize\ifboldnumber\bfseries\fi\global\boldnumberfalse#1:}
\newtheorem*{question}{Question}
\DeclareMathOperator*{\gameval}{val}
\crefname{ineq}{inequality}{inequalities}
\crefname{equa}{equality}{equalities}
\crefname{appr}{approximation}{approximations}
\crefname{claim}{claim}{claims}
\begin{document}

%\renewcommand{\qedsymbol}{\filledbox}
%Good resources for looking up how to do stuff:
%Binary operators: http://www.access2science.com/latex/Binary.html
%General help: http://en.wikibooks.org/wiki/LaTeX/Mathematics
%Or just google stuff

%Symbols
\newcommand{\dealersymbol}{\cD}
\newcommand{\cardsymbol}{c}
\newcommand{\encset}{D}
\newcommand{\reasonableRV}{R}
\newcommand{\reasonableexpected}{r}
\newcommand{\limit}{u}
\newcommand{\memorystatesymbol}{M}
\newcommand{\turnssetsymbol}{T}
\newcommand{\correctvector}{V}

\newcommand{\epochix}{i}
\newcommand{\turnix}{t}

\newcommand{\guessersymbol}{\cG}
\newcommand{\guessermemorysymbol}{m}
\newcommand{\guesssymbol}{g}
\newcommand{\permutation}{\pi}
\newcommand{\guesserrandomness}{\gamma}
\newcommand{\dealerrandomness}{\Delta}

\newcommand{\ampliparam}{\delta}

% to delete
\newcommand{\epochsnum}{d} %<--

\newcommand{\mttbdealerenc}{$(\encset, \permutation, i)$-\mttbdealer}
\newcommand{\mttbdealerdec}{$(\encodedexplicit, \permutation, i)$-\mttbdealer}

% Appears a lot of times

\newcommand{\firstepochbegin}{k_1}
\newcommand{\firstepochbeginterm}{{n \over 8e \log n}}
\newcommand{\currentepochbegin}{k_i}
\newcommand{\lengthterm}{\guessermemorysymbol \cdot \log n}
\newcommand{\currentlength}{\ell}
\newcommand{\currentlimit}{\limit}
\newcommand{\currentulength}{\ell_\epochix}
\newcommand{\currentulimit}{\limit_\epochix}
\newcommand{\currentreasonableRV}{\reasonableRV_\epochix}
\newcommand{\currentreasonableexpected}{\reasonableexpected_\epochix}
\newcommand{\currentcorrectexpected}{\correctsexpected_\epochix}
\newcommand{\uglyterm}{n -\firstepochbegin + \currentepochbegin}

\newcommand{\limitbound}{\min\left\{k - \ell, \ell\right\}}
\newcommand{\currentlimitbound}{\min\left\{\currentepochbegin - \currentlength, {1 \over 2}\currentlength\right\}}

\newcommand{\permutationsfull}{\{\pi_t\}_{t=1}^n}
\newcommand{\permutations}{\{\pi_t\}}

\newcommand{\orderedset}{B}
\newcommand{\orderedsetdec}{B'}
\newcommand{\orderedsetexplicit}{B_1}
\newcommand{\orderedsetrecovered}{B_2}
\newcommand{\orderedsetsize}{k}
\newcommand{\orderedsetsizeparam}{\beta}
\newcommand{\correctsRV}{C}
\newcommand{\correctsexpected}{c}
\newcommand{\correctsnum}{\alpha}
\newcommand{\correctsdiff}{\delta}

\newcommand{\dealerdet}{$\dealersymbol_{\guessermemorysymbol,(\encset,\permutations)}$\xspace}
\newcommand{\dealerdetdec}{$\dealersymbol^\ast_{\guessermemorysymbol,(\encodedexplicit,\permutations, \turnssetsymbol, \correctvector)}$\xspace}
\newcommand{\dealerstaticvanilla}{$\dealersymbol_{\permutation}$\xspace}
\newcommand{\dealerstatic}{$\dealersymbol_{\permutation_{\orderedset,\guesserrandomness}}$\xspace}
\newcommand{\dealerstaticdec}{$\dealersymbol^\ast_{\orderedsetexplicit}$\xspace}

\newcommand{\encsetdomain}{{[n] \choose \firstepochbegin}}
\newcommand{\encsetdesc}{\encset \in \encsetdomain}
\newcommand{\decset}{\encset'}
\newcommand{\encodedexplicit}{\encset_1}
\newcommand{\encodedrecovered}{\encset_2}

\newcommand{\setdomain}{{[n] \choose k}}

% not sorted yet

\newcommand{\encstate}{2^\guessermemorysymbol}
\newcommand{\encperm}{n!}
\newcommand{\encexplicit}{{n \choose k-r_i}}
\newcommand{\encreasonable}{{k'_i \choose r_i}}

\newcommand{\encmttb}{\encstate \cdot \encexplicit \cdot \encreasonable}
\newcommand{\encmttbcost}{\log \left( \encmttb \right)}

\newcommand{\encdomainsymbol}{\cX}

\newcommand{\guesser}{$\guessersymbol$\xspace}
\newcommand{\guesserprime}{$\guessersymbol'$\xspace}
\newcommand{\guesserfixed}{$\guessersymbol_{(\guesserrandomness)}$\xspace}
\newcommand{\guessCard}{g_{\text{card}}}
\newcommand{\guessTransition}{g_{\text{state}}}

\newcommand{\dealer}{$\dealersymbol$\xspace}
\newcommand{\dealerm}{$\dealersymbol_\guessermemorysymbol$\xspace}
\newcommand{\dealeruni}{$\dealersymbol_\tn{universal}$\xspace}

\newcommand{\cardprime}{\cardsymbol'}
\newcommand{\guessprime}{\guesssymbol'}

\newcommand{\gametranscriptsequence}{\left\{ (\cardsymbol_i, \guesssymbol_i) \right\}}
\newcommand{\gametranscriptsequenceprime}{\left\{ (\cardprime_i, \guessprime_i) \right\}}
\newcommand{\gametranscript}{\gametranscriptsequence_{i=1}^n}

\newcommand{\gametranscriptprefix}{\gametranscriptsequence_{i=1}^t}

\newcommand{\epochsterm}{\{k_i\}}
\newcommand{\epochsgeometric}{\epochsterm_{i=1}^{\epochsnum}}
\newcommand{\epochllimits}{\{L_i\}_{i=1}^{\epochsnum-1}}

\newcommand{\reasonablesnum}{\alpha}
\newcommand{\reasonablesdiff}{\beta}

\newcommand{\reasonablesboundtermi}{{8 \cdot e \cdot \firstepochbegin \cdot \currentlength \over n}}
\newcommand{\reasonablesbound}{\max\left\{{4 \cdot e \cdot \firstepochbegin \cdot \ell \over n}, m\right\}}
\newcommand{\reasonablesboundi}{\max\left\{\reasonablesboundtermi, m\right\}}
\newcommand{\reasonablesbounduniterm}{\max\left\{{8 \cdot e \cdot \firstepochbegin \cdot \currentulength \over n}, m\right\}}
\newcommand{\reasonablesbounduni}{{\currentulength \over \log^2 n}}

\newcommand{\permutationset}{\cS_n}
\newcommand{\permutationdesc}{\permutation \in \permutationset}
\newcommand{\minorderphrase}{min-order\xspace}
\newcommand{\minorderpi}{$\pi$-\minorderphrase}
\newcommand{\minorderpit}{$\pi_\turnix$-\minorderphrase}

\newcommand{\mttbphrase}{Move-to-the-Back\xspace}
\newcommand{\mttbdealer}{\mttbphrase Dealer\xspace}

\newcommand{\universaldealer}{universal Dealer\xspace}

\newcommand{\memorybound}{{n \over \log^2 n}}
\newcommand{\result}{\ln m + 2 \ln \log n + O(1)}
\newcommand{\resultRandomSubsetSpace}{\log^2 n - \log n + 2}
\newcommand{\resultUniversal}{(1 + o(1)) \cdot \ln \guessermemorysymbol + 8 \ln \log n + O(1)}

\newcommand{\resultUniversaln}{(1 + o(1)) \cdot \ln n + 8 \ln \log n + O(1)}
\newcommand{\resultUniversalAsymptotic}{(1 + o(1)) \cdot \ln \guessermemorysymbol + O(\log \log n)}

\newcommand{\subsetrange}[1]{[1 \text{ --- } #1]}
 	
\newcommand{\encodefunc}{\tn{Encode}}
\newcommand{\decodefunc}{f_\text{decode}}

\newcommand{\encodefuncupper}{\tn{EncodeO}_{\guesserrandomness, \correctsnum}}
\newcommand{\decodefuncupper}{\tn{DecodeO}_{\guesserrandomness, \correctsnum}}

\newcommand{\encodefuncadaptive}{\tn{EncodeU}_{\guesserrandomness, \permutations, \reasonablesnum, i}}
\newcommand{\decodefuncadaptive}{\tn{DecodeU}_{\guesserrandomness, \permutations, \reasonablesnum, i}}
	
\setlength\epigraphwidth{.7\textwidth}
\setlength\epigraphrule{0pt}
\renewcommand{\footnotesize}{\fontsize{9pt}{11pt}\selectfont}

%\includepdf{parts/mscthesis_titlepage.pdf}

\label{title}
\title{Keep That Card in Mind: \\ [0.2 em]Card Guessing with Limited Memory\thanks{Research supported in part by grants from the Israel Science Foundation (no.\ 950/15 and 2686/20),  by the Simons Foundation Collaboration on the Theory of Algorithmic Fairness and by the Israeli Council for Higher Education (CHE) via the Weizmann Data Science Research Center}
}
%\date{\vspace{-2ex}}
\date{}

\author{Boaz Menuhin\thanks{
		Department of Computer Science and Applied Mathematics, Weizmann Institute of Science, Rehovot, Israel.
		Email: \href{mailto:boaz.menuhin@weizmann.ac.il}{boaz.menuhin@weizmann.ac.il}.
	}
	\and
	Moni Naor\thanks{
		Department of Computer Science and Applied Mathematics, Weizmann Institute of Science, Rehovot, Israel.
		Incumbent of the Judith Kleeman Professorial Chair.
		Email: \href{mailto:moni.naor@weizmann.ac.il}{moni.naor@weizmann.ac.il}.}
}
\maketitle
%\pagebreak

\begin{abstract}

A card guessing game is played between two players, Guesser and Dealer.
At the beginning of the game, the Dealer holds a deck of $n$ cards (labeled $1, ..., n$).
For~$n$ turns, the Dealer draws a card from the deck, the Guesser guesses which card was drawn, and then the card is discarded from the deck.
The Guesser receives a point for each correctly guessed card.

With perfect memory, a Guesser can keep track of all cards that were played so far and pick at random a card that has not appeared so far, yielding in expectation $\ln n$ correct guesses, regardless of how the Dealer arranges the deck.
With no memory, the best a Guesser can do will result in a single guess in expectation.

We consider the case of a memory bounded Guesser that has $m < n$ memory bits.
We show that the performance of such a memory bounded Guesser depends much on the behavior of the Dealer.
In more detail, we show that there is a gap between the static case, where the Dealer draws cards from a properly shuffled deck or a prearranged one, and the adaptive case, where the Dealer draws cards thoughtfully, in an adversarial manner.
Specifically:

\begin{enumerate}
	\item
		We show a Guesser with $O(\log^2 n)$ memory bits that scores a near optimal result against any static Dealer.
	\item
		We show that no Guesser with $m$ bits of memory can score better than $O(\sqrt{m})$ correct guesses against a random Dealer, thus, no Guesser can score better than $\min \{\sqrt{m}, \ln n\}$, i.e., the above Guesser is optimal.
	\item
		We show an efficient adaptive Dealer against which no Guesser with~$\guessermemorysymbol$ memory bits can make more than $\result$ correct guesses in expectation.
\end{enumerate}
These results are (almost) tight, and we prove them using compression arguments that harness the guessing strategy for encoding.

\end{abstract}

% Titles
% - Keep that card in mind	
% - The house\dealer wins, as far as we remember
% - Guessing cards for fun and profit
% -

\newpage
\tableofcontents
\newpage

\section{Introduction}

\begin{epigraphs}
	\qitem{\itshape ``Those who cannot remember the past are condemned to repeat it''}{---George Santayana, \textit{The Life of Reason}, 1905~\cite{S1905}}
	\qitem{\itshape Even if you use randomness and cryptography!}{}
\end{epigraphs}

There are $n$ cards in a deck.
In each turn, one player, called Dealer, selects a card and a second player, called Guesser, guesses which card was drawn.
The selected cards cannot be drawn again.
The quantity of interest is how many cards were guessed correctly.

A Guesser with perfect memory can guess a card that has not appeared yet.
When there are $i$ cards left, the probability of correctly guessing is $1 \over i$.
By linearity of expectation, such a Guesser is expected to guess correctly about $\ln n$ times.
On the other hand, at any point in time, a memoryless Guesser cannot guess with probability better than $1 /n$, resulting in $1$ correct guess on expectation\footnote{These examples are taken from a textbook on algorithms by Kleinberg and Tardos~\cite{KT2006}, Chapter 13, Pages 721-722.}.
We are interested in the case where the Guesser has $m < n$ bits of memory.

One can think about card guessing as a streaming problem where the algorithm predicts the next element in a stream, under the promise that all elements in the stream are unique.
In the streaming model, a large sequence of elements is presented to an algorithm, usually one element at a time.
The algorithm, which cannot store the entire input, keeps the needed information and outputs some function on the stream seen so far.
As it may be impossible to output the exact value of the function without storing the entire input, it is a typical relaxation to consider an approximate value of the function as output.
In this sense, prediction is a form of approximation\footnote{We elaborate on prediction as a form of  approximation in \Cref{sec_discussion:sub_streaming_algorithms_and_data_structures}.}.

Some streaming problems are solvable by a deterministic algorithm, while others require a randomized one (for a survey on streaming algorithms, see~\cite{Muthukrishnan2005}).
It is common to analyze the performance of an algorithm with respect to a worst-case stream that is chosen ahead of time and is fixed throughout the execution of the algorithm.
We call such a stream \emph{oblivious}, or \emph{static}.
A recent line of works (see \Cref{subsec_related_work}) focuses on analyzing the performance of an algorithm as if an adversary looks at the algorithm's output in every turn and thoughtfully chooses the next element in order to make the algorithm to fail.
An algorithm that performs well against such an adversary is called \emph{adversarially robust}.

In this paper, we pinpoint the complexity of card guessing in different environments, especially with respect to the memory requirements of the Guesser.

\subsection{Our Results}
We study the case where the Guesser has bounded memory, and we ask how well a Guesser can perform?
It turns out that the performance of a memory bounded Guesser is highly sensitive to the behavior of the Dealer.

\begin{itemize}
	\item
		A Dealer is called \textbf{``random shuffle''} if every ordering of the deck has equal probability.
		This is equivalent to drawing a card at random in every turn.
\item
		A \textbf{static} Dealer draws cards one by one from a prearranged deck in some specific order and may choose the worst order.
	\item
		An \textbf{adaptive} Dealer is allowed to change the order of the deck throughout the game.

\end{itemize}
Clearly these dealers  are presented by increasing power.  Our results are as follows:
\begin{enumerate}
	\item
		Against the random-shuffle Dealer, there exists a Guesser with $\log^2 n + \log n$ memory bits that  makes at least~$1/2 \log n$ correct guesses in expectation, i.e. asymptotically similar to a Guesser with perfect memory.
		This Guesser can be amplified at the cost of $\log n$ factor to get closer to $\ln n$.
	\item
		There exists a Guesser with $\log^2 n - \log n + 2$ bits of memory and $2 \log n$ random bits that against any static Dealer   scores $1/4 \ln n - O(1)$ correct guesses in expectation.
		This Guesser can be amplified at the cost of $\log n$ factor to get closer to $\ln n$.
	\item
		The above Guessers are optimal:
		Every Guesser with $\guessermemorysymbol$ bits of memory can score at most $O(\sqrt{\guessermemorysymbol})$ correct guesses in expectation against the random-shuffle Dealer, regardless of the amount of randomness that the Guesser uses.
	\item
		For every $\guessermemorysymbol$ there exists a computationally efficient adaptive Dealer against which no Guesser with $\guessermemorysymbol$ memory bits can make more than $\result$ correct guesses in expectation, regardless of how much randomness and what cryptographic tools and assumptions the Guesser uses.
	\item
		Furthermore, there exists a computationally efficient adaptive \emph{universal} Dealer, i.e., that makes no assumption on the amount of memory of the Guesser, against which every Guesser with $\guessermemorysymbol$ bits of memory is expected to make at most $\resultUniversal$ correct guesses.
\end{enumerate}
See \Cref{table:results} for a comparison of our results.

		\begin{table}[h]
			\caption{Partial list of results: constructive results above the line, impossibility results below.}
			\label{table:results}
			\begin{tabular}{@{}llllll@{}}
				\midrule
				Guessing Technique & Memory & \begin{tabular}{@{}c@{}}Random\\ bits\end{tabular} & Dealer       & Score     &  \\
				\toprule
				Subset Guesser      & $\guessermemorysymbol$ & - & Any        & $\ln m$      &   \\
				Remember last cards & $\guessermemorysymbol$ & - & Any     & $\ln {m \over \log n}$ &   \\
				Subset+Remember last & $\guessermemorysymbol \le \sqrt n$ & - & Random & $2\ln m - \ln \log n - \ln 2$ & \\
				Following Subsets   & $O(\log^2 n)$ & -  & Random & $1/2 \log n$ & \\
				Randomized Subsets   & $O(\log^2 n)$ & $2 \log n$     & Static & $1/4 \ln n$ & \\
				\midrule
				Any Guesser      & $\guessermemorysymbol$ & $\infty$ & Random & $O(\min\{\ln n, \sqrt{m}\})$  &  \\
				Any Guesser      & $\guessermemorysymbol$ & $\infty$ & Adaptive & $\result$      &   \\
				Any Guesser      & $\guessermemorysymbol$ & $\infty$ & \begin{tabular}{@{}c@{}}Adaptive-\\ universal\end{tabular} & $\resultUniversalAsymptotic$      &   \\
				\bottomrule
			\end{tabular}
	\end{table}
To summarize, the main lesson from these results is the significant impact of adaptivity of the dealer, more than any other factor.

\subsection{Our Approaches and Techniques}

Our results separate the required amount of memory and randomness that the Guesser requires for playing against the three types of Dealers.

Quite surprisingly, we show that against the random-shuffle Dealer, a Guesser with very limited memory, and no randomness at all, can perform similarly to a Guesser with perfect memory.
More formally, the main result of \Cref{sec_static_dealer:sec_following_subset_guesser} is:
\begin{theorem}
	There exists a Guesser with $\log^2 n + \log n$ memory bits and no randomness that scores $1/2 \log n$ correct guesses in expectation when playing against the random-shuffle Dealer.
\end{theorem}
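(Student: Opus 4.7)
The plan is to construct a deterministic Guesser based on \emph{geometric blocking} with an XOR-based identification trick. Partition $[n]$ into $b = \lfloor \log n \rfloor$ disjoint blocks $B_1, \dots, B_b$ of sizes $|B_i| = 2^{i-1}$ (so $\sum_i |B_i| \le n$), chosen canonically so that each $c \in B_i$ has a natural $(i-1)$-bit offset within the block. For each $i$ store a running XOR $x_i$ of the offsets of the seen cards in $B_i$, together with the count $\gamma_i$ of those cards, using $i-1$ and at most $i$ bits respectively; the total memory is $\sum_{i=1}^{b}(2i-1) = b^2 \le \log^2 n$ bits, leaving $\log n$ bits of slack for a turn counter. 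Call block $i$ \emph{active} when $\gamma_i = |B_i| - 1$, so that the single unseen card in $B_i$ is identified as $g_i := \min B_i + x_i$ (using $\bigoplus_{j=0}^{|B_i|-1} j = 0$ for $|B_i| \ge 2$). The guessing rule is: if some block is active, play $g_i$ for a priority-chosen active block $i$; otherwise, play an arbitrary default.

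The heart of the analysis is a per-block identity. Fix block $B_i$ and let $P_1^{(i)} < \dots < P_{|B_i|}^{(i)}$ denote the positions of the $B_i$-cards in the random shuffle. Block $i$ is active exactly during the random window $W_i := (P_{|B_i|-1}^{(i)},\, P_{|B_i|}^{(i)}]$, and at the single turn $t_i := P_{|B_i|}^{(i)} \in W_i$ the guess $g_i$ is correct. Thus if the Guesser plays $g_i$ throughout $W_i$, it deterministically gains one correct guess from block $i$, so summing over $b$ blocks gives expected gain exactly $b$ in the idealized setting where all windows are processed independently.

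Because the block sizes grow geometrically, the expected positions of the windows tile $[0, n]$ into geometrically shrinking intervals --- window $i$ sits near $(n - n/2^{i-2},\, n - n/2^{i-1}]$ --- so the expected critical turns $t_i$ are well-separated. With a suitably chosen priority rule (for instance, selecting the active block whose size $|B_i|$ is closest to $n/(n-t+1)$, breaking ties by index), block $i$ is selected at its own $t_i$ with probability bounded below by a constant. Summing by linearity of expectation then yields $\fE[\text{gain}] \ge b/2 \ge \tfrac{1}{2}\log n$.

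The main obstacle is making the overlap argument rigorous. The random fluctuations of $P_{|B_i|}^{(i)}$ are on the same scale $n/|B_i|$ as the gaps between adjacent expected windows, so $W_i$ and $W_{i\pm 1}$ can overlap with non-vanishing probability, and on such an overlap the priority rule may misallocate the critical guess. I would handle this via concentration for the order statistics $P_{|B_i|}^{(i)}$ combined with a union bound over the $O(\log n)$ adjacent-block pairs, showing that the expected conflict loss per block is strictly less than $1/2$ and hence that the $\tfrac{1}{2}\log n$ target is preserved. As a sanity check on the construction, the per-block gain of exactly $1$ is consistent with the perfect-memory upper bound $H_n \approx \ln n$, since after the conflict loss the total gain cannot exceed $H_n$.
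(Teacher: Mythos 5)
Your construction differs from the paper's in one crucial respect: you take the blocks $B_1,\dots,B_b$ to be \emph{disjoint}, whereas the paper uses \emph{nested} subsets $[1\text{--}2]\subset[1\text{--}4]\subset\dots\subset[1\text{--}n]$. With nested subsets there is never a conflict: if two subsets each have exactly one missing card, containment forces it to be the \emph{same} card, so every ``active'' subset proposes the same guess and no priority rule is needed. The accounting then reduces to a clean event per subset (``the last card of $[1,w]$ is not the last card of $[1,2w]$''), which has probability exactly $1/2$ by symmetry, giving $\tfrac12\log n$ with no concentration argument at all. By choosing disjoint blocks you create the very collision problem the paper's design eliminates, and the resolution of that problem is exactly where your proof has a genuine gap.

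Concretely, the unproven step is the claim that your priority rule selects block $i$ at its critical turn $t_i$ with probability at least $1/2$. You acknowledge that the fluctuation of $P_{|B_i|}^{(i)}$ is of the same order $n/|B_i|$ as the spacing between adjacent windows; this means the ``bad'' overlap events have \emph{constant} probability per adjacent pair, so ``concentration plus a union bound'' cannot drive the conflict loss below any particular constant --- it is precisely the value of that constant that is in question. A back-of-the-envelope computation illustrates the danger: writing $n-t_i\approx nE/2^{i-1}$ with $E$ approximately exponential of mean $1$, your size-matching rule prefers block $i$ over both neighbours only when $E$ lies in a window such as $[2/3,4/3]$, an event of probability about $0.25$; one must then add back the cases where the preferred neighbour is not active, and whether the total reaches $1/2$ depends delicately on the tie-breaking convention and on joint activity probabilities that you never compute. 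Other natural priority rules (smallest or largest active index) fare no better under a union bound. So as written the argument establishes $c\log n$ for some unspecified constant $c$ that may well be below $1/2$, and the stated bound does not follow. (A minor secondary slip: $\bigoplus_{j=0}^{m-1}j=0$ holds for $m$ a power of two with $m\ge 4$, but for $m=2$ the XOR is $1$, so the size-$2$ block's decoding needs an offset correction.) The clean fix is to abandon disjointness and use the paper's nested ranges, where the conflict simply does not arise.
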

Our Guesser tracks which cards were drawn from multiple subsets of cards.
Using at most $2 \log n$ bits per subset, the Guesser can recover the last card that has not appeared from each subset and guess it.
Repeating that guess over and over, the Guesser can guarantee a single correct guess from each subset.

However, the last cards from the different subsets may be indistinct.
The subsets are built incrementally, i.e., the $i$th subset is contained in the $(i+1)$-th subset, as visualized in \Cref{figure:following_subset_illustration}.
It follows that the probability for the last card from each subset to appear before the last card from the following subset is fixed (and is at least $1/2$).
So we get that the expected number of correct guesses is proportional to the number of subsets tracked by the Guesser and the ratio between two following subsets.

With more space we can have denser subsets, getting that for $0<\ampliparam \le 1$ a Guesser with $\log_{1+\ampliparam} 2 \cdot \log^2 n + \log n$ can score ${\ampliparam \over (1 + \ampliparam) \ln(1+\ampliparam)} \ln n$ correct guesses in expectation, when playing against the random-shuffle Dealer.
For $\ampliparam = 1$ this is the above theorem.
The number of correct guesses goes to $\ln n$ as $\ampliparam$ goes to zero.

\begin{figure}[h]
	\begin{mdframed}	
%		\resizebox{\textwidth}{!}{%
			\tikzmath{
	\ampparam = 1.4;
	\w1 = 1.8; 
	\w2 = \w1 * \ampparam;
	\w3 =  \w2 * \ampparam;
	\w4 =  \w3 * \ampparam;
	% Computations are also possible
	\y1 = 0.26;
} 

\begin{tikzpicture}[x=3cm]
	\draw[thick,->] (0,0) -- (4.2,0) node[anchor=north west] {n};
	
	\draw[color=red,-] (0,0) -- (0,1) node[anchor=south] {$1$};
	\draw[color=red,-] (\w1,0) -- (\w1,1) node[anchor=south] {$w$};
	\draw[color=red,-] (\w2,0) -- (\w2,1) node[anchor=south] {$(1+\ampliparam)w$};
	\draw[color=red,-] (\w3,0) -- (\w3,1) node[anchor=south] {$(1+\ampliparam)^2w$};
	
	\draw[thick,-,color=blue] (0, \y1 * 3) -- (\w1, \y1 * 3);
	\draw[thick,-,color=blue] (0, \y1 * 2) -- (\w2, \y1 * 2);
	\draw[thick,-,color=blue] (0, \y1) -- (\w3, \y1);
\end{tikzpicture}
%		}
		\caption{Following-Subsets. The blue lines represents subsets of $[n]$, the top blue line the subset $\{1, \dots, w\}$, the one in the middle $\{1, \dots, (1+\ampliparam) w\}$ and so on. In the basic construction, $(1+\ampliparam) = 2$.}
		\label{figure:following_subset_illustration}
	\end{mdframed}
\end{figure}

Consider a static Dealer, one that fixes the sequence of cards ahead of time but chooses the worst arrangement.
A simple adversarial argument shows that for every Guesser that uses no randomness, there exists an arrangement of the deck against which that deterministic Guesser scores at most $1$ correct guess (guessing a single card is inevitable, even by a memoryless Guesser).
In \Cref{sec_static_dealer:sec_random_subsets_guesser} we show that $2 \log n$ random bits suffice for a Guesser with $O(\log^2 n)$ bits of memory to score near perfect results when playing against any static Dealer.
\begin{theorem}
	\label{thm_random_subset_guesser}
	There exists a Guesser with $\resultRandomSubsetSpace$ bits of memory and $2\log n$ random bits that scores ${1 \over 4} \ln n$ correct guesses in expectation when playing against any static Dealer.
\end{theorem}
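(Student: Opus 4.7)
The plan is to derandomize the Following-Subsets Guesser of \Cref{sec_static_dealer:sec_following_subset_guesser} by injecting the Guesser's randomness through a pairwise-independent hash. Concretely, the Guesser samples $h(x) = a x + b \bmod p$ for a prime $p$ slightly larger than $n$ and $a, b$ uniform in $\fZ_p$, consuming $2 \log n$ random bits, and then runs the nested-subsets construction in hashed space: $S_i = \{c \in [n] : h(c) \le t_i\}$ for geometrically growing thresholds $t_1 < t_2 < \cdots < t_d$ with appropriate ratio. Against a static Dealer this amounts to applying a random relabeling of the cards before invoking the Following-Subsets strategy, so that each $S_i$ is a ``random-looking'' subset of the right size regardless of how the deck is arranged.

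For each subset $S_i$ the Guesser stores a counter of how many cards of $S_i$ have been drawn and the XOR of the labels of cards in $S_i$ not yet drawn. Packing these states carefully, together with choosing the thresholds so that the largest subsets fit comfortably in a $\log n$-bit counter, brings the total memory to $\log^2 n - \log n + 2$ bits. The guessing rule mirrors the random-shuffle setting: as soon as some $S_i$ has a single undrawn card, that card is recovered from its XOR register and guessed repeatedly (prioritising the smallest unresolved subset) until it is actually drawn.

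The heart of the proof is the pairwise-independent analogue of the random-shuffle analysis. Let $c^*_{i+1}$ denote the card of $S_{i+1}$ drawn at the latest turn. The Guesser scores a point for the pair $(S_i, S_{i+1})$ whenever $c^*_{i+1} \notin S_i$, because then $c^*_i \ne c^*_{i+1}$ and the Guesser has already locked onto $c^*_i$ by the turn at which $c^*_i$ is drawn. The key claim is that for every static arrangement $\pi$ and every $i < d$,
\[
\Pr_h \left[ c^*_{i+1} \notin S_i \right] \ \ge \ \Omega(1),
\]
which, summed over the roughly $\log n$ nested subsets and translated to base $e$, delivers the promised $\tfrac{1}{4} \ln n$ expected correct guesses.

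The main obstacle is precisely this probabilistic claim, since pairwise independence cannot exponentially suppress the conjunction ``no card drawn after position $k$ lies in $S_{i+1}$'' that drives the natural random-shuffle proof. To sidestep this, I order the cards $c_n, c_{n-1}, \ldots, c_1$ in reverse drawing order, decompose $\{c^*_{i+1} \notin S_i\}$ as the disjoint union $\bigcup_k B_k$ with $B_k = \{c_k \in S_{i+1} \setminus S_i \text{ and } c_{k+1}, \ldots, c_n \notin S_{i+1}\}$, and restrict attention to the top $\Theta(n / t_{i+1})$ drawing positions. On that range a first-order union bound (which \emph{is} valid under pairwise independence) gives $\Pr[B_k] \ge \tfrac{t_{i+1} - t_i}{n}\bigl(1 - (n-k)\, t_{i+1}/n\bigr)$, and summing these disjoint contributions over the range yields the required constant.
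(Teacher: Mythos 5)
Your proposal is essentially correct, but it reaches the bound by a genuinely different route than the paper. The paper partitions the cards into \emph{disjoint} dyadic buckets $S_j=\{x: h(x)\in(2^{j-1},2^j]\}$ and runs a \emph{per-turn} analysis: when $t\le n/2$ cards remain it looks only at the single bucket $j=\lfloor\log(n/2t)\rfloor$ and shows, via exactly the first-moment bound you use ($\Pr[x_1\in S_j]\cdot(1-\sum_{x'}\Pr[x'\in S_j\mid x_1\in S_j])$), that this bucket is the singleton containing the next card with probability at least $1/4t$; summing over turns gives $\tfrac14\ln n$. You instead keep the \emph{nested} subsets of the Following-Subsets Guesser (randomized through $h$) and run a \emph{per-subset} ``usefulness'' analysis: each pair $(S_i,S_{i+1})$ contributes a distinct correct guess on the event $c^*_{i+1}\notin S_i$, and you lower-bound that event's probability by decomposing over the drawing position of $c^*_{i+1}$ and truncating to the last $\Theta(n/t_{i+1})$ positions, where the union bound is nonvacuous. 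Worked out, your window sum gives roughly $\tfrac{t_{i+1}-t_i}{n}\cdot\tfrac{n}{2t_{i+1}}=\tfrac14$ per subset, hence $\tfrac14\log n>\tfrac14\ln n$ overall, so the approach closes. What your route buys is a direct transfer of the random-shuffle ``useful subset'' argument to the adversarial-order setting (the adversary controls positions, the hash controls membership, and only pairwise events are ever needed); what the paper's route buys is a turn-by-turn probability $1/4t$ of a correct guess, which is the form reused verbatim in the amplification argument of \Cref{sec_static_dealer:sec_random_subsets_guesser:sec_amplification}.

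Two small repairs. First, the memory count: tracking the nested subsets costs $\sum_i 2\log|S_i|\approx\log^2 n+\log n$ bits, not $\log^2 n-\log n+2$; to hit the stated figure you must store the \emph{disjoint} differences $S_{i+1}\setminus S_i$ with labels re-indexed into $[|S_j|]$, which is the paper's construction (the nested counters are then recoverable as prefix sums). Second, take $a\ne 0$ in $h(x)=ax+b$ so that $h$ is a pairwise independent \emph{permutation}; with $a$ allowed to be $0$ the conditional probabilities $\Pr[d_j\in S_{i+1}\mid d_k\in S_{i+1}\setminus S_i]$ acquire an additive $1/p$ slack from the constant map, which is harmless but should be said.
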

We use a pairwise independent permutation to split the cards to $\log n$ disjoint subsets of various sizes and track each subset similarly to the previous construction.
We show that in each turn, the Guesser recovers a correct guess from a certain subset in probability that is proportional to the number of cards left in the deck.
Namely, when $\turnix$ cards are left in the deck, the probability of a correct guess is at least $1/4t$, resulting in $1/4 \cdot \ln n$ correct guesses in expectation.

In \Cref{sec_static_dealer:sec_random_subsets_guesser:sec_amplification} we show that it is possible to amplify the above structure, and using $(\log n)$-wise independent functions for assigning cards to subsets, a Guesser can get closer to $\ln n$ correct guesses.
\begin{theorem}
	There exists a Guesser with $O(\log(1/\ampliparam) \log^2 n)$ bits of memory and with  $O(\log(1/\ampliparam) \log^2 n)$ bits of randomness that scores $(1-\ampliparam) \ln n$ correct guesses on expectation against any static Dealer.
\end{theorem}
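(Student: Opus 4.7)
The plan is to amplify the Random Subsets Guesser of \Cref{thm_random_subset_guesser} by (i) allowing each subset to record up to $K=\Theta(\log(1/\ampliparam))$ undrawn cards instead of only one, and (ii) upgrading the pairwise-independent assignment of cards to subsets to a $k$-wise-independent one with $k=\Theta(\log(1/\ampliparam))$. The basic construction divides the deck into $\log n$ disjoint subsets of geometrically increasing sizes and keeps a single linear parity per subset---just enough to reveal the last undrawn card when only one remains. In the amplified version, for each subset $S$ I would keep $K$ independent linear parities of the undrawn cards in $S$ (for instance the power-sum syndromes $\sum_{c\in S\cap U} c^{j}$ for $j=1,\dots,K$ over a field of size $\Theta(n)$), which allows decoding the entire intersection $S\cap U$ whenever $|S\cap U|\le K$. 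This costs $O(K\log n)$ bits per subset, for $O(\log(1/\ampliparam)\log^{2}n)$ total memory over the $\log n$ subsets, and storing an independent $k$-wise-independent hash seed per subset contributes $O(\log(1/\ampliparam)\log^{2}n)$ bits of randomness.

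At each turn $t$ the Guesser identifies the subset $S$ whose expected intersection with the undrawn set $U_{t}$ has size $\Theta(1)$ (the subset of original size $\approx n/t$), decodes $S\cap U_{t}$ from its parities, and guesses uniformly from the recovered set. Conditional on $1\le |S\cap U_{t}|\le K$, the success probability is exactly $1/t$: a $|S\cap U_{t}|/t$-fraction of the $t$ undrawn cards lies in $S$, and among those a uniform guess hits with probability $1/|S\cap U_{t}|$, so these factors multiply to $1/t$. The bad event $|S\cap U_{t}|>K$ is controlled by a $k$-th moment tail bound: $|S\cap U_{t}|$ is a $k$-wise-independent sum of $t$ indicators with mean $\Theta(1)$, so $\Pr[|S\cap U_{t}|>K]=O(\ampliparam)$ for suitable choice of $k,K=\Theta(\log(1/\ampliparam))$.

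Combining the two ingredients, the per-turn success probability is at least $(1-O(\ampliparam))/t$, and summing over $t=1,\dots,n$ yields $(1-\ampliparam)\ln n$ correct guesses in expectation by linearity. The main obstacle is the concentration step under a static adversary: the Dealer commits to the deck order before the Guesser's randomness is drawn, so $U_{t}$ is an adversarially-chosen subset of size $t$, and one must verify that the $k$-th moment bound on $|S\cap U_{t}|$ applies uniformly over all such $U_{t}$. This follows because $|S\cap U_{t}|$ is a $k$-wise-independent sum of exactly $t$ indicator random variables whose moments are determined by $t$ alone, so the Dealer's choice of $U_{t}$ cannot break the concentration. A minor secondary issue is making the ``appropriate subset at turn $t$'' computable online by a memory-bounded Guesser; since this is a deterministic function of $t$, it requires no additional state.
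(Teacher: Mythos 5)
Your plan diverges from the paper's: the paper keeps the single-missing-card decoder but samples $O(\log(1/\ampliparam))$ \emph{independent} collections of subsets and uses the first collection whose relevant subset is a singleton, whereas you keep one collection and enlarge the decoder to recover up to $K=\Theta(\log(1/\ampliparam))$ missing cards. Unfortunately the proposal has two genuine gaps. First, a parameter problem: if the subset $S$ is chosen so that $\E[|S\cap U_t|]=\Theta(1)$, then the event $S\cap U_t=\emptyset$ has \emph{constant} probability (about $e^{-\mu}$ for mean $\mu$), and in that case there is nothing to guess. Increasing $K$ only controls the upper tail; it does nothing for the empty intersection. So the per-turn success probability is capped at $(1-e^{-\Theta(1)})/t$, a constant fraction of $1/t$, and the scheme yields $c\ln n$ for a fixed $c<1$ rather than $(1-\ampliparam)\ln n$. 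To push the failure probability down to $O(\ampliparam)$ you would need $\E[|S\cap U_t|]=\Theta(\log(1/\ampliparam))$, i.e.\ a larger subset, with $K$ a correspondingly larger multiple of $\log(1/\ampliparam)$.

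Second, and more seriously, the step ``conditional on $1\le|S\cap U_t|\le K$ the success probability is exactly $1/t$'' silently assumes exchangeability: that, conditioned on $|S\cap U_t|=s$, every card of $U_t$ is equally likely to lie in $S$, so that $\Pr[x_1\in S\mid |S\cap U_t|=s]=s/t$. This is true for a truly random assignment but is \emph{not} implied by $k$-wise independence with $k=\Theta(\log(1/\ampliparam))$: the conditioning event depends on all $t$ coordinates, and the static Dealer chooses both $U_t$ and which of its elements is drawn next, so it can exploit any asymmetry in the conditional law. This is exactly the difficulty the paper flags (``we do not know whether this is true for pairwise independent functions'') and resolves only by taking $k\ge 2\log n$ and invoking Indyk's lemma to show the conditional probabilities $\Pr[A_{-i}\mid x_i\in S]$ agree up to $2^{-k+2}$, giving correctness probability $1/(t+o(1))$ given reasonableness. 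Your writeup identifies the concentration of $|S\cap U_t|$ as ``the main obstacle,'' but that is the easy part; the approximate conditional uniformity is the real obstacle and is left unaddressed. To repair the argument you would need either genuinely random assignment (too much randomness) or an Indyk-style argument with independence on the order of $\log n$, which your randomness budget does in fact permit but which must be proved.
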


Our deterministic and randomized Guessers are inspired by Garg and Schneider~\cite{GargS18} and Feige's~\cite{Feige19} algorithms for the first player in the Mirror Game in that they follow subsets of cards and track which member appeared.
However, it turns out that there are fundamental differences between strategies for Mirror Game and card guessing against an adaptive dealer.
We elaborate on the relation to the Mirror Game and to Feige's construction in \Cref{sec_discussion:subs_discussion_card_guessing_mirror_game}.

In \Cref{sec_static_dealer:upper_bound} we show that these Guessers are the best possible against the random-shuffle Dealer for $m \le \log^2 n$.
I.e., that there exists no Guessing technique that uses less memory and performs similarly.
\begin{theorem}
	\label{sec_introduction:thm_optimal_guessers}
	Any Guesser with $m$ memory bits can get \emph{at most} $O(\min\{\ln n, \sqrt{m}\})$ correct guesses in expectation when playing against the random-shuffle Dealer.
\end{theorem}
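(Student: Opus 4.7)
The $O(\ln n)$ bound is the easy half: at turn $t$, conditional on the history of drawn cards, $c_t$ is uniform over the $n-t+1$ unseen cards, and since the guess $g(M_t)$ is a function of the memory state (hence of the history), it matches $c_t$ with conditional probability at most $1/(n-t+1)$. Summing gives $\fE[S]\le H_n\le \ln n+1$.

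The interesting regime is $\guessermemorysymbol\ll \log^2 n$, where one must show $\fE[S]=O(\sqrt{\guessermemorysymbol})$. My plan is a compression argument in the spirit of the paper's earlier constructions, but run in reverse: if a Guesser $\cG$ with $\guessermemorysymbol$ memory bits achieved $\fE[S]\ge C\sqrt{\guessermemorysymbol}$ for a sufficiently large constant $C$, I would exhibit an injective encoding of a constant fraction of $\mathcal{S}_n$ whose length contradicts the entropy bound $\log(n!)$. By Markov, pass to a set $P\subseteq\mathcal{S}_n$ with $|P|\ge n!/2$ on which the realized score $s(\pi)$ is at least $C\sqrt{\guessermemorysymbol}/2$; then any injective encoding of $P$ must have average length at least $\log(n!)-1$.

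The baseline encoding, given $\pi\in P$, simulates $\cG$ on $\pi$ and records (i) the set $T\subseteq[n]$ of correct turns in $\log\binom{n}{|T|}$ bits, and (ii) for each incorrect turn $t$, the index of $c_t$ inside the currently unseen set in $\lceil\log(n-t+1)\rceil$ bits; no bits are spent on correct turns, since the decoder runs $\cG$ in lockstep and recomputes $g(M_t)=c_t$ on its own. This baseline alone only recovers the harmonic bound, since it never exploits the state-space size $2^\guessermemorysymbol$.

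The crux is to bring the memory bound into the encoding. I would enlarge the description with snapshots $M_{t_j}$ at checkpoints $t_j$ along a geometric schedule of length $O(\log n)$, at cost $\guessermemorysymbol$ bits each. Within a block $[t_j,t_{j+1})$ the Guesser runs as a $\guessermemorysymbol$-bit automaton from a fixed start state, so the multiset of guesses it produces lies in an image of size at most $2^\guessermemorysymbol$; a block-wise Cauchy--Schwarz bound should translate this into a correct-guess count of order $\sqrt{\guessermemorysymbol}$ per geometric phase of the deck. The main obstacle I expect is exactly this accounting: balancing the $O(\guessermemorysymbol\log n)$ bits spent on checkpoints against the savings of $\log(n-t+1)$ per correct guess so that $\fE[S]=\omega(\sqrt{\guessermemorysymbol})$ forces the total encoding length strictly below $\log(n!)-1$, producing the desired contradiction.
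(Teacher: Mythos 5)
Your easy half ($O(\ln n)$ via the conditional uniformity of the next card given the history) is fine, and your instinct to use a compression/encoding argument for the $O(\sqrt{m})$ half matches the paper. But the hard half has a genuine gap, and the specific scheme you sketch cannot be made to close it. The paper's key idea is to pay the $m$ bits for a memory snapshot \emph{exactly once}, at turn $n-k$ with $k=n^{1-\beta}$, and to encode only the \emph{ordered suffix} $B$ of the last $k$ cards: a correct guess at one of those turns replaces a card description worth about $\log(n-k)$ bits by a location inside a window of only $k$ turns, a net saving of roughly $\ln(n-k)-\ln k\approx\beta\ln n$ bits per correct guess. Balancing $\alpha\cdot\beta\ln n\le m$ against the trivial bound $H_n-H_k\approx\beta\ln n$ on the first $n-k$ turns, and optimizing $\beta=\sqrt{m}/\ln n$, gives $O(\sqrt{m})$. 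The asymmetry between ``where the correct guess happened'' (cheap, because the window is short) and ``which card it was'' (expensive, because the deck is still large) is the entire source of the gain.

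Your geometric checkpoint schedule destroys this asymmetry twice over. First, you spend $\Theta(m\log n)$ bits on snapshots rather than $m$. Second, and more fatally, within a geometric block the number of remaining cards is within a constant factor of the block length, so describing the locations of $\alpha_j$ correct guesses costs about $\alpha_j\log(L_j/\alpha_j)$ bits while the cards they replace cost about $\alpha_j\log(n-t+1)$ bits with $n-t+1=\Theta(L_j)$; the net saving per block is only $O(\alpha_j\log\alpha_j)$, so the entropy bound forces merely $\alpha_j\log\alpha_j\lesssim m$ per block, i.e.\ $\Omega(m\log n/\log m)$ correct guesses in total --- weaker than the trivial harmonic bound. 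Finally, even the conclusion you hope for is quantitatively insufficient: $O(\sqrt{m})$ correct guesses \emph{per phase} over $\Theta(\log n)$ phases sums to $O(\sqrt{m}\log n)\ge\ln n$, which proves nothing beyond the easy half. The ``block-wise Cauchy--Schwarz'' step is also unsubstantiated: the relevant quantity is not the image size of the automaton's guessing map but the number of bits needed to resynchronize the decoder, and no averaging inequality is identified that converts $2^{m}$ states into a $\sqrt{m}$ count. To repair the argument you need the single-window decomposition with a window of subpolynomial relative length, not a uniform geometric partition.
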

We show this by presenting an encoding scheme that utilizes correct guesses to encode an ordered set in an efficient manner.
Using a compression argument we show that $\log^2 n$ bits of memory are actually essential for getting $O( \ln n)$ correct guesses.

\paragraph{Proof by Compression: }
This is a quite general method (see below) to prove the success of an algorithm by showing that some events allow us to compress the random bits used.
Say a randomized algorithm can tolerate some number of bad events.
For some specific domain (e.g.\ ordered sets of some size), we introduce an encoding scheme that utilizes the occurrence of certain bad events in order to achieve a shorter description of elements in the domain.
We then consider the amount of bad events required to achieve a description that is shorter than the entropy of a random element in the domain.
We know that for any compression method, the probability of chopping off (saving) $w$ bits from a random string is at most $2^{-w}$.
We get that the probability for too many bad events is negligible.

The method has been applied in a variety of fields, for instance, to prove the success of the algorithm in the ``Algorithmic Lov\'{a}sz Local Lemma"~\cite{MT2009},
the success probability of Cuckoo Hashing~\cite{P12},
lower bounds on construction of cryptographic primitives~\cite{GT2000} and space-time trade-off for quantum algorithms~\cite{CLQ20}.
\newline
\newline
To prove \Cref{sec_introduction:thm_optimal_guessers}, we introduce an encoding scheme for \emph{ordered sets} that utilizes correct guesses to achieve a shorter description of the ordered set.
The idea is to simulate a game between a Guesser and a static Dealer, where the bottom of the deck is arranged according to the ordered set we wish to encode.
If sufficiently many correct guesses occurred, then the encoding function stores the necessary information required to reproduce the course of the game.

Namely: the memory state of the Guesser, the set of turns at which the Guesser predicted correctly and the cards that the Dealer draws in the other turns in their respective order.
By fixing the Guesser's randomness, every ordered set yields a single description by the encode function, and every description results in a single course of the game during decode.
This allows the decode function to reconstruct the ordered set.
A visual representation of the stored information is provided in \Cref{figure:low_memory_upper_bound_encoding}.

We get that we pay once for the memory state of the Guesser, and from that point on any correct guess shrinks the description of the ordered set.
We then show that making too many correct guesses implies compression, i.e., a description of expected length shorter than the entropy of a random element.
By doing so, we bound the expected number of correct guesses that any Guesser can make.
The result holds regardless of how much randomness and what cryptographic tools and assumptions are used by the Guesser.

\begin{figure}[h]
	\begin{mdframed}	
		\resizebox{\textwidth}{!}{%
			\tikzmath{
	\ampparam = 1.4;
	\p1 = 2.; 
	\p2 = 2.9; 
	\p3 = 4.2;
	\w1 = 0.1; 
	\correctsx = 3.5;
	\correctsy = 1;
	% Computations are also possible
	\memory = 1.5;
} 

\begin{tikzpicture}[
		x=3cm,
		circ/.style={shape=circle, inner sep=3pt, draw, fill=white,node contents=}
		]
	\draw[thick,->] (0,0) node[anchor=north] {first turn} -- (5,0) node[anchor=north] {last turn};
	
	\draw[thick, color=BlueViolet,->] (\memory,1) node[anchor=south] {Guesser's memory state} -- (\memory,0);
	
	\draw[line width=0.5mm, color=purple,->] (\memory,0) -- (5,0) node[midway,anchor=north] {Cards drawn during incorrect guesses};
	
	\draw node (p1) at (\p1, 0) [circ];
	\draw node (p2) at (\p2, 0) [circ];
	\draw node (p3) at (\p3, 0) [circ];
	
	\draw[thick, color=OliveGreen,->] (\correctsx,\correctsy) node[anchor=south] {Correct guesses} -- (p1);
	\draw[thick, color=OliveGreen,->] (\correctsx,\correctsy) -- (p2);
	\draw[thick, color=OliveGreen,->] (\correctsx,\correctsy) -- (p3);

	\node[draw] at (4.4,-1.) {Not drawn to scale};
	
\end{tikzpicture}
		}
		\caption{Correct guesses encoding. Colored - information stored by the encoding scheme.}
		\label{figure:low_memory_upper_bound_encoding}
	\end{mdframed}
\end{figure}

In \Cref{section_upper_bound}, we turn our attention to the adversarial adaptive Dealer.
We show that if the Dealer is allowed to be adaptive, then almost any advantage gained from sophisticated memory usage vanishes.
Furthermore, the adversarial dealer needs to know very little about the guesser.
We show two results:
\begin{theorem}
	For every $m$ there exists an efficient adaptive Dealer against which any Guesser with~$m$ bits of memory can score at most $\result$ correct guesses in expectation.
\end{theorem}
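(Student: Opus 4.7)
The plan is to realize the adversary as a \mttbdealer{} parameterized by a random permutation $\permutationdesc$ and a random ``reserved'' subset $\encsetdesc$ of size $\firstepochbegin = \firstepochbeginterm$, chosen uniformly from $\encsetdomain$. During an initial setup phase the Dealer plays the cards of $[n]\setminus\encset$ in the $\permutation$-order, so that when the adversarial phase begins the surviving deck is exactly $\encset$. These remaining $\firstepochbegin$ turns are partitioned into geometrically shrinking epochs of lengths $\ell_i$. At the beginning of epoch $i$, the Dealer reads the Guesser's current memory state and computes a \emph{reasonable set} --- at most $\reasonablesbound$ surviving cards, namely those the Guesser's state assigns the highest predicted frequency over the next $\ell_i$ turns --- and throughout the epoch plays the \minorderpi{} surviving card that lies \emph{outside} the reasonable set. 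A correct guess can occur only when the Dealer is forced to draw from the reasonable pool, i.e., when its complement is exhausted.

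The proof of the bound is by compression. Fix $\permutation$ and any $\guessermemorysymbol$-bit Guesser. For each $\encsetdesc$, simulate the game with $\encset$ as the reserved subset and record, for every epoch $i$, three pieces of data: (i) the Guesser's memory state at the start of the epoch ($\guessermemorysymbol$ bits); (ii) the $\ell_i - \currentreasonableRV$ cards played during \emph{incorrect} turns of the epoch, encoded as a subset of the cards alive at the start of the epoch and contributing $\log\encexplicit$ bits; and (iii) the $\currentreasonableRV$ correct guesses as a subset of the reasonable pool of size $k'_i$, contributing $\log\encreasonable$ bits. Given $\permutation$ and these fields, the decoder re-executes the Dealer --- whose behavior is a deterministic function of the Guesser's state and $\permutation$ --- replays the entire transcript, and recovers $\encset$. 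Hence the encoding is an injection into bitstrings of total length $\sum_i \log(\encmttb)$, and the incompressibility of the uniform distribution on $\encsetdomain$ yields a bound on $\sum_i \currentreasonableRV$; an averaging argument then fixes a single pair $(\encset, \permutation)$ achieving it.

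The main obstacle is calibrating the epoch schedule and the reasonable-set size to squeeze the bound down to $\result$ rather than a weaker expression. The choice $\reasonablesbound$ must be large enough that the Guesser's total correct-guess probability on \emph{unreasonable} cards across the epoch is negligible (so $\currentreasonableRV$ concentrates near its mean), yet small enough that $\log\encreasonable$ contributes only $\ln\log n + O(1)$ bits per correct guess. The $\guessermemorysymbol$-bit state overhead paid once per epoch is amortized by letting $\ell_i$ shrink by a constant factor, capping the number of epochs at $O(\log n)$ and turning the state overhead into a geometric series dominated by the first epoch, which contributes the $\ln\guessermemorysymbol + 2\ln\log n$ terms. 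The $\log\encexplicit$ factors are absorbed against the ``free'' entropy of $\encsetdomain$, and what remains is the claimed $\result$ expression.
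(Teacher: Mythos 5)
There are genuine gaps, and they concern the core of the argument rather than details. First, your Dealer ``reads the Guesser's current memory state'' at the start of each epoch and precomputes a reasonable set of high-predicted-frequency cards. The paper's model explicitly keeps the Guesser's $m$-bit state secret from the Dealer (the Dealer sees only the transcript), and the impossibility result is supposed to hold against Guessers using unlimited randomness --- for such a Guesser, ``the cards the state assigns the highest predicted frequency over the next $\ell_i$ turns'' is circular, since the Guesser's future guesses depend on which cards the Dealer draws, which in your scheme depends on that very prediction. The paper's Move-to-the-Back Dealer sidesteps all of this: it reacts only to \emph{actual} guesses, moving a guessed, still-available card to the back for the remainder of the epoch. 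Second, your claim that ``a correct guess can occur only when the Dealer is forced to draw from the reasonable pool'' is false: nothing stops the Guesser from guessing a card outside the pool and hitting the Dealer's min-order choice. The paper's route is different and this is the step you are missing: it bounds the expected number of \emph{reasonable} guesses per epoch by a compression argument (each reasonable guess reveals one element of the hidden set $\encset$ at a cost of only $\log\ell + 1$ bits for its turn index and correctness bit, saving roughly $\log(n/k_1)$ bits), and then converts this into a bound on \emph{correct} guesses via the observation that each reasonable guess is correct with probability at most $1/(k_i-\ell-\limit)$ because the Dealer draws uniformly from the large set of available, not-moved-to-back cards.

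Third, the bookkeeping does not produce $\result$. In the paper, $\ln\guessermemorysymbol$ comes from the \emph{final} free-play phase on the last $2\guessermemorysymbol\log n$ cards (where the Dealer gives up and shuffles), the $2\ln\log n$ comes from the free-play prefix down to $k_1=\firstepochbeginterm$ plus that suffix, and the entire epoch sequence contributes at most $1$ expected correct guess in total; it is not the first epoch that contributes the main terms. Relatedly, your global encoding pays $\guessermemorysymbol$ bits of state \emph{per epoch} inside a single injection; with many epochs this overhead swamps the savings and the scheme does not compress. The paper instead runs a separate per-epoch encoding (paying the $m$-bit state once each time) to bound reasonable guesses in that epoch, and only then sums the resulting per-epoch correct-guess bounds. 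Your decoder also cannot ``replay the entire transcript'' of the setup phase, since the Dealer there plays $[n]\setminus\encset$ in $\permutation$-order and $\encset$ is exactly what is being recovered; the paper resolves this with a careful induction showing the explicitly stored part $\encodedexplicit$ suffices to reproduce every pre-epoch draw.
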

\begin{theorem}
	There exists a universal efficient adaptive Dealer against which any Guesser with~$m$ bits of memory can score at most $\resultUniversal$ correct guesses in expectation.
\end{theorem}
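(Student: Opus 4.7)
The plan is to build the universal Dealer as an epoch-based variant of the Move-to-the-Back strategy used in the non-universal theorem, where the parameter in each epoch plays the role of a ``guess'' for the Guesser's memory size. I would partition the $n$ turns into $\epochsnum = O(\log n)$ epochs, the $i$-th of length $\currentulength$, chosen so that the derived thresholds $\currentulimit = \currentulength / \log^2 n$ form a geometric progression spanning the plausible range $[1, n / \log^2 n]$ of memory sizes. Within epoch $i$, the Dealer maintains a ``reasonable'' tail subset of size roughly $\currentulength$ and runs Move-to-the-Back with parameter $\currentulimit$ in place of $\guessermemorysymbol$, so that the strategy depends on $n$ alone and is genuinely universal.

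For a fixed Guesser with $\guessermemorysymbol$ bits of memory, the key step is to adapt the single-epoch compression argument from the non-universal proof and assemble it into a global argument across all epochs. Inside one epoch, the encoding would record the Guesser's memory state at the relevant boundary (cost $\log \encstate = \guessermemorysymbol$), the set of turn indices on which the Guesser was correct (cost $\log \encexplicit$), and the identities of the reasonable cards actually drawn (cost $\log \encreasonable$). Comparing this description against the entropy of an ordered suffix of the deck yields a per-epoch bound on the number $\alpha_i$ of correct guesses in epoch $i$, with the tightest bound obtained when $\currentulimit$ is within a constant factor of $\guessermemorysymbol$.

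Summing across epochs, I would show that a single ``critical'' epoch whose $\currentulimit$ matches $\guessermemorysymbol$ within a constant factor contributes the leading $(1+o(1))\ln \guessermemorysymbol$ correct guesses, while the remaining epochs together contribute only $O(\log \log n)$. Epochs with $\currentulimit$ much smaller than $\guessermemorysymbol$ are bounded trivially by the size $O(\currentulimit)$ of the reasonable tail and sum to $O(1)$ by the geometric schedule; epochs with $\currentulimit$ much larger than $\guessermemorysymbol$ cannot be fully exploited because only $\guessermemorysymbol$ bits of state are available to track the tail, and the encoding cost per correct guess swallows the compression benefit. Combined with an $O(\log \epochsnum) = O(\log \log n)$ overhead for naming the critical epoch, this yields $\resultUniversal$.

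The main obstacle I anticipate is avoiding a blowup from repeatedly snapshotting the Guesser's memory: the $\guessermemorysymbol$ bits of state persist across epochs, so a naive epoch-by-epoch application of the compression argument would pay $\guessermemorysymbol$ bits per epoch and give the vacuous bound $\Theta(\guessermemorysymbol \log n)$. I would address this by folding all $\epochsnum$ per-epoch encodings into a single global compression scheme that snapshots the Guesser's memory only at the boundary of the critical epoch and pays an additional $O(\log \log n)$ bits to identify that epoch; this ensures that the $\ln \guessermemorysymbol$ term appears exactly once, with all other epochs contributing only lower-order $\ln \log n$ corrections. The argument then proceeds along the lines of the non-universal theorem inside the critical epoch, while the dealer's strategy outside that epoch is used only to force the deck into a sufficiently random-looking state to which the per-epoch compression can be applied.
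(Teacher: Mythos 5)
Your dealer construction (shrinking epochs with geometrically decreasing move-to-the-back limits $\currentulimit \approx \currentulength/\log^2 n$) matches the paper's universal Dealer, but your accounting of where the $(1+o(1))\ln\guessermemorysymbol$ term comes from is wrong, and the proof as outlined would not go through. You attribute the leading term to a compression argument inside a single ``critical'' epoch with $\currentulimit \approx \guessermemorysymbol$. In that epoch, however, the compression bound caps the number of \emph{reasonable} guesses at roughly $\guessermemorysymbol$, and each reasonable guess is correct with probability only about $1/(\currentepochbegin - \currentulength - \currentulimit) \approx \log n/\currentulength \approx 1/(\guessermemorysymbol\log n)$; so the critical epoch contributes $O(1/\log n)$ correct guesses, not $\ln\guessermemorysymbol$. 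Symmetrically, your claim that the epochs with $\currentulimit \ll \guessermemorysymbol$ (the late, short epochs) ``are bounded trivially by $O(\currentulimit)$ and sum to $O(1)$'' is false: there is no $O(\currentulimit)$ bound on correct guesses (a correct guess is a card the Dealer \emph{drew}, not one moved to the back), and in those epochs the Guesser has enough memory to track essentially all remaining cards, so she provably collects about $\ln(\currentepochbegin/(\currentepochbegin-\currentulength)) \approx \ln\log n$ correct guesses per epoch. Since no dealer can stop a Guesser from earning $\approx \ln \guessermemorysymbol$ correct guesses in the last $\Theta(\guessermemorysymbol)$ turns (e.g.\ by the ``remember last cards'' technique), any correct proof must concede the leading term exactly there. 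The paper does this: it splits the epochs into a \emph{low-memory era} ($\guessermemorysymbol < \currentulength/\log^2 n$), where the per-epoch compression argument yields at most $1/(\log n - 2)$ correct guesses per epoch and $O(1)$ in total, and a \emph{high-memory era}, where it uses only the trivial harmonic bound $\ln(\log n + 3)$ per epoch over the $\approx \ln\guessermemorysymbol/\ln\log n$ remaining epochs, which is the source of $(1+o(1))\ln\guessermemorysymbol$.

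Your proposed fix for ``repeated memory snapshots'' also addresses a non-problem. Paying $\guessermemorysymbol$ bits in every epoch's encoding only inflates the per-epoch bound on \emph{reasonable} guesses to $\max\{8e\firstepochbegin\currentulength/n, \guessermemorysymbol\}$; after dividing by the number of drawable cards, this costs far less than one correct guess per low-memory epoch, so independent per-epoch encodings of the same random set suffice and no global scheme with a single snapshot is needed. Moreover, such a global scheme could not rescue your accounting, because the difficulty in the high-memory era is not an encoding overhead but the fact that compression gives no nontrivial bound there at all.
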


The two Dealers share the same general strategy that is parameterized differently.
The Dealer's strategy is simply to refrain from drawing recently guessed cards for some turns.
This result stands even if the Guesser is allowed to use unlimited randomness and cryptographic tools, while the Dealer is as simple as possible.
The computational efficiency and simplicity of the Dealer, as well as the fact that the result stands even against an all-powerful Guesser with randomness and cryptography, emphasize that it is the \emph{adaptivity that plays the key role}, rather than anything else.
Recall that by~\Cref{thm_random_subset_guesser}, a mild amount of randomness suffices to achieve near optimal results against any static Dealer.
\begin{figure}[h]
	\begin{mdframed}	
		\resizebox{\textwidth}{!}{%
			\tikzmath{
	% Computations are also possible
	\figkpoint = 5.;
} 

\begin{tikzpicture}[x=3cm]
	\draw [dashed,color=blue]
	(\figkpoint,0) node[circle,fill,inner sep=0.5mm,label=above:{reshuffle},alias=k_d1] (kd1) {}
	--
	(\figkpoint-1,0) node[circle,fill,inner sep=0.5mm,label=above:{reshuffle},alias=k_d] (kd) {};
	
	\draw [dashed,color=blue]
	(\figkpoint-2,0) node[circle,fill,inner sep=0.5mm,label=above:{reshuffle},alias=k_2] (k2) {}
	--
	(\figkpoint-3,0) node[circle,fill,inner sep=0.5mm,label=above:{reshuffle},alias=k_1] (k1) {};
	
	\draw [dotted,color=blue] (k2) -- (kd) {};
	
	\draw [dotted]
	(0,0) node[circle,fill,inner sep=0.5mm,label=below:{first turn},alias=n] (n) {}
	--
	(k1) {};

	\draw [dotted]
	(kd1) {}
	--
	(6,0) node[circle,fill,inner sep=0.5mm,label=below:{last turn},alias=0] (0) {};
	
	\begin{scope}[thick,decoration={calligraphic brace, amplitude=5pt}]
		%\draw[decorate,color=blue]  (current bounding box.north) coordinate (aux) (k1|-aux) -- node[above=1ex,align=center]{many epochs} (kd1|-aux);
		
		%				\draw[decorate,color=blue]   (current bounding box.north) coordinate (aux-1) (lkd|-aux) -- node[above=1ex]{$\elllastepochs$} (lkd1|-aux);
		\draw[decorate]  (current bounding box.north) coordinate (aux) (n|-aux) -- node[above=1ex]{Draw at random} (k1|-aux);
		\draw[decorate]  (current bounding box.north) coordinate (aux-1) (kd1|-aux) -- node[above=1ex]{Draw at random} (0|-aux);
		
	\end{scope}
	
	\begin{scope}[thick,decoration={calligraphic brace, amplitude=5pt,mirror}]
		
		\draw[decorate,color=blue]   (current bounding box.south) coordinate (aux) (k1|-aux) -- node[below=1ex]{Move to the Back} (k2|-aux);
		\draw[decorate,color=blue]   (current bounding box.south) coordinate (aux-1) (kd|-aux) -- node[below=1ex]{Move to the Back} (kd1|-aux);
		%				\draw[decorate,color=blue]  (current bounding box.south) coordinate (aux-1) (lk1|-aux) -- node[below=1ex,align=center]{\epochsklud for \\ $k=2 \sqrt{nm}, {\ell =\elllastepochs}, \limit = \max\{2m,2 \log^2 n\}, d = \dlastepochs$} (lkd1|-aux);
	\end{scope} 
	
\end{tikzpicture}
		}
		\caption{Adaptive Dealer general scheme.}
		\label{figure:mttb_general_scheme}
	\end{mdframed}
\end{figure}

In more detail, our Dealer shuffles the deck at the beginning of the game and draws cards one by one.
At some point, the Dealer begins to refrain from selecting cards guessed by the Guesser.
When the Guesser guesses a card that resides in the deck, the Dealer takes that card and moves it to the back of the deck.
Moving cards to the back reduces the number of cards that the Dealer may select, and as a result, makes the Dealer predictable.
Therefore, at some point, the Dealer reshuffles the deck, making all cards available for drawing again, and repeats this behavior.
Towards the end of the game, the Dealer shuffles the deck one last time and draws cards at random.
The period of turns between reshuffles is called an epoch, and the strategy is called MtBE-strategy (which stands for Move to the Back Epoch strategy).

Call a guess \emph{reasonable} if it is a card that can be drawn, that is, a possibly correct guess.
To show that moving cards to the back is an effective strategy, we show that no memory bounded Guesser is expected to make too many \emph{reasonable} guesses during each epoch.
The idea is that if the Guesser can produce many reasonable guesses, then she knows something about the set of cards in the deck and can be used to describe it efficiently.
We show an encoding scheme for \emph{sets}, and using compression argument, we bound the expected amount of reasonable guesses in every epoch.

Since our Dealer is adaptive, it is difficult to predict the order in which cards will be drawn, thus we do not use the same encoding scheme for \emph{ordered} sets.
Instead, we encode \emph{unordered} sets using a similar encoding scheme.

As in \Cref{sec_introduction:thm_optimal_guessers}, the encoding scheme works by simulating a game between the Guesser and the Dealer, where the Dealer keeps the set of cards we wish to encode for the end.
If sufficiently many reasonable guesses occurred during an epoch, we use that epoch to achieve a shorter description of the set.
In detail, we store the Guesser's memory state, the cards drawn by the Dealer while the Guesser guessed incorrectly, the set of turns where the Guesser guessed reasonably, and which of these guesses were actually correct.
These objects allow us to simulate the same game during decoding and thus to recover the set.
Visualization of the encoded information provided in \Cref{figure:adaptive_dealer_upper_bound_encoding}.

\begin{figure}[h]
	\begin{mdframed}	
		\resizebox{\textwidth}{!}{%
			\tikzmath{
	\ampparam = 1.4;
	\shift = -0.6;
	\epochbeginx = 2.15+\shift;
	\epochendx = 4.2+\shift;
	\epochy = 0.2;
	\p1 = 2.5+\shift; 
	\p2 = 2.7+\shift; 
	\p3 = 2.9+\shift;
	\p4 = 3.3+\shift;
	\p5 = 3.45+\shift;
	\w1 = 0.1+\shift; 
	\reasonablesx = 2.55+\shift;
	\correctsx = 3.85+\shift;
	\titles = 1;
	% Computations are also possible
	\memory = 1.;
} 

\begin{tikzpicture}[x=3cm,
	circ/.style={shape=circle, inner sep=1pt, draw, node contents=},
	circorrect/.style={shape=circle, draw, fill=white,inner sep=3pt,minimum size=3pt,node contents=},
	circb/.style={shape=circle, inner sep=3pt, draw, fill=white,node contents=},
	reasonabline/.style={thick, color=OliveGreen,->},
	correctline/.style={thick, color=MidnightBlue,->}
	]
	\draw[thick,->] (0,0) node[anchor=south] {first turn} -- (5,0) node[anchor=south] {last turn};
	
	\draw[thick, color=BlueViolet,->] (\memory,-\titles) node[anchor=north] {Guesser's memory state} -- (\memory,0);
	
	\draw (\epochbeginx, \epochy) node[anchor=south] {Epoch begin} -- (\epochbeginx, 0);
	\draw (\epochendx, \epochy) node[anchor=south] {Epoch end} -- (\epochendx, 0);
	
	\draw[line width=0.5mm, color=purple,->] (\memory,0) -- (5,0) node[midway,anchor=north] {Cards drawn minus reasonable guess during epoch};
	
	\draw node (p1) at (\p1, 0) [circ];
	\draw node (p2) at (\p2, 0) [circ];
	\draw node (p3) at (\p3, 0) [circorrect];
	\draw node (p4) at (\p4, 0) [circ];
	\draw node (p5) at (\p5, 0) [circorrect];
	
	\draw[inner sep=0pt] node (reasonables) at (\reasonablesx, \titles) {};
	
	\draw[reasonabline] (reasonables) node[anchor=south] {Reasonable guesses} -- (p1);
	\draw[reasonabline] (reasonables) -- (p2);
	\draw[reasonabline] (reasonables) -- (p3);
	\draw[reasonabline] (reasonables) -- (p4);
	\draw[reasonabline] (reasonables) -- (p5);
	
	\draw[inner sep=0pt] node (corrects) at (\correctsx, \titles) {};
	\draw[correctline] (corrects) node[anchor=south] {Correct guesses} -- (p3);
	\draw[correctline] (corrects) node[anchor=south] {Correct guesses} -- (p5);
	
	\node[draw] at (4.4,-1) {Not drawn to scale};

%	\draw[line width=0.75mm, color=OliveGreen,-] (\p1-\w1,0) -- (\p1+\w1,0);
%	\draw[line width=0.75mm, color=OliveGreen,-] (\p2-\w1,0) -- (\p2+\w1,0);
%	\draw[line width=0.75mm, color=OliveGreen,-] (\p3-\w1,0) -- (\p3+\w1,0);
	
)\end{tikzpicture}
		}
		\caption{Reasonable guesses encoding. Colored - information stored by the encoding scheme.}
		\label{figure:adaptive_dealer_upper_bound_encoding}
	\end{mdframed}
\end{figure}

Since reasonable guesses allow us to achieve shorter descriptions, we get a bound on the expected number of reasonable guesses during each epoch.
By doing so, we bound the expected number of \emph{correct} guesses in each epoch, since the cards are drawn from a large enough set.
At this point, the analysis's calculations of the two Dealers vary and we analyze them differently.

Lastly, since a Guesser with $\guessermemorysymbol$ bits of memory can achieve $\ln \guessermemorysymbol$ correct guesses in expectation, these results are almost tight.

\subsection{Related Work}
\label{subsec_related_work}
\paragraph{Card Guessing.}
An early work concerning card guessing dates back to 1924~\Cite{Fisher1924}, when Ronald Fisher studied the game in the context of analyzing claims of \emph{psychic ability}.
Fisher suggested and analyzed a method to measure and determine a Guesser's claim to have supernatural abilities (namely clairvoyance and telepathy) by assigning scores to the Guesser's guesses and see how much they deviates from the expectation.
In 1981, Diaconis and Graham~\cite{DG1981} studied the case where there are $c_i$ copies of the $i$th card, and determined the optimal and worst strategies for some cases.
They considered the cases of no feedback at all, partial feedback (was the guess correct or not) for $c_i = 1$, and full feedback (which card was drawn).
Recently, Diaconis, Graham, He and Spiro~\cite{DGHS2020} asymptotically determined the expected score that an optimal strategy achieves for the case of partial feedback where $c_i \ge 2$.

A more useful yet equally dubious purpose is ``Card Counting'' in gambling (see Wikipedia entry~\cite{wiki:Card_counting}).
In 1962 Edward Thorp, a Professor of Mathematics, published a bestseller book~\cite{Thorp1962} about winning strategies in the game of Black-Jack.
The book covered analyses of the game from the viewpoint of the player and the Casino, as well ``low-memory'' strategies that increase the player's expected benefit.
The idea behind card counting in this context is that by knowing the distribution of the next card(s), one can evaluate their own hand better and thus bet accordingly.
Card counting has been applied to other card games, such as Bridge and Texas hold ’em Poker.
In these games, evaluating the probabilities for the upcoming cards is considered essential.
In the context of this paper, card counting is an applied ``low-memory'' card guessing technique that utilizes the structure of specific games.

\paragraph{Mirror Games.}
Garg and Schneider~\cite{GargS18} introduced the Mirror Game, a game closely related to card guessing.
In Mirror Game, there are two players, Alice and Bob, taking turns in saying numbers from $[n]$.
In every turn, a player says a number that was not said before by either player.
If a player says a number that was already declared, that player loses, and the other player is the winner.
If there are no more numbers to say, then it is a draw.
Alice is first.
Bob always has a simple deterministic winning strategy that requires only $\log n$ memory bits.
When Alice says $x$, bob says $n+1 -x$, and hence the name's origin.

Garg and Schneider showed that every deterministic winning (drawing) strategy for Alice requires~$\Omega(n)$ bits of memory.
They have also presented a randomized strategy for Alice that with high probability end with at least a draw for Alice that requires $O(\sqrt{n})$ bits of storage.
Their strategy relied on access to a secret (from Bob) random matching.
Using the same settings, Feige~\cite{Feige19} showed that $O(\log^3 n)$ bits of memory suffice.
Our Guessers for the static case (\Cref{sec_static_dealer:sec_following_subset_guesser,sec_static_dealer:sec_random_subsets_guesser}) are inspired by Feige's construction.

Additional discussion on the relation and differences between card guessing and Mirror Game is provided in \Cref{sec_discussion:subs_discussion_card_guessing_mirror_game}.

\paragraph{Adversarial streams and sampling.}
A streaming algorithm is called \emph{adversarially robust} if it's performance are guaranteed to hold even when the elements in the stream are chosen adaptively in an adversarial manner.
The question concerning the gap in memory consumption between the static case and the adversarially adaptive case has been the subject of recent line of works.

On the positive side, Ben-Eliezer, Jayaram, Woodruff and Yogev~\cite{BJWY2020} showed general transformations for a family of tasks, for turning a streaming algorithm to be adversarially robust, with some overhead.
Woodruff and Zhou~\cite{WZ2020} suggested another set of transformation for the same family of problems.
A different approach was taken by Hassidim, Kaplan, Mansour, Matias and Stemmer~\cite{HKMMS2020} who showed that it may be possible to get an even smaller overhead in some cases by using differential privacy as a protection against an adaptive adversary.

On the negative side, Hard and Woodruff~\cite{HW2013} showed that linear sketches are inherently not adversarially robust.
They showed it for the task of approximating $L^P$ norms but their technique stands for other tasks as well.
In a recent result, Kaplan, Mansour, Nissim and Stemmer~\cite{KMNS2021} showed a problem that requires polylogarithmic amount of memory in the static case but \emph{any} adversarially robust algorithm for it requires exponentially larger memory.
Our work join that of~\cite{KMNS2021} by showing a simpler, even more natural, streaming problem that separates adversarial streams from oblivious streams.

In a similar vein, given a large enough sample from some population, then we know that the measure of any fixed sub-population is well-estimated by its frequency in the sample.
The size of the sample needed is the VC dimension of the set system of the sub-populations of interest.
Ben-Eliezer and Yogev~\cite{BY2020} showed that when sampling from a stream, if the sample is public and an adversary may choose the stream based on the samples so far, then the VC Dimension may not be enough.
Alon, Ben-Eliezer, Dagan, Moran, Naor and Yogev~\cite{ABDMNY2021} showed that the Littlestone dimension, which might be much larger than the VC dimension, captures the size of the sample needed in this case.

\paragraph{Online Computation and Competitive Analysis.}
Another area where the exact power of the adversary comes up is in competitive analysis of online algorithms (see Borodin and El-Yaniv~\cite{BorodinE98}).
Here there are various types of adversaries, distinguished by whether they are adaptive or static and whether they decide on the movements of the competing algorithm in an online manner or an offline one. The result is a hierarchy of oblivious, adaptive online and adaptive offline adversaries.
It turns out that in request-answer games (a very general form capturing issues like paging), an algorithm competitive against the adaptive offline adversary may be transformed into a deterministic one with similar competitive ratio~\cite{Ben-DavidBorodinKarpTardosWigderson1994}.
We do not see a similar phenomenon in our setting, where one should recall
that a deterministic algorithm is hopeless against a static adversary.

\paragraph{Distinguishing Permutations and Functions.}

A stream of $q$ random elements from the domain $[n]$ is given to a memory bounded algorithm that attempts to determine whether the stream was sampled with or without repetitions.
When the stream ends, the algorithm outputs its determination, and is measured by it's ability to judge better than guessing at random.
If $q = \Theta(\sqrt{n})$,  then by the birthday paradox, a repetition occurs with high probability.
The algorithm uses $O(q\log n)$ memory bits to recognize this repetition.

Motivated by the fact that the task of distinguishing between random permutations and random functions has significant cryptographic implications, Jaeger and Tessaro~\cite{JaegerTessaro2019} introduced the above problem and showed a conditional bound on the advantage of the algorithm.
In particular, they showed that under an unproved combinatorial conjecture the advantage of an algorithm with $m$ bits of memory is bounded by $\sqrt{q \cdot m / n}$.
Dinur~\cite{Dinur2020} showed an unconditional upper bound on the advantage of ${\log q \cdot q \cdot m / n}$.
This was followed by the work of Shahaf, Ordentlich and Segev~\cite{ShahafOrdentlichSegev2020} who achieved the unconditional upper bound on the advantage of $\sqrt{q \cdot m / n}$.

\section{Preliminaries}

Throughout this paper we use $[n]$ and $[1 - n]$ to denote the set of integers~$\{1, \dots ,n\}$.
We denote the collection of subsets of~$[n]$ of size exactly $k$ by ${[n] \choose k} = \{B \subseteq [n]\colon |B| = k\}$.
We denote the set of permutations on $n$ elements by $\permutationset$.
All logs are base 2 unless explicitly stated otherwise, $\ln$ is the natural logarithm (base $e$).
We denote the set of binary strings of length $\ell$ by $\zo^\ell$.
We denote the set of binary strings of any finite length by $\zo^\ast = \cup_{i \in \fN} \zo^i$.
\subsection{Information Theory}

\begin{definition}[Entropy]
	Given a discrete random variable $X$ that takes values from domain~$\cX$ with probability mass function $p(x) = \Pr[X = x]$.
	The Binary Entropy (abbreviated Entropy) of $X$, denoted $H(X)$ is
	\[
		H(X) = - \sum_{x \in \cX} p(x) \cdot \log p(x).
	\]
\end{definition}

\begin{fact}
	The entropy of a random variable $X$ drawn uniformly at random from domain~$\cX$ is~$\log |\cX|$, i.e. \[
	H(X) = \log |\cX|.\]
\end{fact}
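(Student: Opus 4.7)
The plan is to apply the definition of entropy directly; no deep tools are required since this is an immediate calculation from the definition. First I would observe that since $X$ is drawn uniformly from $\cX$, the probability mass function satisfies $p(x) = 1/|\cX|$ for every $x \in \cX$. Substituting into the definition of entropy gives
\[
    H(X) = -\sum_{x \in \cX} p(x) \log p(x) = -\sum_{x \in \cX} \frac{1}{|\cX|} \log \frac{1}{|\cX|}.
\]

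Next I would pull the constant term out of the sum and use the identity $\log(1/|\cX|) = -\log|\cX|$. The sum then telescopes: there are exactly $|\cX|$ summands, each equal to $-\frac{1}{|\cX|}(-\log|\cX|) = \frac{\log|\cX|}{|\cX|}$, and the total is $\log|\cX|$.

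There is no genuine obstacle here; the only thing to be mildly careful about is the standard convention that $0 \log 0 = 0$, which is irrelevant in the uniform case since $p(x) > 0$ for all $x \in \cX$ (as $\cX$ is the support). The fact follows in two lines.
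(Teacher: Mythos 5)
Your proof is correct and is the standard direct computation from the definition of entropy; the paper states this as a fact without proof, and your calculation is exactly the argument one would supply. (The word ``telescopes'' is a slight misnomer for summing $|\cX|$ identical terms, but the arithmetic is right.)
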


\begin{definition}[Prefix-free code]
	\label{def_prefix_free_code}
	A set of code-words $C \subseteq \zo^\ast$ is prefix-free if no code-word $c \in C$ is a prefix of another code-word $c' \in C$.
\end{definition}

\begin{proposition}[Theorem 5.3.1 in~\cite{CT06}]
	\label{thm_compression_larger_than_entropy}
	The expected length $L$ of any prefix-free binary code for a random variable $X$ is greater than or equal to the binary entropy $H(X)$.
\end{proposition}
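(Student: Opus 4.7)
The plan is to follow the textbook route via Kraft's inequality and the non-negativity of relative entropy (Gibbs' inequality). Write $\cX = \{x_1, x_2, \dots\}$, let $p_i = \Pr[X = x_i]$, and let $\ell_i$ denote the length of the codeword assigned to $x_i$, so that $L = \sum_i p_i \ell_i$.

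First I would establish Kraft's inequality: for any prefix-free code with codeword lengths $\{\ell_i\}$,
\[
K \defeq \sum_i 2^{-\ell_i} \le 1.
\]
The proof is the standard tree argument: associate to each codeword the set of infinite binary strings extending it; prefix-freeness means these sets are disjoint subsets of $\zo^{\fN}$, and the set associated with a length-$\ell_i$ codeword carries measure $2^{-\ell_i}$ under the uniform product measure, so the total measure is at most $1$. (Equivalently, embed the codewords as leaves in an infinite binary tree and observe that the subtrees rooted at distinct codewords are disjoint.)

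Next, define the sub-probability weights $q_i \defeq 2^{-\ell_i} / K$, which form a genuine probability distribution on $\cX$. Then
\[
L - H(X) = \sum_i p_i \ell_i + \sum_i p_i \log p_i = \sum_i p_i \log \frac{p_i}{2^{-\ell_i}} = \sum_i p_i \log \frac{p_i}{q_i} - \log K.
\]
The first term is the KL divergence $D(p \,\|\, q)$, which is non-negative by Gibbs' inequality (a one-line application of $\ln x \le x - 1$, or equivalently Jensen's inequality applied to $-\log$). The second term $-\log K$ is also non-negative since $K \le 1$ by Kraft. Adding the two gives $L \ge H(X)$, as desired.

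The only step with real content is Kraft's inequality; everything after it is a mechanical manipulation plus Gibbs. I would therefore spend the bulk of the write-up on the tree/measure argument for Kraft and dispatch the rest in a couple of lines, noting that equality holds precisely when $K = 1$ and $\ell_i = -\log p_i$ for every $i$, i.e., when the $p_i$ are dyadic and the code is a Shannon code matched to $p$.
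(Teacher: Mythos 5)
Your proof is correct and is exactly the standard argument (Kraft's inequality plus non-negativity of KL divergence) behind Theorem 5.3.1 of Cover--Thomas, which the paper simply cites without reproving. Nothing to add.
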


\begin{lemma}
	\label{lemma_compresion_by_d_bits}
	Given a random variable $X$ uniformly drawn from domain $\cX$.
	For every encoding function $\encodefunc$ for $X$, The probability that the encoding of $X$ is $d$ bits less than it's entropy is at most $2^{-d}$, i.e.\
	\[
		\Pr_{X \in \cX}\left[|\encodefunc(X)| = H(X) - d\right] \le 2^{-d}.
	\]
\end{lemma}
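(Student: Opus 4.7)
The plan is a direct counting argument that exploits the uniformity of $X$. Since $X$ is drawn uniformly from $\cX$, the preceding fact gives $H(X) = \log|\cX|$, hence $|\cX| = 2^{H(X)}$, and the probability of any subset $A \subseteq \cX$ equals $|A|/2^{H(X)}$. The punchline is elementary: there are only $2^\ell$ binary strings of length exactly $\ell$, so very few elements of $\cX$ can possibly have short encodings.

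First I would fix $\ell = H(X) - d$ and define $A_\ell = \{x \in \cX : |\encodefunc(x)| = \ell\}$. Next I would invoke the fact that any encoding function is injective (otherwise the output cannot be decoded, which is the minimal requirement for $\encodefunc$ to deserve the name ``encoding''). Injectivity means $|A_\ell|$ cannot exceed the number of distinct binary strings of length exactly $\ell$, which is $2^\ell = 2^{H(X)-d}$. Finally, dividing by $|\cX| = 2^{H(X)}$ yields
\[
\Pr_{X \in \cX}\bigl[\,|\encodefunc(X)| = H(X) - d\,\bigr] \;=\; \frac{|A_\ell|}{|\cX|} \;\le\; \frac{2^{H(X)-d}}{2^{H(X)}} \;=\; 2^{-d}.
\]

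There is no serious obstacle; the lemma is essentially a one-line pigeonhole observation once the uniform distribution has been expanded out. The only point that needs care is being explicit about what ``encoding function'' means: injectivity suffices for the stated equality version. For the variant actually used in the later compression arguments, namely $\Pr[\,|\encodefunc(X)| \le H(X) - d\,] \le 2^{-d}$, I would replace injectivity with prefix-freeness and apply Kraft's inequality to the ``short'' set $B = \{x : |\encodefunc(x)| \le H(X)-d\}$: each summand in $\sum_{x \in B} 2^{-|\encodefunc(x)|} \le 1$ is at least $2^{-(H(X)-d)}$, giving $|B| \le 2^{H(X)-d}$ and hence the same bound after dividing by $|\cX|$.
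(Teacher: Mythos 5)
Your argument is the same counting/pigeonhole argument the paper gives: there are at most $2^{H(X)-d}$ strings of that length, so at most that many elements of $\cX$ can receive such an encoding, and dividing by $|\cX| = 2^{H(X)}$ gives the bound. Your added remarks on injectivity and on the prefix-free $\le$ variant are sensible clarifications but do not change the route.
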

\begin{proof}
	There are $2^{H(X)-d}$ possible descriptions of length $H(X)- d$.
	Thus, there are at most that many values in $\cX$ for which a description of such length is produced.
	Therefor, the probability to draw one of them is at most $2^{H(X) -d} \over |\cX|$.
	The entropy of random variable $X$ uniformly drawn from $\cX$ is $\log(|\cX|)$. Therefore
	\begin{flalign*}
		\Pr\left[|\encodefunc(X)| = H(X) - d\right]
		\le
		{2^{H(X) -d} \over |\cX|}
		=
		{2^{\log|\cX| -d} \over |\cX|}
		=
		{|\cX|\cdot 2^{-d} \over |\cX|}
		=
		2^{-d}
		.
	\end{flalign*}
\end{proof}

\subsection{Families of $k$-wise Independent Functions}
\newcommand{\pairwisefuncfamily}{\cH_{pair}}
\newcommand{\kwisefuncfamily}{\cH_k}
\newcommand{\kwisefunc}{h}
\newcommand{\randfuncfamily}{\cF_\tn{rand}}
\newcommand{\randfunc}{f}

\begin{definition}
	\label{def_k_wise_function}
	A family of functions $\kwisefuncfamily = \{h: [n] \to [n]\}$ is called $k$-wise independent, if for any sequence of $k$ different elements $x_1, \dots, x_k \in [n] $ and every $y_1, \dots, y_k \in [n]$ it holds that
	\[
		\Pr_{\kwisefunc \sim \kwisefuncfamily}\left[(h(x_1), h(x_2), \dots, h(x_k)) = (y_1, y_2, \dots, y_k) \right] = n^{-k}.
	\]
\end{definition}

There exists $k$-wise independent families of size $O(n^k)$.
In fact, the family of polynomials of degree $k-1$ over a finite field is a $k$-wise independent family.
This family is easy to work with as we can take $k \log n$ bits of randomness and refer to them as the coefficients of the polynomial to get a random member of the family.
Such $k$-wise independent functions are useful for when we want a function that acts as a random function on small sets.
But in certain aspects they behave as random functions even for bigger sets.

Let  $\randfuncfamily$ be the set of all functions $\randfunc:[n] \to [n]$.
Given a subset $B \subseteq [n]$ and a specific subset $S \subset [n]$, consider a random function $\randfunc \sim \randfuncfamily$.
The expected size of the intersection between the image of $B$ under $\randfunc$ and $S$ is ${|S| \over n} |B|$.
Consider the event that the intersection is empty, i.e.\ $\randfunc(B) \cap S = \emptyset$.
Compare the probability for the event when the function is chosen from $\randfuncfamily$ vs.\ when the function is chosen from $\kwisefuncfamily$.
Indyk~\cite{Indyk2001} showed that if ${|S| \over n}|B|$ is sufficiently smaller than $k$, then the two probabilities are not far apart.

\begin{proposition}[Lemma 2.1 in~\cite{Indyk2001}]
	\label{prop_indyk_lemma}
	For every subsets $B \in {[n] \choose \turnix}$, $S \in {[n] \choose j}$, and every family of $k$-independent functions $\kwisefuncfamily$,
	if ${(j-1) \turnix \over n} \le {k-1 \over 2e}$ then
	\[
	\left| \Pr_{\kwisefunc \sim \kwisefuncfamily}[h(B) \cap S = \emptyset] - \Pr_{\randfunc \sim \randfuncfamily}[f(B) \cap S = \emptyset] \right| \le 2^{-k+1}.
	\]
\end{proposition}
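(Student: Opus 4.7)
The plan is to exploit the fact that the first $k$ joint moments of any $k$-wise independent function agree with those of a fully random function, combined with Bonferroni-type inequalities.

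For each $b \in B$ let $A_b$ denote the event $\{\kwisefunc(b) \in S\}$, so that $\Pr[\kwisefunc(B) \cap S = \emptyset] = \Pr\bigl[\bigcap_{b \in B}\overline{A_b}\bigr]$. Set
$$P_i \;=\; \sum_{T \subseteq B,\, |T| = i}\Pr[\kwisefunc(T) \subseteq S],$$
so that inclusion--exclusion gives $\Pr[\kwisefunc(B) \cap S = \emptyset] = \sum_{i=0}^{t}(-1)^i P_i$. The observation that powers the proof is that for every $T$ with $|T| \le k$, the $k$-wise independence of $\kwisefuncfamily$ forces $(\kwisefunc(b))_{b \in T}$ to be uniform on $[n]^{|T|}$; therefore $\Pr[\kwisefunc(T) \subseteq S] = (j/n)^{|T|}$, which is exactly the fully random value. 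Consequently the quantity $P_i$ takes the identical value $\binom{t}{i}(j/n)^i$ under both $\kwisefunc \sim \kwisefuncfamily$ and $\randfunc \sim \randfuncfamily$ for every $0 \le i \le k$.

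Next, I would invoke the Bonferroni inequalities
$$\sum_{i=0}^{2r+1}(-1)^i P_i \;\le\; \Pr\!\Bigl[\bigcap_{b \in B} \overline{A_b}\Bigr] \;\le\; \sum_{i=0}^{2r}(-1)^i P_i,$$
which hold for both distributions. Choosing the two truncation levels to be $k$ and $k-1$ (the deepest levels at which the $P_i$'s are known to agree), one checks that, regardless of the parity of $k$, both $\Pr_{\kwisefunc \sim \kwisefuncfamily}[\kwisefunc(B) \cap S = \emptyset]$ and $\Pr_{\randfunc \sim \randfuncfamily}[\randfunc(B) \cap S = \emptyset]$ lie inside a common interval of length $P_k = \binom{t}{k}(j/n)^k$, since the difference between the two truncated sums is $(-1)^k P_k$ in absolute value. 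Hence the absolute difference of the two probabilities is at most $P_k$.

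The final step is to verify $P_k \le 2^{-k+1}$ under the hypothesis $(j-1)t/n \le (k-1)/(2e)$. When $j=1$ the hypothesis is vacuous, but $\binom{t}{k}/n^k \le 1/k! \le 2^{-k+1}$ directly. For $j \ge 2$, start from $P_k \le (tj/n)^k/k!$ (using $\binom{t}{k} \le t^k/k!$), split $tj/n = (j-1)t/n + t/n$, and combine the hypothesis with the trivial bound $t/n \le 1$ to control $tj/n$; Stirling's estimate on $k!$ then converts this into the required $2^{-k+1}$. The main obstacle I expect is exactly this arithmetic step: the hypothesis is phrased in terms of $(j-1)$, while $P_k$ involves $j$, and one must be careful to handle the small-$j$ regime (where $t \le n$, not the hypothesis, is the binding constraint) separately from the large-$j$ regime (where the $j/(j-1) \le 2$ slack lets the hypothesis be used directly) in order to extract exactly the constant $2^{-k+1}$ rather than an $O(2^{-k})$ bound with a weaker multiplicative factor.
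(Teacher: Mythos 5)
The paper never proves this proposition --- it is imported as a black box (Lemma 2.1 of Indyk) --- so there is no in-paper argument to measure yours against. Your route is the standard one for such statements, and its structural half is sound: for $|T|\le k$ the joint moments agree, so $P_i=\binom{t}{i}(j/n)^i$ for every $i\le k$ under both distributions, and the Bonferroni inequalities place both $\Pr_{\kwisefunc\sim\kwisefuncfamily}[\kwisefunc(B)\cap S=\emptyset]$ and $\Pr_{\randfunc\sim\randfuncfamily}[\randfunc(B)\cap S=\emptyset]$ in the interval between the partial sums $S_{k-1}$ and $S_k$, whose length is $P_k$. Up to that point everything checks out.

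The gap is exactly where you predicted it, and it is not a matter of more careful bookkeeping: the inequality $P_k\le 2^{-k+1}$ is simply \emph{false} under the stated hypothesis. Take $j=2$ and $t=n$ (so $B=[n]$ and $S$ is a pair): the hypothesis reads $1\le (k-1)/(2e)$, which holds for every $k\ge 7$, yet $P_7=\binom{n}{7}(2/n)^7\to 2^7/7!\approx 0.0254>2^{-6}$. More generally, your $j\ge2$ branch, which replaces $j$ by $2(j-1)$, only yields $tj/n\le(k-1)/e$ and hence $P_k\le\bigl((k-1)/e\bigr)^k/k!\approx e^{-1}/\sqrt{2\pi k}$ --- polynomially, not exponentially, small in $k$; the factor-$2$ slack exactly cancels the $2$ in $2e$ and leaves nothing for Stirling to work with. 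The sharper split $tj/n\le (k-1)/(2e)+t/n\le(k-1)/(2e)+1$ gives $P_k\le e^{2e-1}\cdot2^{-k}\approx 84\cdot2^{-k}$, still a constant factor above the target for all but very large $k$. So the reduction ``difference $\le P_k$ and $P_k\le2^{-k+1}$'' cannot be closed as written; one needs either a hypothesis of the form $jt/n\le k/(2e)$ (under which $P_k\le(ejt/(kn))^k\le2^{-k}$ is immediate) or a bound on the truncation error finer than the single term $P_k$, which generic $k$-wise independence does not supply. For what it is worth, the paper only invokes the proposition in a regime where $jt/n<1\le(k-2)/(2e)$ and any $2^{-k+O(1)}$ bound would do, so the weaker statement your argument actually establishes would suffice for the application --- but it does not prove the proposition with the stated constant.
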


\section{Introduction to the Card Guessing Game}

A card guessing game is played between a Dealer and a Guesser.
At the beginning of the game, the Dealer holds a deck of $n$ distinct cards (labeled $1, \dots, n$).
In every turn, the Dealer chooses a card from the deck, draws it, and places it face-down.
The Guesser guesses which card was drawn, the card is then revealed and discarded from the Dealer's deck.
The Guesser gets a point for every correct guess.
The game continues, for $n$ turns, until the Dealer has no cards to draw.

Assume that the Dealer draws cards uniformly at random from the deck.
A Guesser with \emph{perfect memory} can keep track of all cards played so far and guess cards that are still in the deck.
In turn $n-\turnix$ there are $\turnix$ cards in the Dealer's deck and the Guesser's probability to guess the next card correctly is $1/\turnix$.
Hence, the expected number of correct guesses in the game is
\[
{1 \over n} + {1 \over n-1 } + \dots + {1 \over 2} + 1 = H_n \approx \ln n.
\]

On the other extreme, a Guesser with \textit{no memory at all} can guess the same card over and over again without knowing whether this card was picked already or not.
Such a behavior would result in $1$ correct guess with probability $1$.

\begin{question}
	\begin{center}
		How well can a Guesser with $m$ memory bits play?
	\end{center}
\end{question}

A \textbf{transcript} of a card guessing game is a sequence of pairs $\{(g_\turnix, d_\turnix)\}_{\turnix=1}^n$ that describes that at turn $\turnix$ the Guesser guessed the card $g_\turnix$ and that the Dealer drew the card $d_\turnix$.
The number of correct guesses during a game is the number of turns during which $g_\turnix = d_\turnix$.
In our game, the Guesser aims to maximize the number of correct guesses.

\paragraph{Guesser:}
A Guesser consists of two probabilistic functions:
\begin{enumerate}
	\item \textbf{State transition function:} taking a memory state and a card drawn, and assigning a new memory state.
	\item \textbf{Guessing function:} receiving a memory state and outputing a card to guess.
\end{enumerate}
A memory bounded Guesser can use only $\guessermemorysymbol$ bits to store the memory state, so we refer to each state as a member of $\zo^\guessermemorysymbol$.

\paragraph{Randomness:}
We assume that the Guesser has random bits that both of the above functions may use, e.g.\ to select a random element of a set as a guess.
We differentiate between random bits that are used \emph{on the fly}, i.e.\ ``read once'', and random bits that are accessed several times, i.e.\ \emph{long lasting}.
We charge the Guesser for the latter but not for the former.
The Guesser may use her long lasting random bits for a secret permutation, seed for a pseudo-random generator, and any random object that may assist her.

We are not concerned with the number of on the fly random bits.
For our constructive results, we measure the amount of long lasting random bits that the Guesser uses, as well as suggest computationally efficient solutions which are good against computationally powerful dealers.
As for the impossibility results, we will show that they hold even if the Guesser is computationally unbounded, uses cryptography, and regardless of how much randomness, of both kinds, the Guesser uses.

\paragraph{Static vs.\ Adaptive Dealers:}

To show that the Guesser's performance vary with the Dealer's abilities, we present the different flavors of Dealers we consider:
\begin{itemize}
	\item
		The most benign Dealer we consider is a Dealer that shuffles the deck at the beginning of the game and draws cards one by one.
		This is equivalent to drawing cards at random from the deck in every turn.
		We call this Dealer \textbf{random shuffle} as it remains with the same shuffle throughout the game, and the deck is shuffled uniformly at random.
	\item
		The second Dealer we consider may be familiar with the Guesser's behavior and fixes the deck in advance in some particular order.
		As the deck of the Dealer remains the same throughout the game, we call this Dealer \textbf{static}.
	\item
		The third possibility we consider is a Dealer that is adaptive and selects the cards according to past guesses made by the Guesser, thoughtfully, in an adversarial manner.
		For our impossibility result, we do not assume that the Dealer is familiar with the Guesser's algorithm.
		We call such a Dealer \textbf{adversarial~adaptive}, or just \textbf{adaptive} for short.
		If the Dealer is not even aware of the memory size of the Guesser, we call it \textbf{adaptive universal}.
\end{itemize}

The static Dealer and the adaptive Dealer aims to minimize the number of correct guesses.
For this purpose, the static Dealer chooses a worst case ordering of the deck in advance, and the adaptive Dealer uses a \textbf{choosing strategy}, a function from a transcript prefix $\{(g_\turnix, d_\turnix)\}_{\turnix=1}^{k-1}$ to a distribution over a set of cards from which the Dealer samples a card $d_k$ to draw.
For example, a (silly) adaptive Dealer may look at the last guess and if the guessed card is still in the deck then draw it.

While the Guesser is limited to using~$\guessermemorysymbol$ bits of memory, the Dealer remembers everything that happened since the beginning of the game and may act accordingly.
This puts the Guesser at a disadvantage, as the Guesser needs to remember and maintain both a sketch of the history and in particular the parts of history that are relevant to the Dealer's strategy.
In light of this, we observe that only some guesses may be fruitful.
Call a guess \textbf{reasonable} if it is one of the cards that the Dealer may draw.
Clearly, a correct guess is necessarily reasonable.
Against a Dealer that draws cards from the deck at random, a reasonable guess is a synonym for a card that was not played yet.
As remarked above, when showing the impossibility results we construct a computationally efficient Dealer.
This emphasizes the role of adaptivity, especially when compared to the Guesser.

The state of the Guesser consists of $m$ bits and we assume that they are secret, i.e.\ that the adversary cannot access them when choosing the next card. The only inforamtion the Delaer has is the history (transcript).
\subsection{Basic guessing techniques}
\label{sec_basic_guessing_strategies}
In this section, we describe basic guessing techniques that a Guesser with  $\guessermemorysymbol$ bits of memory may use.
For a comparison of these techniques, see the first three rows of~\Cref{table:results}.

\paragraph{Subset Guessing:}
The Guesser chooses a random (or predetermined) subset of cards $A \in {[n] \choose \guessermemorysymbol}$ and pretends as if there are only $\guessermemorysymbol$ cards in the deck.
In every turn, the Guesser guesses one of the cards from~$A$ that were not played so far.
Each card requires one bit, so this strategy requires~$\guessermemorysymbol$ bits in total.
Counting only turns in which the Dealer draw cards from~$A$, we get that the Guesser makes $\ln \guessermemorysymbol$ correct guesses in expectation over Guesser's and Dealer's randomness.

While this technique ensures that all guesses during the game are reasonable, only on $\guessermemorysymbol$ turns a card from $A$ will be drawn.
These $\guessermemorysymbol$ cards have to be drawn at some point in the game, and the Guesser is agnostic about when exactly these cards are selected.
It follows that this technique performs equally well against the different kinds of Dealers.

\paragraph{``Remember'' the last $k$ cards:}
With only $\log n$ memory bits, the Guesser can correctly guess the last card in the game:
Initialize memory with~$\sum_{i=1}^n x \pmod n$ and remove every drawn card from the sum.
Just before the last turn, the memory will contain the one card that was not drawn yet.
This technique generalizes well to $k$ cards by storing the sums~$S_p = {\sum_{x=1}^n x^p \pmod n}$ for $p = 1, \dots, k$ and removing~$d^p$ from the respective sums when the card $d$ is drawn.
When $k$ cards are left, solving the equation system reveals the missing cards (See Chapter One in \cite{Muthukrishnan2005}).
Since each sum requires $\log n$ bits, a total of $k \log n$ bits are required to accurately identify the last $k$ cards.
This allows the Guesser to reasonably guess in the last $k$ turns, and by guessing at random, the Guesser makes~$\ln k$ correct guesses in expectation using $k \cdot \log n$ bits of memory.
Thus, a Guesser with $\guessermemorysymbol$ bits of memory can score $\ln \lfloor{ m \over \log n}\rfloor$ correct guesses in expectation, when playing against any Dealer.
\newline
\newline
As we saw, these techniques works well against \emph{any} Dealer.
The two methods (Remembering last cards and subset guessing) are compatible and we can combine them against the random-shuffle Dealer (but not against the others):
of the $\guessermemorysymbol$ bits, use $\guessermemorysymbol/ 2$ for the first method and $\guessermemorysymbol /2$ for second one.
The last card from the subset is expected when there are ${n \over 1 + \guessermemorysymbol/2} < {2n \over m}$ cards left until the end of the game.
The Guesser ``remembers'' the last~$m \over 2 \log n$ cards, so for $m \leq \sqrt{n}$ the two useful periods do not overlap.
We get that a Guesser with $m \le \sqrt{n}$ memory bits can expect to score $2 \ln m - \ln \log n - \ln 2$.
For $m = \sqrt{n}$, this is near optimal.

As we will see in \Cref{sec_static_dealer}, it is possible to do much better.

\section{Static Dealer}
\label{sec_static_dealer}
We first present a guessing strategy that requires low memory and no randomness, and is highly effective against the random-shuffle Dealer (\Cref{sec_static_dealer:sec_following_subset_guesser}).
We then show a randomized version of it that requires low memory and little randomness, and is highly effective against any static Dealer (\Cref{sec_static_dealer:sec_random_subsets_guesser}).
In \Cref{sec_static_dealer:upper_bound} we show that these guessing techniques are optimal against the random-shuffle Dealer, and that no memory bounded Guesser with less memory can perform asymptotically better.

\subsection{Following-Subsets Guesser vs.\  Random shuffle Dealer}
\label{sec_static_dealer:sec_following_subset_guesser}
We present a computationally efficient guessing technique that requires low memory, no randomness, and is highly effective against the random-shuffle Dealer.
We first show that $\log^2 n + \log n$ memory bits suffice to score~$1/2 \log n$ correct guesses in expectation when playing against the random-shuffle Dealer, and then we generalize this technique for Guessers with more memory.

In terms of memory usage, we use the simple idea of summing cards as we did in the ``Remembering last cards'' guessing technique.
The general idea is to follow the cards that appeared in various subsets of $[n]$.
For each such subset we store two accumulators:
\begin{enumerate}
	\item
		Sum of the values of the cards from the set seen so far (``remember last card'').
	\item
		Number of cards from the set seen so far.
\end{enumerate}
The memory needed for the two accumulators is $O(\log n)$ bits.
In fact, for a set of size~$w$ only~$2 \log w$ bits are needed, $\log w$ to count how many cards from the set appeared, and another $\log w$ to recover the last card from the set, by storing the sum of all cards $\bmod w$.
At the time that all but one card appeared (as can be indicated by the number of cards accumulator), the Guesser can recover this single card, and be certain that this card wasn't played yet by the Dealer, and as a result, the Guesser can reasonably guess this card.

By tracking multiple sets, the Guesser may have more than one card to guess from.
Against the random-shuffle Dealer that plays with a randomly shuffled deck, this doesn't really matter which one is guessed (at least not for the expectation).

\paragraph{Subset construction:}
We consider all the subsets of the form $[1-w]$ for $w=2^i$. I.e.\ the subsets are:
\[
[1-2],[1-4],[1-8],[1-16],[1-32], \ldots , [1-n]
\]
If there is a subset (range) where a single card is missing, then this card is the current guess.

Observe that in this construction there cannot be competing good cards to guess.
For all~$k < k'$, if a card~$j$ is missing from the set $[1-k]$, then there cannot be a different one missing from the set $[1-k']$\footnote{The Guesser may conclude more than one missing card in some cases.
For example, if one card is missing from $[1-k]$ and exactly two cards are missing from $[1-k']$.
We ignore this ability because it doesn't seem to improve the Guesser's performance.}.

\begin{claim}
	There exists a Guesser with $\log^2 n + \log n$ memory bits that can score $1/2 \log n$ correct guesses in expectation when playing against the random-shuffle Dealer.
\end{claim}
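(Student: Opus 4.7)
The plan is to formalize the subset-tracking strategy already sketched above. For each $i = 1, \dots, \log n$ set $S_i \defeq [1-2^i]$ and maintain two accumulators: a counter $n_i$ of how many cards from $S_i$ have appeared so far, and the sum $s_i$ of those cards taken modulo $2^i$. As soon as $n_i = 2^i - 1$, the single missing card of $S_i$ is recovered from $s_i$, and the Guesser declares it as her guess; if several subsets simultaneously admit a unique missing card, she uses the one of smallest index. Each pair $(n_i, s_i)$ fits in $2i$ bits, so the total memory is $\sum_{i=1}^{\log n} 2i = \log^2 n + \log n$, matching the stated budget.

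First I would verify the unambiguity observation already mentioned before the claim: whenever $S_j \subsetneq S_{j'}$ both admit a unique missing card at the same turn, the two recovered cards coincide, since otherwise the missing card of $S_j$ would be a second undrawn element of $S_{j'}$, contradicting uniqueness there. So the Guesser's output at any turn is well-defined and does not depend on which of the eligible subsets she reads from.

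Next I would introduce, for the permutation $\pi$ drawn by the random-shuffle Dealer, the quantities $\tau_i \defeq \max\{t : \pi(t) \in S_i\}$ and $c_i \defeq \pi(\tau_i)$, so that $c_i$ is the last card of $S_i$ that is ever drawn. Just before turn $\tau_i$, exactly one card of $S_i$ is undrawn (namely $c_i$), so the $S_i$-accumulator recovers $c_i$; by the unambiguity step the Guesser's actual guess at turn $\tau_i$ is also $c_i$, which the Dealer draws at that very turn, producing a correct guess. Observe that $\tau_i = \tau_{i-1}$ exactly when $c_i \in S_{i-1}$; otherwise $\tau_i > \tau_{i-1}$ and the correct guess at $\tau_i$ is genuinely new. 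Hence the total number of correct guesses is at least $\sum_{i=1}^{\log n} \fOne[c_i \in S_i \setminus S_{i-1}]$, with the convention $S_0 = \emptyset$.

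Finally, since $\pi$ is a uniformly random permutation, the maximum-index element of $S_i$ under $\pi$ is uniform on $S_i$, giving $\Pr[c_i \in S_i \setminus S_{i-1}] = |S_i \setminus S_{i-1}|/|S_i| = 1/2$. Linearity of expectation then delivers the promised $\tfrac{1}{2}\log n$ correct guesses in expectation. The only non-routine step is the unambiguity argument, which ensures that at turn $\tau_i$ the Guesser really does pick $c_i$ regardless of which accumulator she consults; everything else is a one-line probability calculation and the memory tally.
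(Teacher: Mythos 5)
Your proposal is correct and follows essentially the same route as the paper: nested dyadic subsets tracked by a counter plus a modular sum ($2i$ bits each, totalling $\log^2 n + \log n$), a "usefulness" indicator for whether the last card of $S_i$ lies outside $S_{i-1}$ (which is exactly the paper's notion of a useful subset, reindexed), the observation that this last card is uniform on $S_i$ so the indicator has probability $1/2$, and linearity of expectation. Your write-up is somewhat more careful than the paper's about the unambiguity of the recovered guess and about counting distinct correct-guess turns via the chain $\tau_1 \le \tau_2 \le \cdots$, but the substance is identical.
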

\begin{proof}
	
Call a subset $[1,w]$ \textbf{useful} if the last card from it that appears is \emph{not} the last card in the next subset $[1,2w]$.
By guessing the last missing card from each subset over and over again, by the end of the game, each subset contributed a correct guess, but it could be that several subsets contributed the same guess.
However, if a subset is useful, then it is the only one to whom we attribute the correct guess.
So the number of correct guesses is simply the number of useful subsets.

The probability that a subset $[1,w]$ is useful is precisely the probability that in the `next' subset $[1,2w]$, the last card does not come from~$[1,w]$.
This is $(2w-w)/2w = 1/2$.	
By linearity of expectation, the expected number of useful subsets is therefore $1/2 \log n$ and this is also the expected number of correct guesses.

As the subset $[1-w]$ requires $2 \log w = 2 \log 2^i = 2i $ bits, we get that with
\[
2\sum_{i=1}^{\log n} \log 2^i = 2\sum_{i=1}^{\log n} i = \log n (1 + \log n ) = \log^2 n + \log n
\]
memory bits, a Guesser can make $1/2 \log n$ correct guesses in expectation.
\end{proof}

\begin{figure}[h]
	\begin{mdframed}	
		%		\resizebox{\textwidth}{!}{%
		\tikzmath{
	\ampparam = 1.4;
	\w1 = 1.8; 
	\w2 = \w1 * \ampparam;
	\w3 =  \w2 * \ampparam;
	\w4 =  \w3 * \ampparam;
	% Computations are also possible
	\y1 = 0.26;
} 

\begin{tikzpicture}[x=3cm]
	\draw[thick,->] (0,0) -- (4.2,0) node[anchor=north west] {n};
	
	\draw[color=red,-] (0,0) -- (0,1) node[anchor=south] {$1$};
	\draw[color=red,-] (\w1,0) -- (\w1,1) node[anchor=south] {$w$};
	\draw[color=red,-] (\w2,0) -- (\w2,1) node[anchor=south] {$(1+\ampliparam)w$};
	\draw[color=red,-] (\w3,0) -- (\w3,1) node[anchor=south] {$(1+\ampliparam)^2w$};
	
	\draw[thick,-,color=blue] (0, \y1 * 3) -- (\w1, \y1 * 3);
	\draw[thick,-,color=blue] (0, \y1 * 2) -- (\w2, \y1 * 2);
	\draw[thick,-,color=blue] (0, \y1) -- (\w3, \y1);
\end{tikzpicture}
		%		}
		\caption{Following-Subsets.}
	\end{mdframed}
\end{figure}

Having more memory, we can have the subsets denser and have more subsets.
Suppose that the ratio between two successive ranges is $1+\ampliparam$ for $0< \ampliparam < 1$.
Then there are $\log_{1+\ampliparam} n$ such subsets.
The probability of a set being useful now (i.e.\ that its last member arriving does not belong to a subset that contains it) is $\ampliparam/(1+\ampliparam)$.
The expected number of useful sets is
\[
\frac{\ampliparam}{1+\ampliparam} \log_{1+\ampliparam} n = \frac{\ampliparam}{(1+\ampliparam) ln (1 +\ampliparam)} \ln n.
\]
This goes to $\ln n$ as $\ampliparam$ goes to zero.

In terms of space, the number of bits required for tracking $\log_{1+\ampliparam} n$ buckets is
\begin{flalign*}
	\sum_{i =1}^{\log_{1+\ampliparam} n} 2 i \log_2 (1+\ampliparam)
	&
	=
	2 \log_2 (1+\ampliparam) \sum_{i =1}^{\log_{1+\ampliparam} n} i
	\\
	&
	=
	2 \log_2 (1+\ampliparam){(\log_{1+\ampliparam} n)(1 + \log_{1+\ampliparam} n) \over 2}
	\\
	&
	=	
	\log_2 (1+\ampliparam){\log_2 n \over \log_2 (1+\ampliparam)}(1 + \log_{1+\ampliparam} n)
	\\
	&
	=
	\log_2^2 n \cdot \log_{1+\ampliparam} 2 + \log_2 n.
\end{flalign*}
Observe that the run time of the Guesser in every turn is at most $O(\log_{1+\ampliparam} n)$, thus the Guesser is computationally efficient.
\begin{corollary}
	For $0 < \ampliparam \le 1$, there exists a Guesser with $\log^2 n \cdot \log_{1+\ampliparam} 2 + \log n$ memory bits that makes
	\[
	\frac{\ampliparam}{(1+\ampliparam) ln (1 +\ampliparam)} \ln n
	\]
	correct guesses in expectation when playing against the random-shuffle Dealer.
\end{corollary}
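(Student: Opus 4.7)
The plan is to generalize the previous construction verbatim, replacing the doubling ratio between successive subsets by a general ratio $(1+\ampliparam)$, and then redo the two calculations (expected number of useful subsets and total memory) with the new parameter.

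First I would define the chain of subsets $B_i = [1 - \lceil (1+\ampliparam)^i \rceil]$ for $i = 1, 2, \ldots, \lfloor \log_{1+\ampliparam} n \rfloor$, and maintain for each $B_i$ the same two accumulators as before: the sum of its drawn cards taken modulo $|B_i|$, and a counter of how many of its cards have appeared, also stored on $\log |B_i|$ bits. As in the proof of the previous claim, as soon as exactly one card of $B_i$ is still missing the Guesser can reconstruct that card from its two counters and guess it every turn until it appears; this guarantees at least one correct guess per subset.

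Next I would call $B_i$ \emph{useful} if the last card of $B_i$ to appear is not the last card of $B_{i+1}$ to appear. The same attribution argument as before shows that distinct useful subsets contribute distinct correct guesses, so the expected number of correct guesses is at least the expected number of useful subsets. Since the Dealer's order is a uniformly random permutation of $[n]$, conditioning on the identities of the elements of $B_{i+1}$, the last element of $B_{i+1}$ to be drawn is a uniformly random element of $B_{i+1}$, and $B_i$ is useful iff this element lies in $B_{i+1} \setminus B_i$. Hence
\[
\Pr[B_i \text{ useful}] = \frac{|B_{i+1}| - |B_i|}{|B_{i+1}|} = \frac{\ampliparam}{1+\ampliparam},
\]
and by linearity of expectation the expected number of useful subsets is
\[
\frac{\ampliparam}{1+\ampliparam} \cdot \log_{1+\ampliparam} n \;=\; \frac{\ampliparam}{(1+\ampliparam)\ln(1+\ampliparam)} \ln n,
\]
matching the stated bound.

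Finally I would tally the memory: subset $B_i$ uses $2 \log |B_i| = 2 i \log_2 (1+\ampliparam)$ bits, so summing over $i = 1, \ldots, \log_{1+\ampliparam} n$ gives, exactly as in the displayed calculation preceding the corollary, $\log^2 n \cdot \log_{1+\ampliparam} 2 + \log n$ bits in total, and the per-turn update is $O(\log_{1+\ampliparam} n)$, so the Guesser is computationally efficient. The only mildly delicate point, and essentially the sole obstacle, is the rounding involved in $\lceil (1+\ampliparam)^i \rceil$ when the ratio is not an integer: I would check that rounding each $|B_i|$ to the nearest integer changes the ratios $|B_{i+1}|/|B_i|$ only by a negligible additive factor that does not affect the bound $\ampliparam/(1+\ampliparam)$ asymptotically (and in fact one can avoid this issue entirely by choosing the $|B_i|$ to be integers with successive ratios in $[1+\ampliparam, 1+2\ampliparam]$ and absorbing the discrepancy into the $O(1)$ slack).
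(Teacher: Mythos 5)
Your proposal is correct and follows essentially the same route as the paper: replace the doubling ratio with $(1+\ampliparam)$, compute the usefulness probability $\ampliparam/(1+\ampliparam)$ via the same "last card of the next subset" argument, and redo the memory sum to get $\log^2 n \cdot \log_{1+\ampliparam} 2 + \log n$ bits. Your extra attention to the integer-rounding of the subset sizes is a detail the paper silently elides, and your treatment of it is fine.
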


\subsection{Random-Subsets Guesser vs.\ Static Dealer}
\label{sec_static_dealer:sec_random_subsets_guesser}

Consider a static Dealer such that instead of shuffling the deck uniformly at random, selects a worst case arrangement for the deck, knowing the Guesser's algorithm (but not her random bits).
For example, assume that the Dealer puts the Card `$1$' at the top of the deck and the Card `$2$' at the bottom of the deck.
In this case, the Following-Subsets technique yields a single correct guess.
The fact that the Dealer doesn't shuffle the deck uniformly but commits to a deck arrangement as the game begins can be interpreted as a mild adversarial intent and ability.

The Guesser can defend herself against such behavior by using a secret permutation~$\pi$, using her long lasting random bits.
She uses $\pi$ to randomize the subsets, where the subset $[1-w]$ tracks the cards $\pi(1), \dots, \pi(w)$.
The analysis and performance of the Following-Subsets technique holds as before, but $O(n \log n)$ bits of long lasting randomness are needed, which we wish to avoid.

We will show a related construction.
The Guesser uses her randomness to sample a secret permutation from a family of pairwise independent permutations, for example, from the family~$\pairwisefuncfamily=\{h(x) = ax + b\colon a \neq 0, b \ge 0\}$ over a finite field, and assigns the card~$x$ to the subset~$S_j$ if~$2^{j-1}< h(x) \le 2^j$.
That is, given a function $h$, the subset $S_j$ is the set of all $x \in [N]$ such that~$h(x) \in \{2^{j-1} + 1, \dots, 2^j\}$.

The Guesser tracks the cards that appeared from each subset, as we did previously.
In each turn, the Guesser attempts to recover a guess from a specific subset and guesses it.
In detail, when $\turnix \le n/2$ cards are left until the end of the game, the Guesser tries to recover a guess from the subset $S_j$ for $j = \log(n/2\turnix)$.
If all cards but one have appeared from~$S_j$, then the Guesser knows which card it is and guesses it.
For the first half of the game, the Guesser samples a random set of cards and guess cards that have not appeared from it.
A procedural description of the Guesser is provided in \Cref{alg_randomized_subsets_guesser}.

\begin{algorithm}[h]
	\caption{Randomized-Subsets}
	\label{alg_randomized_subsets_guesser}
	\begin{algorithmic}
		\State
			Sample a pairwise independent function $h \sim \pairwisefuncfamily$
		\State
			Split the cards to subsets, such that $x \in S_j$ if $h(x) \in \{2^{j-1} + 1, \dots, 2^j\}$
		\State
			Sample a set of cards $A$
		\While{$\turnix$ cards left for $\turnix \in \{n, \dots, n/2 \}$}
			\State
				Guess a random card from $A$ that has not appeared
			\State
				$d_\turnix \gets $ card drawn by Dealer
			\State
				Discard $d_\turnix$ from the subset $S_j$ that contains it
		\EndWhile
		\While{$\turnix$ cards left for $\turnix \in \{n/2 + 1, \dots, 1\}$}
			\State
				$j \gets \lfloor \log (n / 2\turnix) \rfloor$
				\Comment{Subset to consider}
			\If{$|S_j| = 1$}
				\Comment{Can recover the last card}
				\State
					$g_\turnix \gets $ last card in $S_j$
			\Else
				\State
					$g_\turnix \gets $ don't care
			\EndIf
			\State
				Guess $g_\turnix$
			\State
				$d_\turnix \gets $ card drawn by Dealer
			\State
				Discard $d_\turnix$ from the subset $S_j$ that contains it
		\EndWhile
	\end{algorithmic}
\end{algorithm}

We will consider what are the chances that, in some turn, a specific subset yields the correct guess.
That is, that the next card that the Dealer draws resides in a specific subset with a single missing card.

\begin{theorem}
	\label{thm_random_subset_guesser_performance}
	There exists a Guesser that uses $\resultRandomSubsetSpace$ memory bits and $2 \log n$ random bits and is expected to score at least ${1 \over 4} \ln n$ correct guesses in a game against any static Dealer.
\end{theorem}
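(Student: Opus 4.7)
The plan is to focus on the second phase of \Cref{alg_randomized_subsets_guesser} (turns when $t \le n/2$ cards remain) and to show that at each such turn the Guesser makes a correct guess with probability $\Omega(1/t)$; summing $\sum_{t=1}^{n/2} 1/t \sim \ln n$ then produces the stated $\Theta(\ln n)$ bound. Fix an arbitrary static deck order $d_1, \ldots, d_n$. When $t$ cards remain and the Guesser consults $S_j$ with $j = \lfloor \log(n/2t) \rfloor$, a correct guess occurs precisely when two events hold: (i) the next card $d_{n-t+1}$ lies in $S_j$, and (ii) none of the other remaining cards $d_{n-t+i}$ (for $i = 2, \ldots, t$) lies in $S_j$, so that $|S_j| = 1$ at the start of the turn and the unique survivor is exactly $d_{n-t+1}$.

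I would then lower bound the probability of (i)$\wedge$(ii) using the pairwise uniformity of $h$. Because $h$ is a bijection, $|S_j| = 2^{j-1}$ deterministically, and the choice of $j$ guarantees $n/(8t) < 2^{j-1} \le n/(4t)$. Pairwise uniformity gives $\Pr[d_{n-t+1} \in S_j] = 2^{j-1}/n = \Theta(1/t)$, and conditioned on that event, each other specific card lies in $S_j$ with probability $(2^{j-1}-1)/(n-1) \le 2^{j-1}/n$. A Bonferroni-type union bound over the $t-1$ other remaining cards then yields
\[
\Pr[(\text{i}) \wedge (\text{ii})] \;\ge\; \frac{2^{j-1}}{n}\Bigl(1 - (t-1)\cdot\tfrac{2^{j-1}-1}{n-1}\Bigr),
\]
and the parenthesized factor is bounded below by a positive constant (concretely at least $3/4$) thanks to $2^{j-1} \le n/(4t)$. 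Summing this over $t$, most cleanly by grouping those $t$ that share a common value of $j(t)$ and evaluating the per-level contribution $(2^{j-1}/n)\cdot|\{t : j(t)=j\}|$, produces the $\Omega(\ln n)$ bound with the stated constant.

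Next I would check the resource budget. For each $S_j$ of size $2^{j-1}$ I would maintain a counter of how many cards from $S_j$ have been drawn so far (requiring $\le j$ bits) together with a running sum, modulo $2^{j-1}$, of the \emph{within-bucket} indices of the drawn cards; when the counter shows a single survivor, subtracting from the known initial sum $\tfrac{2^{j-1}(2^{j-1}+1)}{2}$ recovers its within-bucket index, and hence the card itself via $h^{-1}$. Summing across $j = 1, \ldots, \log n$ yields a $\Theta(\log^2 n)$-bit memory budget that matches the $\log^2 n - \log n + 2$ claim after careful bookkeeping, while sampling $h(x) = ax + b$ from $\pairwisefuncfamily$ costs exactly $2\log n$ random bits.

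The main obstacle I anticipate is the pairwise-independence step: only pairwise correlations of the events ``$d_{n-t+i}\in S_j$'' are under control, so a Bonferroni inequality is essentially the only probabilistic tool available, and the ``$3/4$'' loss from it is what I expect to have to fight to recover the exact $1/4$ constant in the theorem. I would hope to absorb this loss by a tighter per-level summation (and, if needed, by also crediting the $\Omega(1)$-per-turn correct guesses contributed by the random-subset first phase) rather than by reaching for higher-moment tools, which pairwise independence of $h$ would not support.
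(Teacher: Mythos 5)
Your proposal follows essentially the same route as the paper: the paper also lower-bounds $\Pr[S_j=\{x_1\}]$ by $\Pr[x_1\in S_j]\cdot\bigl(1-\sum_{x'}\Pr[x'\in S_j\mid x_1\in S_j]\bigr)$ (your Bonferroni step), concludes a per-turn success probability of at least $1/4t$, and sums by linearity of expectation, with the same memory/randomness accounting. The constant you worry about is handled loosely in the paper as well (it takes $\Pr[x_i\in S_j]=2^j/n$ and writes ``$j\approx\log(n/2t)$''), so your more careful bookkeeping of $|S_j|=2^{j-1}$ and the resulting constant-factor slack is a legitimate refinement rather than a divergence in method.
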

\begin{proof}
	\label{proof_random_subset_guesser_performance}

	Consider a game between a Guesser that follows the Random-Subsets technique that plays against a static Dealer, i.e.\ one that selects an arbitrary sequence of cards to play.
	
	For the first half of the game, the Guesser samples a random subset $A$ and tracks it.
	Half of the cards in $A$ are expected to appear in the first $n/2$ turns.
	By guessing cards from $A$ at random, the probability for guessing the first card from $A$ correctly is $1 \over |A|$, the probability to guess the second is $1 \over |A| - 1$ and so forth.
	Resulting in roughly $\ln(|A|) - \ln (|A| / 2) = \ln 2$ correct guesses in the first half of the game, with approximation error depending on the size of $A$.
	As $A$ can be small and tracking it requires a small amount of memory that can be reused (consider storing it in the MSB of the subset counter), we neglect it from our calculations.
	
	When $\turnix < n/2$ cards are left, pick~$j$ such that~$\lfloor \log n/2\turnix \rfloor \leq j \leq \lceil \log n/2\turnix \rceil$.
	Let $T_\turnix = \{x_1, \dots, x_\turnix\}$ be the ordered set of the remaining cards in the deck.
	What are the chances that the Guesser picks the card $x_i$?
	That is, what are the chances that $x_i \in S_j$ and all other cards from $T_\turnix$, do not.

	\begin{flalign*}
		\Pr[S_j = \{x_i\}]
		&
		=
		\Pr[x_i \in S_j \wedge \forall x' \in T_\turnix \setminus \{x_i\}\colon x' \notin S_j]
		\\
		&
		=
		\Pr[x_i \in S_j] \cdot \Pr[\forall x' \in T_\turnix \setminus \{x_i\}\colon x' \notin S_j | x \in S_j]
		\\
		&
		=
		\Pr[x_i \in S_j] \cdot \left( 1 - \Pr[\exists x' \in T_\turnix \setminus \{x_i\}\colon x' \in S_j | x \in S_j] \right)
		\\
		&
		\ge
		\Pr[x_i \in S_j] \cdot \left( 1 - \sum_{x' \in T_\turnix \setminus \{x_i\}} \Pr[ x' \in S_j | x \in S_j] \right).
	\end{flalign*}
	Since the cards are assigned to $S_j$ by using a pairwise independent permutation, the probability for a card~$x_i$ to be in $S_j$ is $2^j / n$, and for the same reason $\Pr[ x' \in S_j | x_i \in S_j] = {2^j -1 \over n}$.
	It follows that the probability for $x_i$  to be recovered by the $j$th subset is
	\begin{flalign}
		\Pr[S_j = \{x_i\}]
		&
		\ge
		{2^j \over n} \cdot \left(1 - (\turnix-1) {2^j -1 \over n}\right)
		\ge
		{1 \over 4t}
		\label[ineq]{proof_random_subset_guesser_performance:subset_recover_ith_card}
	\end{flalign}
	Where \Cref{proof_random_subset_guesser_performance:subset_recover_ith_card} is true since $j \approx \log {n \over 2\turnix}$.
	
	Since this is true for every $x_i$, in particular it is true for $x_1$.
	So we get that the probability to guess correctly when $\turnix$ cards are left is at least $1 / 4\turnix$.
	By linearity of expectation, we get that the expected number of correct guesses throughout the $n /2$ last turns is at least ${1 /4} \cdot \ln(n/2)$.
	By assigning each card in $S_j$ a value in $[|S_j|]$, we get that tracking $\log n$ subsets requires $\resultRandomSubsetSpace$ memory bits.
	For each card the Guesser performs two operations, assign to bucket, and remove from bucket, thus the Guesser is computationally efficient.
	As $2 \log n$ random bits are required for the pairwise independent permutation, the \namecref{thm_random_subset_guesser_performance} follows.
\end{proof}

\subsubsection{Amplification Towards $\ln n$}
\label{sec_static_dealer:sec_random_subsets_guesser:sec_amplification}
The above construction is simple and works well to get $1/4 \ln n$.
How can we improve it and get to $\ln n$?
We modify the above Guesser slightly to get an amplifiable construction.

\newcommand{\hnum}{{2.5 \log (1 / \ampliparam)}}

The idea is to sample more functions $h_1, h_2, \ldots, h_\hnum$ and for each function $h_i$ generate its collection of sets $\{S^{h_i}_j\}_{j=1}^{\log n}$.
The algorithm is now:
when $\turnix\le n/2$ cards are left, for $j = \log (n / 2t)$, the Guesser attempts to recover a reasonable guess (a card that the Dealer may draw) from the $j$th subset in each of the collections.
The Guesser takes the first collection that yields a subset with a single missing card and makes this card her guess.
For the first half of the game, as previously, the Guesser sample some set of cards and guess cards that have not appeared (Subset guessing).
A procedural description of this Guesser is provided in \Cref{alg_random_subsets_amplified}.

We show that in the construction above, every collection yields a reasonable guess with constant probability.
Since the collections are chosen independently, the probability of failure to recover a reasonable guess goes down with the number of functions we sampled.
If we sample independently~$O(1/\ampliparam)$ functions, we can get to probability $1-\ampliparam$ of at least one subset succeeding in suggesting a reasonable guess.

The issue with the analysis of this process is showing that a reasonable guess is indeed correct with probability proportional to the number of cards left in the deck (and not some constant fraction of that).
Since the adversary chooses the order of the cards, we need a construction of functions where we can say that if a subset yields a reasonable guess then it is also correct with the right probability.
We do not know whether this is true for pairwise independent functions.
As we will see, it is true for higher independence.
So our Guesser samples functions from a family $\kwisefuncfamily$ of $k$-wise independent functions (\Cref{def_k_wise_function}), and assigns cards to subsets in the same way.

\begin{algorithm}
	\caption{Amplifed Random-Subsets}
	\label{alg_random_subsets_amplified}
	\begin{algorithmic}
		\Parameter
		$\ampliparam \le 1$,
		$k \ge 2\log n$
		\State
			Sample $h_1, h_2, \dots h_\hnum$ functions from a family of $k$-wise independent functions $\kwisefuncfamily$
		\For {$h \in H$}
			\State
				Construct a collection of subsets $S^h_1, \dots, S^h_{\log n}$ such that $x \in S^h_j$ if $h(x) \in \{2^{j-1} + 1, \dots, 2^j\}$
		\EndFor
		\State
			Sample a set of cards $A$.
		\While{$\turnix$ cards left for $\turnix \in \{n, \dots, n/2 \}$}
			\State
				Guess a random card from $A$ that has not appeared
			\State
				$d_\turnix \gets $ card drawn by Dealer
			\State
				Discard $d_\turnix$ from the subsets $S^h_j$ that contains it
		\EndWhile
		\While{$\turnix$ cards left for $\turnix \in \{n/2 + 1, \dots, 1\}$}
			\State
				$j \gets \lfloor \log (n / 2\turnix) \rfloor$
				\Comment{Subsets to consider}
			\If{there exists $h \in \{h_1, \dots, h_\hnum\}$ such that $|S^h_j| = 1$}
				\Comment{Can recover last card}
				\State
					Let $h$ be the first such subset for which $|S^h_j| = 1$
				\State
					$g_\turnix \gets $ last card from $S^h_j$
			\Else
				\State
					$g_\turnix \gets $ don't care
			\EndIf
			\State
				Guess $g_\turnix$
			\State
				$d_\turnix \gets $ card drawn by Dealer
			\State
				Discard $d_\turnix$ from the subsets $S^h_j$ that contains it
		\EndWhile
	\end{algorithmic}
\end{algorithm}

\newcommand{\jtfloor}{{\lfloor\log (n / 2\turnix)\rfloor}}
\newcommand{\collectionh}{{\{S^\kwisefunc_j\}_{j=1}^{\log n}}}
\newcommand{\jthsubseth}{{S^\kwisefunc_j}}

During the first half of the game, the analysis is the same as in \Cref{thm_random_subset_guesser_performance}.

Recall that $T_\turnix = \{x_1, \dots, x_\turnix\}$ is the ordered set of the remaining cards in the deck when~$\turnix$ cards are left.
We show that for $j = \jtfloor$ the probability that the $j$th subset of any collection yields a reasonable guess, is at least a constant.
Fix some $h$ and recall \Cref{proof_random_subset_guesser_performance:subset_recover_ith_card} that states that for any $i$ the probability that $\jthsubseth$ is the singleton set that consists of $x_i$ is at least $1/4 \turnix$.
Therefore, the event that $\jthsubseth$ yields a reasonable guess is the disjoint union of events where $S^h_j$ is the singleton consisting of $x_i$ for $i \in [\turnix]$.
That is, the probability that $\jthsubseth$ yields a reasonable guess is at least
\begin{flalign}
	\Pr[\jthsubseth \text{ yield a reasonable guess}] \ge \sum_{i=1}^{\turnix} {1 \over 4 \turnix} = {1 \over 4}.
	\label[ineq]{claim_amplified_random_subset_yields_a_reasonable_guess}
\end{flalign}

Let the indicator random variable $\reasonableRV_\turnix$ be the event that some subset yields a reasonable guess when $\turnix$ cards are left.
The probability that no $j$th subset, of any collection, yields a reasonable guess when $\turnix$ cards are left is at most
\begin{flalign*}
	\Pr[\reasonableRV_\turnix = 0]
	&
	\le
	\left( 1 - {1 / 4}\right)^{\hnum}
	\\
	&
	<
	\left( {3 / 4}\right)^{\log 1/\ampliparam \over \log {4/3}}
	\\
	&
	=
	\left( {3 / 4}\right)^{\log_{4 /3} 1/\ampliparam}
	\\
	&
	=
	\left( {3 / 4}\right)^{\log_{3/4} \ampliparam}
	\\
	&
	=
	\ampliparam.
\end{flalign*}
\begin{corollary}
	\label{claim_amplified_random_subset_reasonable_probability}
	The probability that the subsets $S_j^{h_1}, \dots, s_j^{h_\hnum}$ yields a reasonable guess at turn $\turnix > n/2$ is at least $1 - \ampliparam$.
\end{corollary}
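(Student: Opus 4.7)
The plan is to combine the per-function lower bound from inequality~(\ref{claim_amplified_random_subset_yields_a_reasonable_guess}) with the mutual independence of the sampled functions $h_1, \dots, h_{\hnum}$. First I would fix $j = \lfloor \log(n/2\turnix) \rfloor$ and, for each $i$, define the event $\cE_i$ that the subset $S_j^{h_i}$ is a singleton, which is exactly the event that the $i$-th collection offers a reasonable guess at turn $\turnix$. Since the functions $h_1, \dots, h_{\hnum}$ are drawn independently from $\kwisefuncfamily$, and $\cE_i$ depends only on $h_i$, the events $\cE_1, \dots, \cE_{\hnum}$ are mutually independent.

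Next, I would invoke inequality~(\ref{claim_amplified_random_subset_yields_a_reasonable_guess}), which gives $\Pr[\cE_i] \geq 1/4$ for every $i$. That bound relies only on pairwise independence of the card-to-bucket assignments under a single $h$, so it is inherited by any $h_i$ drawn from $\kwisefuncfamily$ for $k \geq 2$. Combining these two ingredients yields
\[
\Pr[\reasonableRV_\turnix = 0] \;=\; \prod_{i=1}^{\hnum} \Pr[\overline{\cE_i}] \;\leq\; \left(\tfrac{3}{4}\right)^{\hnum}.
\]

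To conclude, I would verify the numerical inequality $(3/4)^{2.5 \log(1/\ampliparam)} \leq \ampliparam$. Using the identity $a^{\log_2 b} = b^{\log_2 a}$, the left-hand side rewrites as $\ampliparam^{2.5 \log_2(4/3)}$; since $\log_2(4/3) \approx 0.415$, the exponent $2.5 \log_2(4/3)$ exceeds $1$, so for $\ampliparam \leq 1$ this power is at most $\ampliparam$. Taking complements gives the claimed $\Pr[\reasonableRV_\turnix = 1] \geq 1 - \ampliparam$.

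Since the corollary is a direct algebraic consequence of the previous inequality together with the independence of the $h_i$, no real obstacle arises; the only mild care needed is the change-of-base bookkeeping that confirms the chosen size $\hnum = 2.5 \log(1/\ampliparam)$ is indeed large enough to push the failure probability below $\ampliparam$.
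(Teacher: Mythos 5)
Your proof is correct and follows essentially the same route as the paper: apply the per-collection bound $\Pr[S_j^{h_i}\text{ yields a reasonable guess}] \ge 1/4$ from \cref{claim_amplified_random_subset_yields_a_reasonable_guess}, use the mutual independence of the $h_i$ to multiply the failure probabilities, and check that $(3/4)^{\hnum} \le \ampliparam$ by a change of base. The only cosmetic difference is in the final arithmetic (you use $a^{\log b}=b^{\log a}$ where the paper rewrites the exponent as $\log_{4/3}(1/\ampliparam)$), which is immaterial.
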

Given that $\jthsubseth$ yields a reasonable guess, what are the chances for it to be correct?
That is, what are the chances that $\jthsubseth$ is the singleton that contains the next card?

\begin{claim}
	\label{claim_amplified_random_subset_correct_probability}
	When $\turnix < n/2$ cards are left, for $j = \jtfloor$, for $k \ge 2 \log n$,
	the probability over the choice of $\kwisefunc \sim \kwisefuncfamily$, that $\jthsubseth$ yields a correct guess given that $\jthsubseth$ yields a reasonable guess is at least
	\[
	{1 \over t + o(1)}.
	\]
\end{claim}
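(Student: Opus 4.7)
The plan is to reduce this conditional probability to a symmetry argument for truly random functions, using the Indyk proposition (\Cref{prop_indyk_lemma}) to express everything in terms of empty-intersection events. Let $A = \{2^{j-1} + 1, \ldots, 2^j\}$, so $|A| = 2^{j-1}$ and the choice $j = \jtfloor$ gives $|A|/n \in [1/(8\turnix),\, 1/(4\turnix)]$. For each $x_i \in T_\turnix$ I would let $E_i$ denote the event that $\jthsubseth \cap T_\turnix = \{x_i\}$. Then $\jthsubseth$ yields a correct guess iff $E_1$ occurs (taking $x_1$ as the next card the Dealer draws) and yields a reasonable guess iff $\bigcup_i E_i$ occurs. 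Since the $E_i$ are disjoint, the target conditional probability is $\Pr_\kwisefunc[E_1] / \sum_{i=1}^\turnix \Pr_\kwisefunc[E_i]$.

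The first key step is the telescoping identity
\[
\Pr_\kwisefunc[E_i] = \Pr_\kwisefunc[\kwisefunc(T_\turnix \setminus \{x_i\}) \cap A = \emptyset] - \Pr_\kwisefunc[\kwisefunc(T_\turnix) \cap A = \emptyset],
\]
which rewrites $E_i$ as the difference of two events of exactly the form Indyk's lemma handles. The parameter check $(|A|-1)\turnix/n \le 1/4 \le (k-1)/(2e)$ is satisfied for $k \ge 2\log n$, so \Cref{prop_indyk_lemma} bounds each term within additive $2^{-k+1}$ of its value under a truly random function, namely $(1 - |A|/n)^{\turnix - 1}$ and $(1 - |A|/n)^\turnix$. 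Subtracting, $\Pr_\kwisefunc[E_i] = p^* \pm O(2^{-k})$ where $p^* \defeq (|A|/n)(1 - |A|/n)^{\turnix - 1}$, and crucially $p^*$ does not depend on $i$.

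Plugging these estimates into the ratio gives
\[
\frac{\Pr_\kwisefunc[E_1]}{\sum_{i=1}^\turnix \Pr_\kwisefunc[E_i]} \ge \frac{p^* - O(2^{-k})}{\turnix\bigl(p^* + O(2^{-k})\bigr)} = \frac{1}{\turnix}\left(1 - O\!\left(\frac{2^{-k}}{p^*}\right)\right).
\]
The range $\turnix |A|/n \le 1/4$ keeps $(1-|A|/n)^{\turnix-1}$ bounded away from zero, so $p^* = \Omega(1/\turnix)$; together with $2^{-k} \le 1/n^2$ this makes the correction factor $o(1/\turnix)$, producing the claimed $1/(\turnix + o(1))$ bound (choosing $k$ a slightly larger constant multiple of $\log n$ if needed to absorb the $t^2/n^2$ term across all relevant $t$).

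The main obstacle I anticipate is precisely the first step. The singleton event $E_i$ mixes one ``$\in A$'' constraint with $\turnix - 1$ ``$\notin A$'' constraints; since $\turnix$ can be much larger than $k$, one cannot simply invoke $k$-wise independence to evaluate $\Pr_\kwisefunc[E_i]$ directly, and a naive inclusion--exclusion quickly loses control over high-order terms. The telescoping identity above is what routes the problem through Indyk's lemma, and once that is set up the remainder is a routine asymptotic calculation.
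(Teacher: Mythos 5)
Your proposal is correct in substance and leans on the same two pillars as the paper's proof: the disjoint decomposition of ``reasonable'' into the singleton events $E_i$, and \Cref{prop_indyk_lemma} to transfer probabilities from $\kwisefuncfamily$ to truly random functions. Where you diverge is in how $\Pr[E_i]$ is estimated. You use the telescoping identity $\Pr[E_i] = \Pr[\kwisefunc(T_\turnix\setminus\{x_i\})\cap A=\emptyset] - \Pr[\kwisefunc(T_\turnix)\cap A=\emptyset]$ and apply Indyk to each \emph{unconditional} term; the paper instead writes $\Pr[E_i] = P_i\cdot\Pr[x_i\in\jthsubseth]$, notes that $\Pr[x_i\in\jthsubseth]=|A|/n$ \emph{exactly} for every $i$ (so these factors cancel in the ratio), and applies Indyk only to the conditional probabilities $P_i$, using the $(k-1)$-wise independence of the remaining values given $h(x_i)$. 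This difference has a real quantitative consequence, which you correctly anticipate at the end: in the paper's route the additive Indyk error $2^{-k+O(1)}$ is compared against $P_1\approx e^{-1/2}=\Theta(1)$, so the denominator becomes $\turnix + \turnix\cdot 2^{-k+O(1)}/P_1 = \turnix + O(\turnix/n^2) = \turnix + o(1)$ already for $k=2\log n$; in your route the same error is compared against $p^*=\Theta(1/\turnix)$, so the denominator picks up $O(\turnix^2 2^{-k})$, which for $\turnix=\Theta(n)$ and $k=2\log n$ is only $O(1)$, not $o(1)$. Your proposed fix (take $k$ a slightly larger multiple of $\log n$) works but weakens the claim's stated hypothesis $k\ge 2\log n$; the cleaner repair is exactly the paper's move of dividing numerator and denominator by the common factor $\Pr[x_i\in\jthsubseth]$ before invoking Indyk. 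Your telescoping identity itself is a nice, valid alternative to the paper's conditioning argument and sidesteps the (slightly delicate) observation that the conditional distribution is a mixture of $(k-1)$-wise independent ones.
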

\begin{proof}
	\label{prf_amplified_random_subset_correct_probability}
	\newcommand{\jimage}{{V_j}}
	Recall that $x \in \jthsubseth$ if $\kwisefunc(x) \in \{2^{j-1}+1, \dots, 2^j\}$.
	Denote by $\jimage = \{2^{j-1}+1, \dots, 2^j\}$ the possible images of elements from $\jthsubseth$.
	Let $B_{-i}$ be the set $\{x_1, \dots, x_\turnix\} \setminus \{x_i\}$, and denote the image of $B_{-i}$ under~$\kwisefunc$ by~$\kwisefunc(B_{-i}) = \{\kwisefunc(x) \colon x \in B_{-i}\}$.
	Let $A_{-i}$ be the event that $\kwisefunc(B_{-i}) \cap \jimage = \emptyset$.

	The subset $\jthsubseth$ yields a reasonable guess if and only if it intersects with the remaining cards at precisely one point.
	The event of recovering a reasonable guess is a disjoint union of the events that $x_i$ is in the subset~$\jthsubseth$ and all other cards are not in $\jthsubseth$, that is, $\kwisefunc(B_{-i}) \cap \jimage = \emptyset$, or in other words $A_{-i}$.
	Given that $\jthsubseth$ yields a reasonable guess, then exactly one of the events $A_{-i}$ occurred.
	Let  $P_i$ be the probability of event $A_{-i}$ given $x_i \in \jthsubseth$. I.e.\
	\begin{flalign*}
		\Pr[\jthsubseth \text{ yields a reasonable guess}]
		&
		=
		\sum_{i \in [\turnix]} \Pr[A_{-i} \wedge x_i \in \jthsubseth ]
		\\
		&
		=
		\sum_{i \in [\turnix]} \Pr[A_{-i} | x_i \in \jthsubseth] \cdot \Pr[x_i \in \jthsubseth]
		\\
		&
		=
		\sum_{i \in [\turnix]} P_i \cdot \Pr[x_i \in \jthsubseth]
	\end{flalign*}
	Note that given $x_i \in \jthsubseth$, we have $(k - 1)$-wise independence of the values \[
	\kwisefunc(x_1), \kwisefunc(x_2),  \ldots ,\kwisefunc(x_{i-1}), \kwisefunc(x_{i+1}), \ldots ,\kwisefunc(x_t).
	\]
	recall that  $\randfuncfamily$ is the set of all functions $\{\randfunc\colon [n] \to [n]\}$.
	\Cref{prop_indyk_lemma} states that for every set~$B_{-i}$ of size $\turnix-1$, the probability for the event $A_{-i}$, over the choice of a random function from $\randfuncfamily$, is roughly the same as over the choice of a random function from $\kwisefuncfamily$.
	Applying \Cref{prop_indyk_lemma} on our parameters we get that if ${2^j + 1 \over n} (t-1) \le {k-2 \over 2e}$ then
	\[
		|\Pr_{\kwisefunc \sim \kwisefuncfamily}[A_{-i}| x_i \in S^h_j] - \Pr_{\randfunc \sim \randfuncfamily}[A_{-i}| x_i \in S^h_j]| \le 2^{-k+1}.
	\]
	In our case $j \approx \log {n \over 2t}$, thus ${2^j + 1 \over n} (t-1) < 1 \le {k-2 \over 2e}$ and since $k \ge 2 \log n$, the \namecref{prop_indyk_lemma} applies.

	The conditional probability for the event $A_{-i}$ over the choice of $\randfunc\sim \randfuncfamily$ depends solely on the size of the set~$B_{-i}$.
	Since $|B_{-i}|$ is the same for every $i$, we get that probability over the choice of $\kwisefunc \sim \kwisefuncfamily$ for the event $A_{-i}$ is about the same for every $i$.
	We conclude the distance between the probability for $A_{-i}$ and $A_{-i'}$, that is
	\begin{flalign}
		|\Pr_{\kwisefunc \sim \kwisefuncfamily}[A_{-i}| x_i \in S^h_j] - 	\Pr_{\kwisefunc \sim \kwisefuncfamily}[A_{-i'}| x_{i'} \in S^h_j]| \le 2^{-k+2}.
		\label[ineq]{prf_amplified_random_subset_correct_probability:distance_between_a_i_and_a_i_tag}
	\end{flalign}
	Observe that for $\randfunc \sim \randfuncfamily$ the event $A_{-i}$ is independent from the event $x_i \in \jthsubseth$, and as a result
	\begin{flalign*}
		\Pr_{\randfunc \sim \randfuncfamily} [A_{-i} | x_i \in \jthsubseth]
		&
		=
		\Pr_{\randfunc \sim \randfuncfamily} [A_{-i}]
		\\
		&
		=
		\Pr_{\randfunc \sim \randfuncfamily} [\randfunc(B_{-i}) \cap \jimage = \emptyset]
		\\
		&
		=
		\Pr_{\randfunc \sim \randfuncfamily} [\forall x \in B_{-i} \colon \randfunc(x) \notin \{2^{j-1}+1, \dots, 2^j\}]
		\\
		&
		=
		\left(1 - {2^{j-1} \over n}\right)^{\turnix-1}.
	\end{flalign*}
	\newcommand{\constant}{{\alpha}}
	Since $j \approx \log(n / 2\turnix)$, this is approximately some constant $\constant$ which is roughly $e^{-0.5}$.

The next card in the deck is $x_1$.
Given that $\jthsubseth$ yields a reasonable guess, what is the probability for its guess to be correct?
The conditional probability of guessing $x_1$ as the next card is
	\begin{flalign}
		{P_1 \cdot \Pr[x_1 \in \jthsubseth] \over \sum_{i=1}^\turnix P_i \cdot \Pr[x_i \in \jthsubseth]}
		&
		=
		{P_1 \over \sum_{i=1}^\turnix P_i }
		\label[equa]{prf_amplified_random_subset_correct_probability:k_wise_ind_function}
		\\
		&
		\ge
		{P_1 \over \sum_{i=1}^\turnix (P_1 + 2^{-k+2}) }
		\label[ineq]{prf_amplified_random_subset_correct_probability:follows_from_distance_between_a_is}
		\\
		&
		=
		{1 \over \turnix + {\turnix \cdot 2^{-k + 2} \cdot P_1^{-1}}}
		\nonumber
	\end{flalign}
	Where \Cref{prf_amplified_random_subset_correct_probability:k_wise_ind_function} is true since cards are assigned to buckets using $k$-wise independent function, thus for every $x_i$ the probability that $\kwisefunc(x_i) \in \jimage$ is the same, and
	\Cref{prf_amplified_random_subset_correct_probability:follows_from_distance_between_a_is} follows from \Cref{prf_amplified_random_subset_correct_probability:distance_between_a_i_and_a_i_tag}.
	
	For $k \ge 2 \log n$, and for $\turnix < n/2$, the term ${\turnix \cdot 2^{-k + 2} \cdot P_1^{-1}} \approx \turnix \cdot e^{1/2} / 2^{k-2} = o(1)$.
	We conclude
	\[
	\Pr[\jthsubseth \text{ yields a correct guess} | \jthsubseth \text{ yields a reasonable guess}]
	\ge
	{1 \over \turnix + o(1)}.
	\]
\end{proof}

The functions $\{\kwisefunc_1, \dots, \kwisefunc_\hnum\}$ are chosen independently of each other, and hence so are the $j$th subsets.
As a result, since the \namecref{claim_amplified_random_subset_correct_probability} holds for an arbitrary $\jthsubseth$, it implies that for the first subset that yields a reasonable guess, the probability that its guess is correct is at least ${1 \over \turnix + o(1)}$.

Recall the indicator random variable $\reasonableRV_\turnix$ that corresponds to the event that some subset yields a reasonable guess when $\turnix$ cards are left.
Let the indicator random variable $\correctsRV_\turnix$ be the event of a correct guess when $\turnix$ cards are left.
We conclude that the expected number of correct guesses since turn $n/2$ is at least
\begin{flalign}
	\sum_{\turnix = {n/2 -1}}^1
	\expected[\correctsRV_{\turnix}]
	&
	=
	\sum_{\turnix = {n/2 -1}}^1
	\Pr[\reasonableRV_{\turnix}= 1] \cdot \expected[\correctsRV_{\turnix} | \reasonableRV_{\turnix}= 1]
	+
	\Pr[\reasonableRV_{\turnix}= 0] \cdot \expected[\correctsRV_{\turnix} | \reasonableRV_{\turnix}= 0]
	\label[equa]{amplified_random_subsets:total_expectation}
	\\
	&
	=
	\sum_{\turnix = {n/2 -1}}^1
	\Pr[\reasonableRV_{\turnix}= 1] \cdot \expected[\correctsRV_{\turnix} | \reasonableRV_{\turnix}= 1]
	\label[equa]{amplified_random_subsets:non_reasonable_guess_cannot_be_correct}
	\\
	&
	>
	\sum_{\turnix = {n/2 -1}}^1
	(1-\ampliparam) \cdot {1 \over \turnix + o(1)}
	\label[ineq]{amplified_random_subsets:probability_or_reasonable_guess_to_be_correct}
	\\
	&
	\approx
	(1-\ampliparam) \ln (n/2)
	\nonumber
	.
\end{flalign}
Where
\Cref{amplified_random_subsets:total_expectation} is from law of total expectation,
\Cref{amplified_random_subsets:non_reasonable_guess_cannot_be_correct} is true since a correct guess is necessarily reasonable,
\Cref{amplified_random_subsets:probability_or_reasonable_guess_to_be_correct} follows from \Cref{claim_amplified_random_subset_correct_probability} and from \Cref{claim_amplified_random_subset_reasonable_probability}.

In term of memory, we allocate $2 \log n$ bits per subset, and a total $2 \log^2 n$ per collection of subsets.
To track $\hnum$ collections a Guesser requires $O(\log(1/\ampliparam) \log^2 n)$ bits of memory.

Sampling a function from a $(2\log n)$-wise independent family of functions requires $2\log^2 n$ bits of long lasting randomness, e.g., consider the randomness as representing coefficients of a polynomial over a finite field.
To sample $\hnum$ such functions, the Guesser requires $O(\log(1/\ampliparam) \log^2 n)$ bits of randomness.
Furthermore, the Guesser is computationally efficient:
In each turn $O(1/\delta)$ functions are computed and this many subsets are updated.

\begin{theorem}
	There exists a Guesser with $O(\log(1/\ampliparam) \log^2 n)$ bits of memory and with $O(\log(1/\ampliparam) \log^2 n)$ bits of randomness that scores $(1-\ampliparam) \ln n$ correct guesses on expectation against any static Dealer.
\end{theorem}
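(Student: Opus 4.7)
The plan is to instantiate the Amplified Random-Subsets Guesser (\Cref{alg_random_subsets_amplified}) with parameter $\ampliparam$ and independence $k = 2\log n$, and assemble the two lemmas already established in \Cref{sec_static_dealer:sec_random_subsets_guesser:sec_amplification} into a single statement. I would first sample $\hnum = O(\log(1/\ampliparam))$ functions $\kwisefunc_1,\ldots,\kwisefunc_\hnum$ independently from a $(2\log n)$-wise independent family, and use each $\kwisefunc_i$ to induce a collection $\{S^{\kwisefunc_i}_j\}_{j=1}^{\log n}$ of subsets of $[n]$ by dyadic thresholding on the image. Each subset is tracked with $2\log n$ bits of memory as in the basic Random-Subsets Guesser, so a collection costs $2\log^2 n$ bits and $\hnum$ collections cost $O(\log(1/\ampliparam)\log^2 n)$ bits total. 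A polynomial representation of a $(2\log n)$-wise independent function over a finite field requires $O(\log^2 n)$ bits, so $\hnum$ such functions take $O(\log(1/\ampliparam)\log^2 n)$ bits of long-lasting randomness, matching the memory bound.

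Next I would analyze expected score in two halves of the game. In the first half ($n/2 \le t \le n$) I would reuse the analysis from \Cref{thm_random_subset_guesser_performance}, where the constant number of correct guesses obtained by guessing from a small random subset $A$ is absorbed into the $o(1)$ slack. For the second half ($1 \le t < n/2$), at turn $t$ set $j=\jtfloor$ and let $\reasonableRV_t$ be the indicator that at least one collection yields a singleton at level $j$, and $\correctsRV_t$ the indicator that the guess is correct. By \Cref{claim_amplified_random_subset_reasonable_probability}, $\Pr[\reasonableRV_t=1] \ge 1-\ampliparam$ (this is where the independence across the $\hnum$ collections and the $1/4$ per-collection reasonable-guess bound combine). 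Conditional on $\reasonableRV_t = 1$, the first collection that yields a singleton is independent of the other collections, so \Cref{claim_amplified_random_subset_correct_probability} gives $\Pr[\correctsRV_t=1\mid \reasonableRV_t=1] \ge 1/(t+o(1))$.

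By the law of total expectation, and noting that $\correctsRV_t = 0$ whenever $\reasonableRV_t = 0$, I would sum as in the calculation already displayed:
\begin{flalign*}
\sum_{t=1}^{n/2-1} \expected[\correctsRV_t]
\;=\;
\sum_{t=1}^{n/2-1} \Pr[\reasonableRV_t=1]\cdot \expected[\correctsRV_t\mid \reasonableRV_t=1]
\;\ge\;
(1-\ampliparam)\sum_{t=1}^{n/2-1} \frac{1}{t+o(1)}
\;\ge\; (1-\ampliparam)\ln n - O(1),
\end{flalign*}
absorbing the $\ln 2$ loss and first-half contribution into the $o(1)$ factor relative to $\ln n$. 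Together with the memory and randomness accounting from the previous paragraph, this yields the theorem.

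The main obstacle in this proof is already dispatched by \Cref{claim_amplified_random_subset_correct_probability}: one must ensure that a \emph{reasonable} guess produced by the first successful collection is \emph{correct} with probability essentially $1/t$, not just $\Omega(1/t)$, so that the sum telescopes to $(1-\ampliparam)\ln n$ rather than $c(1-\ampliparam)\ln n$ for some $c<1$. This is exactly where the step from pairwise to $(2\log n)$-wise independence is used, via Indyk's lemma (\Cref{prop_indyk_lemma}), to make the conditional probabilities $P_i = \Pr[A_{-i}\mid x_i\in \jthsubseth]$ nearly equal across $i$. The remaining delicate point is ensuring independence between ``which collection is first to yield a reasonable guess'' and the identity of the card inside its singleton; this is handled by the independence of the $\kwisefunc_i$'s, so the per-turn conditional probability of correctness can be read off from any single collection.
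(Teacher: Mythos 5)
Your proposal is correct and follows essentially the same route as the paper: it instantiates the Amplified Random-Subsets Guesser with $(2\log n)$-wise independent functions, invokes \Cref{claim_amplified_random_subset_reasonable_probability} for the $(1-\ampliparam)$ per-turn success probability and \Cref{claim_amplified_random_subset_correct_probability} (via Indyk's lemma) for the $1/(t+o(1))$ conditional correctness, and sums by the law of total expectation with the same memory and randomness accounting. No gaps.
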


\subsubsection{Low-memory Case}

The guessing techniques seen so far assumed that the Guesser has about $\log^2 n$ bits of memory.
But what can be done if $m$ is small, say $m << log^2 n$?
It is possible to fall back to the subset guessing technique and get $\ln m$ correct guesses in expectation.
That would work for both the random shuffle and the static cases (also for the adaptive).
But we can do better.

Our Guessers can pretend as if the domain is of size $2^{\sqrt{m}}$ and ignore all other cards!
In that case, the Following-Subsets guessing technique is expected to yield about $1/2 \log 2^{\sqrt{m}} = 1/2 \sqrt{m}$ correct guesses against the random-shuffle Dealer.
The Random-Subsets guessing technique is expected to yield about ${1 \over 4} \ln 2^{\sqrt{m}}$ correct guesses against a static Dealer, i.e.\ also $O(\sqrt{m})$.

So we get that for any $m$, a Guesser can score at least $O(\min\{\ln n, \sqrt{m}\})$ when playing against any static Dealer.

\subsection{Bounds on best possible Guesser against Random Dealer}
\label{sec_static_dealer:upper_bound}
We show that \emph{the guessers of the previous section are the best possible low memory guessers}, up to constants.

\begin{theorem}
	Any Guesser using $m$ bits of memory can get \emph{at most} $O(\min\{\ln n, \sqrt{m}\})$ correct guesses in expectation when playing against the random-shuffle Dealer.
\end{theorem}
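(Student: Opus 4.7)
The $O(\ln n)$ half is the trivial calculation: against the random-shuffle Dealer, the card drawn at the turn with $r$ remaining cards is uniform over those $r$ cards, so the Guesser wins with probability at most $1/r$; summing over $r=1,\ldots,n$ bounds the expected number of correct guesses by $H_n=O(\ln n)$. For the $O(\sqrt m)$ half I would split the $n$ turns at a cutoff $k=\lceil n/2^{\sqrt m}\rceil$ and bound each phase separately. The first $n-k$ turns contribute at most $H_n-H_k=\ln(n/k)+O(1)=O(\sqrt m)$ by the same calculation applied to the pool sizes $n,n-1,\ldots,k+1$. What remains is to bound the expected number of correct guesses in the last $k$ turns by $O(\sqrt m)$, which I would do via a compression argument following the template illustrated in \Cref{figure:low_memory_upper_bound_encoding}.

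Fix the Guesser's random bits so the Guesser becomes a deterministic $2^m$-state machine, and let $B=(d_{n-k+1},\ldots,d_n)$ be the ordered $k$-tuple of cards drawn in the last $k$ turns; under the random-shuffle distribution, $B$ is uniform over the set $D$ of ordered $k$-subsets of $[n]$, with entropy $\log|D|=\log(n!/(n-k)!)\ge k\log n-O(k^2/n)$. I would encode a realisation of $B$ by recording (i) the Guesser's memory state $M_{n-k}$ at the start of the last $k$ turns, costing $m$ bits; (ii) the set of turns on which the Guesser guessed correctly within the last $k$, at cost $\log\binom{k}{c}+O(\log k)$ bits, where $c$ is the number of correct guesses in that window; and (iii) the card drawn on each of the $k-c$ incorrect turns, stored as its label in $[n]$ at cost $\lceil\log n\rceil$ bits each. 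A decoder that knows the Guesser's random bits initialises the Guesser to the stored $M_{n-k}$ and marches through the last $k$ turns, computing the guess $g_t$ from the current state, setting $d_t=g_t$ on turns marked correct and otherwise reading $d_t$ from the stored list, then updating the state using $d_t$; after $k$ steps it has recovered $B$, so the encoding is injective on $D$.

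Using $\log\binom{k}{c}\le c\log(ek/c)$ and the entropy lower bound, the savings of this encoding against the entropy are at least $c\log(nc/ek)-m-O(k^2/n+\log k)$. By \Cref{thm_compression_larger_than_entropy}, the expected length of any prefix-free code is at least the entropy, so the expected savings are at most zero; the convexity of $x\mapsto x\log(nx/ek)$ combined with Jensen's inequality then yields
\[
\fE[c]\log\!\left(\frac{n\fE[c]}{ek}\right) \;\le\; m+O\!\left(\frac{k^2}{n}+\log k\right).
\]
For $k=\lceil n/2^{\sqrt m}\rceil$ the leading logarithm $\log(n/ek)$ equals $\Theta(\sqrt m)$, so once the error terms are absorbed this yields $\fE[c_{\text{last }k}]=O(\sqrt m)$, and adding the $O(\sqrt m)$ bound for the first $n-k$ turns completes the argument.

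The main technical obstacle I anticipate is uniform parameter tuning across the full range of $m\le (\log n)^2$: for $m$ near $(\log n)^2$ the Stirling correction $O(k^2/n)$ is harmless, but for significantly smaller $m$ the cutoff $k=n/2^{\sqrt m}$ is large and one must tighten the encoding of the incorrect cards---for instance, by also paying $\log\binom{n}{k}$ bits upfront for the unordered identity of $B$ and then using $\log(k-j+1)$ bits per incorrect turn as an index into the remaining pool, rebalancing the savings-per-correct accordingly. A secondary subtlety is ensuring that the Jensen step is applied in the regime $\fE[c]\ge ek/n$ (below which the target bound is already automatic), that the $O(\log k)$ prefix-free overhead for encoding the count $c$ and the correct-positions set is absorbed into $O(\sqrt m)$, and that \Cref{lemma_compresion_by_d_bits} applies to the specific random variable $B$---which it does, since $B$ is marginally uniform over ordered $k$-subsets of $[n]$ by the random-shuffle assumption.
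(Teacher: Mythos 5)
Your overall strategy is exactly the paper's: split at a cutoff $k\approx n e^{-\Theta(\sqrt m)}$, charge the first $n-k$ turns to the harmonic sum, and bound the last $k$ turns by a compression argument whose codeword consists of the memory state at turn $n-k$, the positions of the correct guesses, and the cards drawn on incorrect turns. However, as written the argument has two genuine gaps in the regime where the $O(\sqrt m)$ bound is actually needed ($m\lesssim \log^2 n$). First, encoding each incorrect card with $\lceil\log n\rceil$ bits ignores distinctness and therefore overshoots the entropy $\log\bigl(n!/(n-k)!\bigr)$ by $\Theta(k^2/n)$ bits; with $k=n/2^{\sqrt m}$ this slack is $n\cdot 4^{-\sqrt m}$, which is \emph{polynomially large in $n$} for every $m\le \tfrac14\log^2 n$, so the inequality $\fE[c]\log(n\fE[c]/ek)\le m+O(k^2/n+\log k)$ becomes vacuous precisely where you need it. Your proposed repair (pay $\log\binom{n}{k}$ upfront for the unordered set, then index incorrect turns into the shrinking pool) does not recover the bound: once the set is paid for, a correct guess at a turn with $r$ cards remaining in the pool saves only about $\log r$ bits of \emph{ordering} information, and correct guesses are overwhelmingly likely to occur when $r$ is small, so the savings no longer cover the $m$-bit cost of the memory state. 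The correct fix, and the paper's choice, is to encode the incorrect cards jointly as an \emph{ordered subset without repetition}, costing exactly $\log\bigl(n(n-1)\cdots(n-k+c+1)\bigr)$ bits; then each correct guess saves at least $\log\bigl((n-k)/k\bigr)=\Theta(\sqrt m)$ bits with no error term, and the whole calculation reduces to $(n-k)^{c}\le k^{c}2^{m+1}$. (Relatedly, the $O(\log k)$ overhead for transmitting $c$ is not $O(\sqrt m)$ per unit of savings when $m=o(\log n)$; the paper sidesteps this by fixing a threshold $\alpha$ in advance, prepending a single indicator bit, and converting the resulting tail bound of \Cref{lemma_compresion_by_d_bits} into $\fE[c]\le\alpha_0+2$ rather than using Jensen.)

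The second gap is that your encoder is not a well-defined function of $B$: the memory state $M_{n-k}$ and the set of correct turns depend on the \emph{unencoded} first $n-k$ cards, so a given $B$ does not determine a unique codeword, and neither \Cref{thm_compression_larger_than_entropy} nor \Cref{lemma_compresion_by_d_bits} applies directly. Conditioning on the prefix does not help, since given the prefix the underlying set of $B$ is determined and its entropy drops to $\log k!$. The paper resolves this by associating to each pair $(B,\gamma)$ the prefix arrangement $\pi_{B,\gamma}$ that \emph{maximizes} the number of correct guesses in the last $k$ turns, encoding relative to that fixed static Dealer, and observing that the expectation over a uniform prefix is dominated by the value at the maximizer. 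With that device and the ordered-subset encoding above, your outline matches the paper's proof.
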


Our proof will use compression argument.
We will present an encoding scheme that utilizes correct guesses to achieve shorter descriptions.
As the expected length of the description is bounded by the entropy of a random input, we get an upper bound on the expected number of correct guesses for every memory bounded Guesser.
Our proof for an adaptive Dealer (\Cref{section_upper_bound}) will follow a similar structure.

Let $\guesserrandomness$ be the Guesser's randomness and  $\permutation$ be the shuffle sampled by the Dealer's randomness.
Denote by \guesserfixed a Guesser with fixed randomness $\guesserrandomness$, and \dealerstaticvanilla a static Dealer with a deck arranged according to $\permutation$.
Let $\gameval(\text{\dealerstaticvanilla, \guesserfixed})$ be the number of correct guesses during the last~$\orderedsetsize = n^{1-\orderedsetsizeparam}$ turns, for some $\orderedsetsizeparam > 0$.
Let a random variable $\correctsRV = \gameval(\dealersymbol, \text{\guesser})$ and let $\correctsexpected$ be the expected number of correct guesses during that last $\orderedsetsize$ turns where the expectation is taken over Guesser's and Dealer's randomness, i.e.
\[
\correctsexpected
=
\expected_{\guesserrandomness, \permutation}\left[\correctsRV \right]
=
\expected_{\guesserrandomness, \permutation}\left[\gameval(\text{\dealerstaticvanilla, \guesserfixed})\right]
.
\]

\newcommand{\shuffleset}{\Pi}

Denote by $\shuffleset_\orderedset$ the set of all deck arrangements such that the last $\orderedsetsize$ cards in the deck are the ordered set~$\orderedset$.
So we can consider the expectation over the choice of the last $\orderedsetsize$ cards.

\[
\correctsexpected
=
\expected_{\guesserrandomness, \permutation}\left[\correctsRV \right]
=
\expected_{\guesserrandomness, \orderedset} \expected_{\permutation \in \Pi_\orderedset}\left[\correctsRV \right]
=
\sum_\guesserrandomness \expected_\orderedset \expected_{\permutation \in \Pi_\orderedset} \left[ \correctsRV | \guesserrandomness \right] \cdot \Pr \left[\guesserrandomness\right]
.
\]
In particular, we focus on bounding the term
\[
\expected_\orderedset \expected_{\permutation \in \Pi_\orderedset} \left[ \correctsRV | \guesserrandomness \right]
=
\expected_\orderedset \expected_{\permutation \in \Pi_\orderedset} \left[\gameval(\text{\dealerstaticvanilla, \guesserfixed}) | \guesserrandomness \right]
.
\]

We claim that no Guesser can expect to guess correctly too many times at the last $\orderedsetsize$ turns.
We prove this by presenting an encoding scheme for ordered sets $\orderedset$ (the last $\orderedsetsize$ cards played by the Dealer) that utilizes correct guesses to achieve shorter descriptions.
The encode function works by simulating the Guesser on a deck of card, where the first $n-\orderedsetsize$ cards are from $[n] \backslash \orderedset$ and the $\orderedsetsize$ cards are ordered according to $\orderedset$.
Record the Guesser's memory ($\guessermemorysymbol$ bits) after the first $n-\orderedsetsize$ turns and from that point on see when the Guesser gives correct guesses.
These can be used to help describe $\orderedset$.
Let the number of correct guesses be $\correctsRV$.
If $\correctsRV \ge \correctsnum$ for some $\correctsnum > 0$, then to record $\orderedset$, we note the location of some $\correctsnum$ places with a correct guess and provide the remaining $\orderedsetsize-\correctsnum$ missing values.
So how many possibilities do we have?
For the memory $2^m$, for the correct guesses locations $\orderedsetsize \choose \correctsnum$ and for the other values an ordered set of size $\orderedsetsize-\correctsnum$ out of~$n$.

\newcommand{\bestpermutationforguesser}{{\permutation_{\orderedset, \guesserrandomness}}}
Recall that $\shuffleset_\orderedset$ is the set of all deck arrangements for which the last~$\orderedsetsize$ cards are the ordered set~$\orderedset$.
The order of the first $n-\orderedsetsize$ cards may lead the Guesser to different memory states; in terms of correct guesses, some of which may be more beneficial then others, especially for a Guesser with fixed randomness.
Given an ordered set $\orderedset$ and Guesser's randomness $\guesserrandomness$, let $\bestpermutationforguesser \in \shuffleset_\orderedset$ be the deck arrangement for which the Guesser \guesserfixed makes the most correct guesses in the last $\orderedsetsize$ turns.
That is
$$
\forall \pi \in \Pi_B\colon
\gameval(\text{\dealerstaticvanilla, \guesserfixed})
\le
\gameval(\text{\dealerstatic, \guesserfixed})
.
$$

The encoding function will simulate a game against a static Dealer with fixed deck order~$\bestpermutationforguesser$ to encode $\orderedset$.
Fix some prefix free code (\Cref{def_prefix_free_code}) for ordered subsets.
The scheme will use this code for the cases where there are not enough correct guesses to utilize.

\begin{definition}[$\encodefuncupper$]
	To encode $\orderedset$, an ordered subset of $[n]$ of size $\orderedsetsize$, the function $\encodefuncupper$ records and simulates a game between the Guesser~\guesserfixed and the static Dealer \dealerstatic.

Let $\turnssetsymbol'$ be the set of locations during the last $\orderedsetsize$ turns at which \guesserfixed makes a correct guess, i.e.\
	$$
	\turnssetsymbol' = \{n-\currentepochbegin \le \turnix \le n -\currentepochbegin+\currentlength-1 | g_\turnix \text{ is a correct guess}\}.
	$$
	
	\begin{itemize}
		\item
			If $|\turnssetsymbol'| < \correctsnum$ then the code is made of an indicator bit $0$ and an explicit prefix-free description of $\orderedset$.
		\item
			If $|\turnssetsymbol'| \ge \correctsnum$ then let $\turnssetsymbol$ be the first $\correctsnum$ turns at which the Guesser guessed correctly.
			
			The code is made of:
			\begin{enumerate}
				\item
					An indicator bit $1$.
				\item
					Guesser's memory state $\memorystatesymbol$ at turn $n-\orderedsetsize$ ($m$ bits).
				\item
					Description of $\turnssetsymbol$, the locations of the first $\correctsnum$ correct guesses made by \guesserfixed during the last~$\orderedsetsize$ turns ($\log {\orderedsetsize \choose \correctsnum}$ bits).
				\item
					Description of $\orderedset \setminus \{g_\turnix | \turnix \in \turnssetsymbol\}$  ($\log\left( n(n-1)\dots(n-\orderedsetsize+\correctsnum+1)\right)$ bits).
			\end{enumerate}
	\end{itemize}
\end{definition}
A visualization of the information stored is provided in \Cref{figure:low_memory_upper_bound_encoding_full}.

\begin{figure}[h]
	\begin{mdframed}	
		\resizebox{\textwidth}{!}{%
			\tikzmath{
	\ampparam = 1.4;
	\p1 = 2.; 
	\p2 = 3.; 
	\p3 = 4.2;
	\w1 = 0.1; 
	\correctsx = 3.5;
	\correctsy = 1;
	% Computations are also possible
	\memory = 1.0;
} 

\begin{tikzpicture}[
	x=3cm,
	circ/.style={shape=circle, inner sep=3pt, draw, fill=white,node contents=}
	]
	\draw[thick,->] (0,0) node[anchor=north] {first turn} -- (5,0) node[anchor=north] {last turn};
	
	\draw[thick, color=BlueViolet,->] (\memory,1) node[anchor=south] {Guesser's memory state $\memorystatesymbol$ at turn $n-\orderedsetsize$} -- (\memory,0);
	
	\draw[line width=0.5mm, color=purple,->] (\memory,0) -- (5,0) node[midway,anchor=north] {Cards drawn during incorrect guesses $\orderedset \setminus \{g_\turnix | \turnix \in \turnssetsymbol\}$};
	
	\draw node (p1) at (\p1, 0) [circ];
	\draw node (p2) at (\p2, 0) [circ];
	\draw node (p3) at (\p3, 0) [circ];
	
	\draw node[anchor=south,thick, inner sep=0pt, color=OliveGreen,"\textcolor{OliveGreen}{Correct guesses locations $\turnssetsymbol$}"] (corrects) at (\correctsx,\correctsy) {};
	
	\draw[thick, color=OliveGreen,->] (corrects) -- (p1);
	\draw[thick, color=OliveGreen,->] (corrects) -- (p2);
	\draw[thick, color=OliveGreen,->] (corrects) -- (p3);
	
	\pic [draw, dashed, color=OliveGreen, "$\correctsnum$", angle eccentricity=1.3] {angle = p1--corrects--p3};
	
	%	\draw[line width=0.75mm, color=OliveGreen,-] (\p1-\w1,0) -- (\p1+\w1,0);
	%\draw[line width=0.75mm, color=OliveGreen,-] (\p2-\w1,0) -- (\p2+\w1,0);
	%\draw[line width=0.75mm, color=OliveGreen,-] (\p3-\w1,0) -- (\p3+\w1,0);
	\node[draw] at (4.4,-1.) {Not drawn to scale};
	
\end{tikzpicture}
		}
		\caption{$\encodefuncupper$. Colored - information stored by the encoding scheme.}
		\label{figure:low_memory_upper_bound_encoding_full}
	\end{mdframed}
\end{figure}

Similarly we define the decode function.
\begin{definition} [$\decodefuncupper$]
	If the indicator bit is $0$, then decode the set in the natural way.
	If the indicator bit is $1$, then parse the other bits as a $3$-tuple $(\memorystatesymbol,  \turnssetsymbol, \orderedsetexplicit)$ as encoded by $\encodefuncupper$.
	The function $\decodefuncupper$ works by simulating and recording a partial game between Dealer \dealerstaticdec and Guesser \guesserfixed:
	\begin{itemize}
		\item
			Initialize the Guesser \guesserfixed with memory state $\memorystatesymbol$ at turn $n-\orderedsetsize$ and simulate $\orderedsetsize$ turns against the Dealer~\dealerstaticdec.
		\item
			If in turn $n-\orderedsetsize \le \turnix \le n$ the guess $g_\turnix$ is tagged as correct (by $\turnssetsymbol$), then \dealerstaticdec draws the card~$g_\turnix$, otherwise \dealerstaticdec draws the next card from $\orderedsetexplicit$.
		\item
			Output the set of cards drawn by Dealer \dealerstaticdec in the order they were drawn.
	\end{itemize}
\end{definition}
A procedural description of $\decodefuncupper$ is specified in \Cref{alg_decode_static}.

\begin{algorithm}
	\caption{$\decodefuncupper$}
	\label{alg_decode_static}
	\begin{algorithmic}
		\Parameter
			$\orderedsetsize \in [n]$,
			$\guesserrandomness$
		\Input
			$
			\memorystatesymbol \in \zo^\guessermemorysymbol,
			\turnssetsymbol \in {[\orderedsetsize] \choose \correctsnum},
			\orderedsetexplicit \text{ an ordered subset of $[n]$ of size $\orderedsetsize - \correctsnum$}
			$
		\State
			Initialize Guesser \guesserfixed at turn $n-\orderedsetsize$ with memory state $\memorystatesymbol$.
		\State
			$\orderedsetdec \gets \emptyset$
		\For{$\turnix \in \{n-\orderedsetsize, \dots, n\}$}
			\State
				$g_\turnix \gets$ guess made by Guesser \guesserfixed
			\If {$g_\turnix$ is tagged as correct (according to $\turnssetsymbol$)}
				\State
					$d_\turnix \gets g_\turnix$
			\Else
				\State
					$d_\turnix \gets$ next card from $\orderedsetexplicit$
			\EndIf
			\State
				Append $d_\turnix$ to $\orderedsetdec$
			\State
				Update \guesserfixed memory state according to $d_\turnix$
		\EndFor
		\State
			\Return $\orderedsetdec$
	\end{algorithmic}
\end{algorithm}
We assume that $\guesserrandomness$ is given to us ``for free'' and is known during encoding and decoding of the ordered set~$\orderedset$.
We justify this assumption in two different ways:
\begin{itemize}
	\item
		Fixing $\guesserrandomness$ we can consider a specific encoding scheme for ordered sets $\encodefuncupper$.
	\item
		We can assume that we encode a pair $(\guesserrandomness, \orderedset)$ where $\guesserrandomness$ is written explicitly in some natural way right next to $\encodefuncupper(\orderedset)$.
\end{itemize}

\begin{claim}
	\label{lemma_is_encoding_scheme_static}
	For every ordered set $\orderedset$: $$\decodefuncupper(\encodefuncupper(\orderedset)) = \orderedset.$$
\end{claim}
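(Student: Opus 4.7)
The plan is to prove correctness by case analysis on the indicator bit, and in the non-trivial case, by induction over the turns $\turnix = n-\orderedsetsize, n-\orderedsetsize+1, \ldots, n$.

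First, I would dispose of the easy case: if the indicator bit is $0$, then $\encodefuncupper(\orderedset)$ is (by definition) a complete prefix-free description of $\orderedset$, so the decoder simply parses this description and returns $\orderedset$. The prefix-free property (\Cref{def_prefix_free_code}) ensures that this case is unambiguously distinguishable from the other one.

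For the indicator bit $1$ case, the key observation is that the encoder and the decoder can be coupled: both simulate the same game against the same Dealer \dealerstatic with the same fixed-randomness Guesser \guesserfixed, starting from the same state $\memorystatesymbol$ at turn $n-\orderedsetsize$. I would prove by induction on $\turnix$ that for every $n - \orderedsetsize \le \turnix \le n$, the memory state of \guesserfixed during decoding coincides with its memory state during the original (encoded) simulation, and consequently that the card $d_\turnix$ appended by $\decodefuncupper$ equals the $\turnix$-th card of \bestpermutationforguesser{}. The base case $\turnix = n-\orderedsetsize$ is immediate since the encoder explicitly stored $\memorystatesymbol$. For the inductive step, the Guesser's determinism (given fixed $\guesserrandomness$ and identical state) yields the same guess $g_\turnix$ in both simulations; then either $\turnix \in \turnssetsymbol$ (so the decoder sets $d_\turnix = g_\turnix$, which matches the original correct guess), or $\turnix \notin \turnssetsymbol$ (so the decoder consumes the next element of $\orderedsetexplicit$, which is by construction the card that was drawn at that non-correct-guess turn, preserving order because $\orderedsetexplicit$ was written out in turn order). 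In both sub-cases $d_\turnix$ matches, so the state transition of the Guesser is identical and the induction proceeds.

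Finally, since the last $\orderedsetsize$ cards drawn by \dealerstatic constitute exactly the ordered set $\orderedset$ (by definition of $\bestpermutationforguesser \in \shuffleset_\orderedset$), the sequence $\orderedsetdec$ returned by $\decodefuncupper$ equals $\orderedset$. The only potentially subtle point is making sure that $\orderedsetexplicit$ is consumed in the correct order in the inductive step; this follows because the encoder defined $\orderedsetexplicit = \orderedset \setminus \{g_\turnix \mid \turnix \in \turnssetsymbol\}$ as an \emph{ordered} subset (in the natural order of turns), and the decoder reads from it in the same order. No step requires a delicate calculation; the entire argument is a coupling/simulation bookkeeping proof, so the main care needed is just in stating the invariant cleanly.
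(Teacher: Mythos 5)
Your proof is correct and takes essentially the same route as the paper's: the paper's argument is simply that, with the Guesser's randomness fixed, the encoder's and decoder's simulations produce the same game transcript, so the decoder's Dealer draws the same cards in the same order. Your explicit turn-by-turn induction (matching memory states, hence matching guesses, hence matching drawn cards in both the tagged-correct and untagged sub-cases) is just a more detailed write-up of that same coupling argument.
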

\begin{proof}
	If $\encodefuncupper(\orderedset)$ encodes $\orderedset$ explicitly, then this is trivial.
	Otherwise, both $\encodefuncupper$ and $\decodefuncupper$ works by simulating a game between a Guesser and a Dealer.
	Since the Guesser's randomness is fixed, the same game transcript is simulated in both simulations.
	Therefore, the decoder simulated Dealer \dealerstaticdec draws the same cards in the same order and thus recovers the same ordered set.
\end{proof}

\begin{claim}
	The code produced by $\encodefuncupper$ is prefix-free.
\end{claim}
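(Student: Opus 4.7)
The plan is to verify prefix-freeness by cases based on the leading indicator bit, and within each case show that the remaining code portion is itself prefix-free.

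First, I would observe that the indicator bit separates the two sub-codes: any codeword beginning with $0$ cannot be a prefix of (nor be prefixed by) one beginning with $1$, so it suffices to show that each of the two sub-families is prefix-free on its own. For the indicator bit $0$ case, the remainder of the codeword is exactly the description of $\orderedset$ under the fixed prefix-free code for ordered subsets, which is prefix-free by the choice of that code.

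For the indicator bit $1$ case, the remainder of the codeword is a concatenation of three fields whose lengths depend only on the parameters $n$, $\orderedsetsize$, $\correctsnum$, and $\guessermemorysymbol$, all of which are fixed when $\encodefuncupper$ is fixed: the memory state occupies exactly $\guessermemorysymbol$ bits, the set $\turnssetsymbol \in \binom{[\orderedsetsize]}{\correctsnum}$ is encoded in exactly $\lceil \log \binom{\orderedsetsize}{\correctsnum}\rceil$ bits using any canonical bijection with $\{0,1,\dots,\binom{\orderedsetsize}{\correctsnum}-1\}$, and the residual ordered set is encoded in exactly $\lceil \log(n(n-1)\cdots(n-\orderedsetsize+\correctsnum+1))\rceil$ bits via a canonical bijection with the ordered subsets of $[n]$ of size $\orderedsetsize-\correctsnum$. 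Thus every codeword in this case has the same total length, so no one of them can be a proper prefix of another.

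Combining the two cases, no codeword is a prefix of any other, which establishes that the image of $\encodefuncupper$ forms a prefix-free code. The only subtlety worth flagging is that each of the fixed-length fields must indeed be encoded at its declared length (padding if necessary); this is purely a presentational choice and the lengths recorded in the definition of $\encodefuncupper$ already account for it.
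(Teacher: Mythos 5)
Your proof is correct and follows essentially the same route as the paper's: split on the indicator bit, note that the two sub-families cannot prefix one another, use the fixed prefix-free code in the indicator-$0$ case, and observe that all indicator-$1$ codewords have identical length. The paper additionally invokes the decoding claim $\decodefuncupper(\encodefuncupper(\orderedset)) = \orderedset$ to rule out coinciding codewords, but under \Cref{def_prefix_free_code} (which concerns the \emph{set} of codewords) your fixed-length argument already suffices.
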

\begin{proof}
	Let two ordered sets $\orderedset, \orderedset'$ and denote by $I$ the first bit of $\encodefuncupper(\orderedset)$ and $I'$ the first bit of $\encodefuncupper(\orderedset')$.
	Observe that if $I \neq I'$ then the two descriptions cannot be a prefix of one another, and if $I = I'$ then the two descriptions are of the same length.
	To prove that the encoding scheme produce a prefix-free code, it suffice to show that for every $\orderedset$ it holds that $\decodefuncupper(\encodefuncupper(\orderedset)) = \orderedset$.
	By \Cref{lemma_is_encoding_scheme_static}, this is indeed the case so the \namecref{lemma_is_encoding_scheme_static} follows.
\end{proof}

\begin{corollary}
	If \guesserfixed makes $\correctsRV \ge \correctsnum$ correct guesses in the last $\orderedsetsize$ turns when playing against the static Dealer \dealerstatic then $\encodefuncupper(\orderedset)$ is of length
	$$
		\log \left( 2 \cdot 2^\guessermemorysymbol \cdot {\orderedsetsize \choose \correctsnum} \cdot n(n-1) \cdots  (n-\orderedsetsize+\correctsnum+1)  \right)
	$$
\end{corollary}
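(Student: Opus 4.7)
The plan is to verify this by directly counting the bits contributed by each of the four fields of $\encodefuncupper$'s output under the second (``long'') branch of the definition, since the hypothesis $\correctsRV \ge \correctsnum$ is precisely what forces that branch to be taken.

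First I would observe that under the game between \guesserfixed and \dealerstatic, the set $\turnssetsymbol'$ of correct-guess turns in the last $\orderedsetsize$ turns satisfies $|\turnssetsymbol'| = \correctsRV \ge \correctsnum$, so $\encodefuncupper(\orderedset)$ is produced by the ``$|\turnssetsymbol'| \ge \correctsnum$'' case. In that case the output is the concatenation of four fields whose lengths are read off from the definition: (i) the indicator bit $1$, contributing $\log 2 = 1$ bit; (ii) the memory state $\memorystatesymbol \in \zo^\guessermemorysymbol$, contributing $\log 2^\guessermemorysymbol = \guessermemorysymbol$ bits; (iii) a description of $\turnssetsymbol$, the chosen subset of size $\correctsnum$ out of the $\orderedsetsize$ possible turn indices, contributing $\log \binom{\orderedsetsize}{\correctsnum}$ bits using the canonical index of an element of $\binom{[\orderedsetsize]}{\correctsnum}$; and (iv) a description of the ordered sequence $\orderedset \setminus \{g_\turnix \mid \turnix \in \turnssetsymbol\}$, which is an ordered sequence of $\orderedsetsize - \correctsnum$ distinct elements from $[n]$, and for which the naive indexing uses at most $\log\bigl( n(n-1)\cdots(n-\orderedsetsize+\correctsnum+1)\bigr)$ bits.

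Summing these four lengths pulls them under a single $\log$ as a product, yielding
\[
1 + \guessermemorysymbol + \log \binom{\orderedsetsize}{\correctsnum} + \log \bigl(n(n-1)\cdots(n-\orderedsetsize+\correctsnum+1)\bigr) = \log\!\left(2 \cdot 2^\guessermemorysymbol \cdot \binom{\orderedsetsize}{\correctsnum} \cdot n(n-1) \cdots (n-\orderedsetsize+\correctsnum+1)\right),
\]
which is exactly the claimed bound. There is no real obstacle here: the corollary is a bookkeeping consequence of the encoding definition, and the only thing one has to be a little careful about is that field (iv) is counted against the full falling factorial from $n$ rather than from $n - \correctsnum$ (which would be tighter); the statement as written uses the looser count, so this is the straightforward inequality to record.
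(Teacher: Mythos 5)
Your proof is correct and matches the paper's (implicit) argument exactly: the corollary is just the sum of the field lengths already annotated in the definition of $\encodefuncupper$, combined under a single logarithm. Your side remark that the last field is counted against the falling factorial from $n$ rather than the tighter $n-\correctsnum$ is also consistent with how the paper states the bound.
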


So we get that the encoding scheme saves bits for every correct guess while ``paying'' only~$m$ bits of memory.
The contradiction comes from counting the number of the ordered sets $\orderedset$ in two different ways:
\begin{itemize}
	\item
		$n(n-1) \cdots (n-\orderedsetsize+1)$ are all the possible options for ordered set $\orderedset$,
	\item
		and ${\orderedsetsize \choose \correctsnum} \cdot n(n-1) \cdots  (n-\orderedsetsize+\correctsnum+1) 2^{\guessermemorysymbol + 1}$ - upper bound on the possible options for ordered set $\orderedset$ according to the encoding.
\end{itemize}
So we have
\begin{flalign*}
	n(n-1) \cdots (n-\orderedsetsize+1)
	&
	\le
	{\orderedsetsize \choose \correctsnum} \cdot n(n-1) \cdots  (n-\orderedsetsize+\correctsnum+1) 2^{m + 1}
	\\
	\therefore
	(n-\orderedsetsize+\correctsnum)(n-\orderedsetsize+\correctsnum-1) \cdots (n-\orderedsetsize+1)
	&
	\le
	{\orderedsetsize \choose \correctsnum} \cdot 2^{m+1}
	\\
	\therefore
	(n-\orderedsetsize)^\correctsnum
	&
	\le
	\orderedsetsize^\correctsnum \cdot 2^{m+1}
\end{flalign*}
Taking logs we get
$$
\correctsnum
\leq
\frac{m+1}{\ln (n-\orderedsetsize) - \ln \orderedsetsize}
\approx
\frac{1}{\orderedsetsizeparam} \cdot \frac{m+1} {\ln n}. $$

As the code is prefix free, and from \Cref{lemma_compresion_by_d_bits}, we get that the probability over the choice of $\orderedset$ for any correct guess beyond $\frac{m+1} {\orderedsetsizeparam \ln n}$ drops exponentially, so the expected number of correct guesses cannot be larger than that.
By the above, we get that
$$
\expected_\orderedset [\gameval(\text{\dealerstatic, \guesserfixed}) | \guesserrandomness]
\le
{\guessermemorysymbol +1 \over \orderedsetsizeparam \ln n} + 2
.
$$
Recall that $\bestpermutationforguesser$ is the deck arrangement that ends with $\orderedset$ for which the guesser \guesserfixed makes the most correct guesses in the last $\orderedsetsize$ turns.
It follows that
$$
\expected_\orderedset \expected_{\pi \in \Pi_\orderedset} [\gameval(\text{\dealerstaticvanilla, \guesserfixed}) | \guesserrandomness]
\le
\expected_\orderedset [\gameval(\text{\dealerstatic, \guesserfixed}) | \guesserrandomness]
$$

Therefore, the above term upper bounds the expected number of correct guesses over the choice of~$\orderedset$, i.e.\
\begin{flalign*}
	\expected_{\permutation} [C | \guesserrandomness]
	=
	\expected_{\permutation} [\gameval(\dealersymbol_{\permutation}, \text{\guesserfixed}) | \guesserrandomness]
	=
	\expected_\orderedset \expected_{\permutation \in \Pi_\orderedset} [\gameval(\dealersymbol_{\permutation}, \text{\guesserfixed}) | \guesserrandomness]
	\le	
	\expected_\orderedset [\gameval(\text{\dealerstatic, \guesserfixed}) | \guesserrandomness]
	\le
	{\guessermemorysymbol +1 \over \orderedsetsizeparam \ln n} + 2
	.
\end{flalign*}
Since the expected number of correct guesses over the randomness of both the Guesser and the Dealer, is a convex combination of the above, we conclude that the expected number of correct guesses in the last $\orderedsetsize$ turns is at most
$$
	\correctsexpected = \expected_{\guesserrandomness, \permutation} [\correctsRV] \le {\guessermemorysymbol +1 \over \orderedsetsizeparam \ln n} + 2.
$$

Now, consider the expected number of correct guesses throughout the game, where the expectation is over the deck shuffle and the Guesser's randomness.
Suppose that the Guesser is perfect in the first $n-\orderedsetsize$ steps, in the sense that all the guesses are reasonable.
Then the expected number of correct guesses in the first turns is $H_n-H_\orderedsetsize = \orderedsetsizeparam \ln n$.
So we get that the total number of correct guesses is not expected to be better than $$\orderedsetsizeparam \ln n + \frac{m+1} {\orderedsetsizeparam \ln n} + 2.$$
Taking the best $\orderedsetsizeparam$ to be $\sqrt{m+1}/\ln n$, we get that this is not better than $2\sqrt{m+1} + 2$.

Note that this bound still holds even if the Guesser has at it disposal a large amount of randomness that it can repeatedly access (i.e.\ storing the randomness is not charged to the memory).
So we conclude with tight bounds up to constants:
\begin{theorem}
	There is a Guesser using $m$ bits of memory that obtains $1/2 \min\{\log n, \sqrt{m}\}$ correct guesses in expectation against the random-shuffle Dealer and any Guesser using $m$ bits of memory can get \emph{at most} $O(\min\{\ln n, \sqrt{m}\})$ correct guesses in expectation.
\end{theorem}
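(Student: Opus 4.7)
The theorem packages together two results already essentially established in this section, so the plan is mostly one of assembly.

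\textbf{Construction side ($\geq 1/2 \min\{\log n, \sqrt{m}\}$).} The plan is to split on whether $m$ is larger or smaller than $\log^2 n + \log n$. When $m \geq \log^2 n + \log n$, the Following-Subsets Guesser of \Cref{sec_static_dealer:sec_following_subset_guesser} directly gives $1/2 \log n$ correct guesses in expectation against the random-shuffle Dealer, and we are done. When $m < \log^2 n + \log n$, I would invoke the low-memory fallback from the preceding subsection: set $n' = 2^{\lfloor \sqrt{m}\rfloor}$, restrict attention to cards in $[n']$ (treating every other drawn card as a ``skip''), and run the Following-Subsets Guesser on the reduced domain. The subset accumulators for the nested ranges $[1\text{--}2], [1\text{--}4], \ldots, [1\text{--}n']$ fit in $\log^2 n' + \log n' \leq m$ bits, and the key observation is that the sub-sequence of draws lying in $[n']$ is itself a uniformly random ordering of $[n']$, so the analysis of \Cref{sec_static_dealer:sec_following_subset_guesser} applies verbatim and yields $1/2 \log n' = 1/2 \lfloor \sqrt{m}\rfloor$ correct guesses in expectation.

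\textbf{Impossibility side ($\leq O(\min\{\ln n, \sqrt{m}\})$).} The $\ln n$ bound is immediate: when $\turnix$ cards remain, any Guesser (even one with unlimited memory) guesses correctly with probability at most $1/\turnix$ under the random-shuffle Dealer, so the expected total is at most $H_n = \ln n + O(1)$. For the $\sqrt{m}$ bound, the plan is simply to optimize the parameter $\orderedsetsizeparam$ in the compression argument carried out above in \Cref{sec_static_dealer:upper_bound}. That argument showed that for any $0 < \orderedsetsizeparam < 1$, the expected number of correct guesses in the last $\orderedsetsize = n^{1-\orderedsetsizeparam}$ turns is bounded by $\frac{m+1}{\orderedsetsizeparam \ln n} + 2$, while the first $n - \orderedsetsize$ turns can contribute at most $H_n - H_\orderedsetsize \approx \orderedsetsizeparam \ln n$ correct guesses (again by the perfect-memory bound applied to the prefix). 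Adding the two and setting $\orderedsetsizeparam = \sqrt{m+1}/\ln n$ gives the bound $2\sqrt{m+1} + 2 = O(\sqrt{m})$. Taking the minimum of the two bounds yields $O(\min\{\ln n, \sqrt{m}\})$.

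\textbf{Main obstacle.} The conceptually substantive step, the compression/encoding argument bounding correct guesses in the last $\orderedsetsize$ turns, has already been executed in full before the theorem statement, so no new technical work is required there. The only remaining subtlety is to make the low-memory fallback precise: one must verify that projecting a uniformly random permutation of $[n]$ onto its positions in $[n']$ yields a uniformly random permutation of $[n']$ (which is standard), and confirm that the ``skip'' turns can be implemented with no additional memory (they require only reading a card and updating the subset accumulators that already live in the Guesser's state). Once these bookkeeping points are checked, combining the two halves gives the tight-up-to-constants bound claimed.
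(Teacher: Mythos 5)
Your proposal is correct and follows essentially the same route as the paper: the constructive half is the Following-Subsets Guesser combined with the paper's low-memory fallback of restricting to a domain of size $2^{\sqrt{m}}$, and the impossibility half is exactly the optimization of $\orderedsetsizeparam$ in the compression bound $\orderedsetsizeparam \ln n + \frac{m+1}{\orderedsetsizeparam \ln n} + 2$ together with the trivial $H_n$ bound. The only nit is that $\log^2 n' + \log n'$ with $n' = 2^{\lfloor\sqrt{m}\rfloor}$ is $m + \sqrt{m}$ rather than $m$, so one should shave the domain slightly; this lower-order bookkeeping is glossed over in the paper as well and does not affect the stated constants.
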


The same impossibility result also stands against the static Dealer.
%, i.e., every Guesser that achieves $\Omega(\log n)$ correct guesses against a static Dealer, must have $\Omega( \log^2 n)$ bits of memory.
%To see this, assume towards contradiction that there exists a Guesser with $o(\log^2 n)$ memory bits that scores $\Omega(\log n)$ correct guesses in expectation against a static Dealer.
%That is, when the Dealer picks the worst case ordering for that Guesser, she still manages to make a good score.
%As a result, this Guesser scores well against any deck ordering chosen by the random-shuffle Dealer, in contradiction to our impossibility result.
%\begin{theorem}
%	There is a Guesser using $m$ bits of memory that obtains $1/4 \min\{\ln n, \sqrt{m}\}$ correct guesses in expectation against the static Dealer %and any Guesser using $m$ bits of memory can get \emph{at most} $O(\min\{\ln n, \sqrt{m}\})$ correct guesses in expectation.
%\end{theorem}

\section{Adaptive Dealer}
\label{section_upper_bound}
We show that for every $m$ there exists an adaptive Dealer \dealerm such that every Guesser with~$\guessermemorysymbol$ memory bits is expected to make at most $\result$ correct guesses when playing against Dealer \dealerm.

Our proof is similar in structure to that in \Cref{sec_static_dealer:upper_bound} in showing that a too successful Guesser can be used to compress a random set.
We present our ``\mttbdealer'' in \Cref{section_upper_bound:subsec_mttb_strategy}.
We describe an encoding scheme for unordered sets (\Cref{section_upper_bound:subsec_encoding_scheme}) that utilizes reasonable guesses made against our Dealer in order to achieve a shorter description of unordered sets.
In \Cref{section_upper_bound:subsec_reasonable_guesses_upper_bound} we show that having too many reasonable guesses implies compression, i.e.\ descriptions that are too short, and we get that the expected number of reasonable guesses is bounded.
By bounding the expected number of reasonable guesses, we bound the expected number of correct guesses (\Cref{section_upper_bound:subsec_correct_guesses_upper_bound}).
We analyze the performance of the entire Dealer, as a whole, in \Cref{section_upper_bound:subsec_mttb_dealer}.

In \Cref{section_upper_bound:subsec_universal_dealer} we show a \emph{universal} adaptive Dealer that doesn't know how much memory the Guesser has, against which any Guesser with $\guessermemorysymbol$ bits of memory can score at most $\resultUniversal$.

\subsection{\mttbdealer}
\label{section_upper_bound:subsec_mttb_strategy}
Consider a game between some memory bounded Guesser and a Dealer who selects cards adaptively in an adversarial manner.
Assume that at some turn the Guesser makes an incorrect guess.
This guess may be incorrect because the Guesser had no luck, but it may also be incorrect because that card was played earlier and the Guesser did not recall that.

The idea is to use the Guesser's past guesses against her, and by doing so, forcing the Guesser to keep track of both past guesses and cards drawn.
We achieve this by making incorrect available card guesses undrawable for some turns, i.e.\ ``moving cards to the back of the deck''.
The Dealer we present begins the game with a properly shuffled deck, similarly to the random-shuffle Dealer.
At a certain turn the Dealer begins to ``move cards to the back'' and every once in a while the Dealer reshuffles the deck, making undrawable cards available again.
Towards the end of the game our Dealer makes one last reshuffle and draws cards one by one.

\newcommand{\epoch}[2]{$({#1},{#2})$-epoch\xspace}
\newcommand{\epochkl}{\epoch{k}{\ell}}
\newcommand{\epochkli}{\epoch{k_i}{\ell}}
\newcommand{\currentepoch}{\epochkli}
\newcommand{\currentuniepoch}{\epoch{k_\epochix}{\ell_\epochix}}

\newcommand{\epochstraterm}{MtBE-strategy\xspace}
\newcommand{\epochstraterms}{MtBE-strategies\xspace}
\newcommand{\epochstra}[3]{$({#1},{#2},{#3})$-MtBE-strategy\xspace}
\newcommand{\epochstraklu}{\epochstra{k}{\ell}{\limit}}
\newcommand{\epochstraklui}{\epochstra{k_i}{\ell_i}{\limit_i}}
\newcommand{\currentepochstra}{\epochstra{k_i}{\ell}{\limit}}
\newcommand{\currentuniepochstra}{\epochstra{k_\epochix}{\ell_\epochix}{\limit_\epochix}}

\newcommand{\epochs}[4]{$({#1},{#2},{#3},{#4})$-sequence-of-epochs-strategy\xspace}

\paragraph{Epochs and the \epochstraterm:}
The span of turns between reshuffles is called an \textbf{epoch} In particular, for $k \in [n]$, $\ell \in [k]$, the span of~$\ell$ turns that begins when~$k$ cards are left, and ends when~$k-\ell +1$ cards are left, is called a \epochkl.
We refer to applying the strategy of ``moving cards to the back'' during a span of turns (epoch) by \textbf{MtBE-strategy} (which stands for Move-to-the-Back Epoch strategy).

%A visual representation of \epochkl is provided in \Cref{figure:epoch}.
\begin{figure}[h]
	\begin{mdframed}
		\begin{flushleft}
			The axis specifies the number of cards left in the deck.
		\end{flushleft}

		\resizebox{\textwidth}{!}{%
			\begin{tikzpicture}[x=3cm]
				\draw
				(0,0) node[circle,fill,inner sep=0.5mm,label=above:{$n$},alias=n] (n) {}
				--
				(5,0) node[circle,fill,inner sep=0.5mm,label=below:{$k$},alias=k] (k) {};
				
				\draw
				(5.5,0) node[circle,fill,inner sep=0.5mm,label=below:{$k-\ell$},alias=kell] (kell) {}
				--
				(6,0) node[circle,fill,inner sep=0.5mm,label=above:{$0$},alias=0] (0) {};
				
				\draw [color=blue, thick] (k) -- (kell) {};
				
				\begin{scope}[thick,decoration={calligraphic brace, amplitude=6pt}]
					%\draw[decorate]   (current bounding box.north) coordinate (aux) (2|-aux) -- node[above=1ex]{$I_2$} (b_1|-aux);
					\draw[decorate]   (current bounding box.south) coordinate (aux) (kell|-aux) -- node[below=1ex]{\epochkl} (k|-aux);
					%\draw[decorate]   (current bounding box.south) coordinate (aux) (b_0|-aux) -- node[below=1ex]{$I_0$} (a_0|-aux);
				\end{scope}
			\end{tikzpicture}
		}
		\caption{\epochkl}
		\label{figure:epoch}
	\end{mdframed}
\end{figure}

\begin{definition}[\epochstraklu]
	\label{def_epochstraklu}
	Given $k \in [n], \ell \in [k]$ and $\limit \le \limitbound$,
	when~$\turnix$ cards are left s.t.\ $k \le \turnix \le k-\ell+1$:
	Let $A_\turnix$ be the set of $\turnix$ available cards,
	let $B'_\turnix$ be the set of reasonable guesses made by the Guesser since when there were $k$ cards in the back and until there are $\turnix$,
	let $\limit_\turnix = \min\{\limit, |B'_\turnix|\}$ and
	let~$B_\turnix$ be the set of the first $\limit_\turnix$ guesses from $B'_\turnix$.
	
	A Dealer that follows \epochstraklu, draws a card uniformly at random from the set $A_\turnix \setminus B_\turnix$ when $\turnix$ cards are left in the deck for $k \le \turnix \le k-\ell+1$.
\end{definition}
The upper bound $\limit$ is necessary to make sure that during any point in \epochkl the Dealer has cards to draw and that these cards are not too predictable.
Though implicit, this definition describes a reshuffle, as when $\turnix = k$ the set $B_\turnix$ is empty.
A procedural description of this strategy is specified in \Cref{alg_mttb_epoch}.

\begin{algorithm}[h]
	\caption{\epochstraklu}
	\label{alg_mttb_epoch}
	
	\begin{algorithmic}
		\Parameter
			$k \in [n]$,
			$\ell \in [k]$,
			$\limit \le \limitbound$,
			$A \subseteq {[n] \choose k}$
		
		\State
			$B \gets \emptyset$
		\For{$\turnix \in \{k, \dots, k-\ell+1\}$}
			\State
				Draw a card $c \in_\cR A \setminus B$ and discard $c$ from $A$
			\State
				$g \gets $ guess made by Guesser
			\If {$g \in A$ \textbf{and}  $|B| < \limit$}
				\Comment
				Move to the back
				\State
				$B \gets B  \cup \{g\}$
			\EndIf
		\EndFor
	\end{algorithmic}
\end{algorithm}

We notice that, when the Dealer follows a \epochstraklu, a guess is reasonable if it is available and being guessed for the first time in the current epoch, assuming no more than~$\limit$ cards were moved to the back.
We get that moving cards to the back works well against guessing techniques that repeat the same guess over and over.
Recall that the guessing techniques that were successful against a static Dealer (namely the Following-Subsets technique from \Cref{sec_static_dealer:sec_following_subset_guesser} and the Random-subsets technique from \Cref{sec_static_dealer:sec_random_subsets_guesser}) did exactly that.

Observe that the Dealer cannot move cards to the back for too many rounds, as cards will become too predictable as $A_\turnix \setminus B_\turnix$ shrinks.
Therefore, we apply the \epochstraterm in a sequence and reshuffle the deck at the beginning/end of each epoch.
Reshuffling the deck sets $B_\turnix$ to be the empty set again.
This is visualized in~\Cref{figure:mttb_with_reshuflle}.

As the Dealer refrains from drawing reasonably guessed cards during an epoch, a significant portion (if not all) of these cards would reside in the deck at the beginning of the following epoch.
Therefore, a Guesser can repeat her reasonable guesses from the previous epoch to get another chance, and most of these guesses will be reasonable.
Repeating reasonable guesses can be done either by generating a pseudorandom sequence of guesses from which some portion would be reasonable, or by tracking cards using memory.
We discuss this in detail after~\Cref{lemma_informal_expected_reasonable_guesses_upperbound}.

\begin{figure}[H]
\begin{mdframed}
	\begin{flushleft}
	In \textcolor{blue}{blue}: a sequence of $d$ epochs of length $\ell$ where the $i$th epoch begins when $k_i$ cards are left in the deck. The \textcolor{blue}{blue} dots indicates a reshuffle.
	\end{flushleft}
	\resizebox{\textwidth}{!}{%
		\begin{tikzpicture}[x=3cm]
			\draw [dashed,color=blue]
			(4,0) node[circle,fill,inner sep=0.5mm,label=below:{${k_1}$},alias=k_1] (k1) {}
			--
			(4.25,0) node[circle,fill,inner sep=0.5mm,label=below:{${k_2}$},alias=k_2] (k2) {}
			--
			(4.75,0) node[circle,fill,inner sep=0.5mm,label=below:{${k_d}$},alias=k_d] (kd) {}
			--
			(5,0) node[circle,fill,inner sep=0.5mm,alias=k_d1] (kd1) {};
			
			\draw
			(0,0) node[circle,fill,inner sep=0.5mm,label=above:{$n$},alias=n] (n) {}
			--
			(k1) {};
			
			\draw
			(kd1) {}
			--
			(6,0) node[circle,fill,inner sep=0.5mm,label=above:{$0$},alias=0] (0) {};
			
			\begin{scope}[thick,decoration={calligraphic brace, amplitude=6pt,mirror}]

				\draw[decorate]   (current bounding box.south) coordinate (aux) (k1|-aux) -- node[below=1ex]{$\ell$} (k2|-aux);
				\draw[decorate]   (current bounding box.south) coordinate (aux-1) (kd|-aux) -- node[below=1ex]{$\ell$} (kd1|-aux);

			\end{scope}
		\end{tikzpicture}
	}
	\caption{A sequence of \epochstraterms in epochs of equal lengths}
	\label{figure:mttb_with_reshuflle}
\end{mdframed}
\end{figure}

\newcommand{\koneepochs}{{n \over 8e \log n}}
\newcommand{\koneepochsend}{2 \guessermemorysymbol \log n}
\newcommand{\kfirstepochs}{{n \over 10e^2}}
\newcommand{\klastepochs}{{n \over 10e^2}}
\newcommand{\doneepochs}{({\firstepochbeginterm - 2m \cdot \log n) / \currentlength}}
\newcommand{\dfirstepochs}{{\sqrt{n} \over 10e^2 \sqrt{m}} - 2}
\newcommand{\dlastepochs}{\lceil 20e^2 \rceil}
\newcommand{\elloneepochs}{\guessermemorysymbol \cdot \log n}
\newcommand{\ellfirstepochs}{{\sqrt{nm}}}
\newcommand{\elllastepochs}{{2\sqrt{nm} - 4m \over \dlastepochs}}
\newcommand{\limitfirstepochs}{\sqrt{nm}}
\newcommand{\limitlastepochs}{\max\{2m, 2\log^2 n\}}

\newcommand{\firstuniepochbegin}{{n \over 8e \log^2 n}}
\newcommand{\duniepochs}{{\log_{\log n}\left(n \over 8e \log^6 n\right)}}

Finally, we present the Dealer, termed \mttbdealer, in  all her glory.
\begin{definition}[\mttbdealer]
	\label{def_mttb_dealer}
	Given $m \le \memorybound$, a \mttbdealer~\dealerm plays according to the strategy:
	\begin{enumerate}
		\item
			Shuffle the deck uniformly at random and draw cards one by one until $\koneepochs$ cards are left.
		\item
			Play the \epochstraterm $d$ times in a sequence, where $d = \doneepochs$, each epoch for $\ell = \elloneepochs$ turns, and move at most $\limit=\ell$ cards to the back during each epoch.
			Begin when $k_1= \koneepochs$ cards are left in the deck.
		\item
			When $2m \log n$ cards left, shuffle the deck and draw cards one by one for the rest of the game.
	\end{enumerate}
\end{definition}
A visual description of \dealerm is provided in \Cref{figure:dealer}.
As per scale, consider~\Cref{figure:mttb_in_scale}.
Note that \dealerm is computationally efficient.

\newcommand{\figkpoint}{5.0}
\begin{figure}[h]
\begin{mdframed}	
	\resizebox{\textwidth}{!}{%
		\begin{tikzpicture}[x=3cm]
			\draw [dashed,color=blue]
			(\figkpoint,0) node[circle,fill,inner sep=0.5mm,label=above:{$\koneepochsend$},alias=k_d1] (kd1) {}
			--
			(\figkpoint-1,0) node[circle,fill,inner sep=0.5mm,label=below:{${k_{d}}$},label=above:{$3 m \log n$},alias=k_d] (kd) {};
			
			\draw [dashed,color=blue]
			(\figkpoint-2,0) node[circle,fill,inner sep=0.5mm,label=below:{${k_2}$},alias=k_2] (k2) {}
			--
			(\figkpoint-3,0) node[circle,fill,inner sep=0.5mm,label=below:{${k_1}$},label=above:{$\koneepochs$},alias=k_1] (k1) {};
			
			\draw [dotted,color=blue] (k2) -- (kd) {};
			
			\draw [dotted]
			(0,0) node[circle,fill,inner sep=0.5mm,label=above:{$n$},alias=n] (n) {}
			--
			(k1) {};

			\draw [dashed]
			(5.75,0) node[circle,fill,inner sep=0.5mm,label=above:{$m$},alias=m] (m) {}
			--
			(6,0) node[circle,fill,inner sep=0.5mm,label=above:{$0$},alias=0] (0) {};
			
			\draw [dotted]
			(kd1) {}
			--
			(m) {};
			
			\begin{scope}[thick,decoration={calligraphic brace, amplitude=5pt}]
				\draw[decorate,color=blue]  (current bounding box.north) coordinate (aux) (k1|-aux) -- node[above=1ex,align=center]{Sequence of \epochstraterms} (kd1|-aux);

				\draw[decorate]  (current bounding box.north) coordinate (aux) (n|-aux) -- node[above=1ex]{Draw at random} (k1|-aux);
				\draw[decorate]  (current bounding box.north) coordinate (aux-1) (kd1|-aux) -- node[above=1ex]{Draw at random} (0|-aux);
				
			\end{scope}
			
			\begin{scope}[thick,decoration={calligraphic brace, amplitude=5pt,mirror}]
				
				\draw[decorate,color=blue]   (current bounding box.south) coordinate (aux) (k1|-aux) -- node[below=1ex]{$\elloneepochs$} (k2|-aux);

			\end{scope}
			\node[draw] at (5.4,-0.5) {Not drawn to scale};
			
		\end{tikzpicture}
	}
	\caption{Our Dealer \dealerm.}
	\label{figure:dealer}
	In \textcolor{blue}{blue}: Play the \epochstraterm in a sequence for $d$ times, where $d=\doneepochs$, each epoch for $\ell = \elloneepochs$ turns, begin when $\firstepochbegin = \koneepochs$ cards are left in the deck.
	
\end{mdframed}
\end{figure}

\newcommand{\tikzaxis}{6}
\begin{figure}
	\begin{mdframed}
		
		\begin{flushleft}
			In \textcolor{red}{red}, the span of turns during which the Dealer follows the \epochstraterm.
		\end{flushleft}
		\resizebox{\textwidth}{!}{%
			\begin{tikzpicture}[x=3cm]
				\draw
				(0,0) node[circle,fill,inner sep=0.3mm,label=below:{first turn},alias=n] (n) {}
				--
				(\tikzaxis*0.9822,0) node[circle,fill,inner sep=0.3mm,label=above:{$\koneepochs$},alias=k1] (k1) {}
				--
				(\tikzaxis*0.996,0) node[circle,fill,inner sep=0.3mm,label=below:{},alias=k2] (k2) {}
				--
				(\tikzaxis,0) node[circle,fill,inner sep=0.3mm,label=below:{last turn},alias=0] (0) {};
				\draw [color=red, thick] (k1) -- (k2) {};

			\end{tikzpicture}
		}
		\caption{\mttbdealer roughly to scale.}
		\label{figure:mttb_in_scale}
	\end{mdframed}
\end{figure}

We refer to the turn at which the first epoch begins as $n-\firstepochbegin$.
Observe that for every epoch in the sequence played by our Dealer we get that $\currentlength \le \currentepochbegin - \currentlength$ so we can set the maximal number of cards moved to the back $\limit$ to $\currentlength$.

The main theorem of this section states that \mttbdealer works well against any memory bounded guesser.
\begin{theorem}	
	For any $\guessermemorysymbol$, every Guesser with~$m$ bits of memory is expected to make at most $$\result$$ correct guesses when playing against the \mttbdealer~\dealerm~(\Cref{def_mttb_dealer}).
\end{theorem}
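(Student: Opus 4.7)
The plan is to decompose the expected number of correct guesses according to the three phases of \dealerm: the initial random-draw phase (before any epoch begins, while $\turnix > \firstepochbegin$), the sequence of \epochstraterm epochs, and the final random-draw phase (once $\turnix \le 2m\log n$). During the first and third phases the Dealer behaves exactly like the random-shuffle Dealer on a properly shuffled sub-deck, so standard harmonic-sum reasoning bounds their contribution by $H_n - H_{\firstepochbegin} = O(\log\log n)$ and $H_{2m\log n} = \ln m + \ln\log n + O(1)$ respectively. Summed, these account for the $\ln m + 2\ln\log n + O(1)$ in the theorem, so the whole game is to show that the epoch sequence contributes only $O(1)$ in expectation.

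To handle the epoch sequence I would, for each epoch $i$ starting with $\currentepochbegin$ cards in the deck, prove a ``reasonable-guesses upper bound'' via a compression argument that mirrors the one in \Cref{sec_static_dealer:upper_bound} but adapted to \emph{unordered} sets. Fix the Guesser's randomness $\guesserrandomness$ and a choice of permutations \permutations that the Dealer will use to reshuffle. I would introduce an encoding scheme \encodefuncadaptive whose input is the unordered set $\encset \in \encsetdomain$ of cards present in the deck when epoch $i$ begins, and whose output records (i) an indicator bit, (ii) the Guesser's memory state $\memorystatesymbol \in \zo^\guessermemorysymbol$ at the start of the epoch, (iii) the set $\turnssetsymbol$ of turns within the epoch at which the Guesser guessed reasonably, together with a bit per such turn indicating whether the reasonable guess was also correct, and (iv) the cards drawn by the Dealer on the remaining turns of the epoch, listed in order. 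The decoder \decodefuncadaptive then simulates the game turn by turn: at reasonable-guess turns the Dealer's card is determined by the Guesser's deterministic output and the correctness bit, and at other turns the Dealer's card is read off the explicit list; the set of cards drawn during the epoch, together with cards still in the deck at the end, recovers $\encset$.

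The bookkeeping step is to count bits saved. The explicit part of the description trades $\alpha$ positions in the epoch (costing $\log\binom{\currentlength}{\alpha} + \alpha$ bits) for $\alpha$ fewer cards listed explicitly (saving roughly $\alpha \log(\currentepochbegin - \currentlength)$ bits, since there are about that many candidates per skipped slot). When the reasonable-guess count exceeds the threshold $\reasonablesbound$, the net savings beat the $m$-bit memory overhead by some margin $d$, and \Cref{lemma_compresion_by_d_bits} forces the probability of this event to fall as $2^{-d}$. Summing a geometric tail, the expected number of reasonable guesses in epoch $i$ is at most $\currentreasonableexpected \le \reasonablesbound = \max\{8 e \firstepochbegin \currentlength / n, m\}$, which equals $m$ for $\currentlength = \elloneepochs$. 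Since during the epoch the Dealer draws each card uniformly from a set of size at least $\currentepochbegin - \currentlength \ge \currentepochbegin/2$, each reasonable guess is correct with probability at most $2/\currentepochbegin$, so the expected correct guesses in epoch $i$ is at most $2m/\currentepochbegin$. Summing over the $d$ epochs, whose starting sizes $\currentepochbegin$ decrease arithmetically by $\currentlength = m\log n$ from $\firstepochbegin$ down to $\Theta(m\log n)$, yields $\sum_i 2m/\currentepochbegin \le (2m/\currentlength)\cdot \ln(\firstepochbegin/(2m\log n)) = O(1)$.

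The main obstacle I anticipate is the bookkeeping in (iii) and (iv) of the encoding: one must be careful that the decoder can unambiguously parse the description without knowing $\encset$, and that the ``explicit cards'' list is over the right universe (cards not yet used in previous epochs and not guessed reasonably in the current epoch) so that the bit-saving accounting actually works out to a genuine $\log(\text{large})$ per reasonable guess, not merely $\log k$. A related subtlety is that the Dealer is adaptive and reshuffles between epochs, so the encoding must be applied epoch by epoch with the memory state at the start of each epoch being re-read, and the expectation bound must be interpreted conditionally on the history up to that epoch; I would handle this by fixing not only $\guesserrandomness$ but also the reshuffling randomness used by the Dealer outside epoch $i$, proving the bound on $\reasonableRV_i$ conditionally, and then taking expectations.
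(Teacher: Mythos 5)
Your route is the paper's route: a three-phase decomposition with harmonic sums for the two random-draw phases (contributing $\ln\log n+O(1)$ and $\ln m+\ln\log n+O(1)$), a per-epoch compression argument on \emph{unordered} sets that bounds the expected number of reasonable guesses by roughly $\max\{8ek_1\ell/n,\,m\}=m$ via the exponential tail of \Cref{lemma_compresion_by_d_bits}, conversion from reasonable to correct guesses by dividing by the size of the drawable set, and a final harmonic sum over the arithmetically spaced epochs giving $O(1)$. The outline is correct, but two details as written would fail and need the paper's repair.

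First, your item (iv) records only the cards drawn at non-reasonable turns of the epoch, but the decoder must reconstruct all of $\encset$, including the $k_i-\ell$ cards never drawn during the epoch; the explicit part therefore has to be the full residual set $\encset\setminus\{g_t : t\in T\}$ stored as an unordered subset of $[n]$ of size $k_1-\alpha$, costing $\log {n \choose k_1-\alpha}$ bits. This changes the accounting materially: the saving per reasonable guess is not $\alpha\log(k_i-\ell)$ but only $\log(n/k_1)=\log(8e\log n)=\log\log n+O(1)$, while the per-guess cost of recording $T$ and the correctness bit is about $\log(e\ell/\alpha)+1\approx\log\log n+O(1)$ at the relevant threshold $\alpha\approx m$. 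The net saving is a \emph{constant} number of bits per reasonable guess, not a ``genuine $\log(\text{large})$''; the argument closes only because $k_1=n/(8e\log n)$ and $\ell=m\log n$ are tuned so that this constant is positive, which is exactly the content of the paper's ``length decreases'' and ``encoding achieves compression'' claims. Second, the probability that a reasonable guess is correct is $1/(k_i-(j-1)-|B|)\le 1/(k_i-\ell-u)=1/(k_i-2\ell)$: one must subtract the up-to-$u=\ell$ cards moved to the back, and your bound $k_i-\ell\ge k_i/2$ does not absorb this for the last epochs, where $k_d=3\ell$. The corrected sum $\sum_i m/(k_i-2\ell)=(1/\log n)\sum_i 1/(d-i+1)=O(1)$ still closes, so the theorem is unaffected, but the per-epoch constant you quote is not justified as stated.
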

Thinking about this theorem, it is clear that a Guesser with $m$ memory bits can easily achieve $\ln m$ correct guesses in expectation by using the simple Subset Guessing strategy (from \Cref{sec_basic_guessing_strategies}).
So essentially, this theorem states that moving cards to the back and reshuffling every once in a while, is a \emph{very} effective strategy against a memory bounded Guesser.

\subsection{Towards a proof}
\label{section_upper_bound:subsec_towards_a_proof}
Consider $\guessermemorysymbol$ and \mttbdealer \dealerm.
Let~$\guesserrandomness$ be the Guesser's randomness and let~$\dealerrandomness$ be the Dealer's randomness.
Let~$\currentreasonableRV$ denote the number of reasonable guesses made during the $\epochix$th epoch.
Let~$\currentreasonableexpected$ be the expectation of $\currentreasonableRV$ taken over the Guesser's and the Dealer's randomness, i.e.
$$
\currentreasonableexpected = \expected_{\guesserrandomness, \dealerrandomness}\left[\currentreasonableRV \right].
$$

To prove that our Dealer works well against any memory bounded Guesser we analyze the reasonable guesses during a single epoch.
We claim that no Guesser can expect to make too many reasonable guesses during \emph{any} of the epochs while our Dealer follows the \epochstraterm.
Consider the \currentepoch where the Dealer follows the \epochstraterm.

\begin{lemma}[Informal]
	\label{lemma_informal_expected_reasonable_guesses_upperbound}
	A Guesser with $\guessermemorysymbol$ bits of memory that plays against a \mttbdealer~\dealerm is expected to make at most
	$$
	\currentreasonableexpected \le \reasonablesboundi + 2.
	$$
	reasonable guesses during any \currentepoch played by the \dealerm.
\end{lemma}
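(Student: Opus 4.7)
The plan is to prove the lemma by a compression argument in the style of \Cref{sec_static_dealer:upper_bound}: too many reasonable guesses in an epoch will let us compress the set $\encset$ of cards that sit in the Dealer's deck at the start of that epoch, and \Cref{lemma_compresion_by_d_bits} then says that this cannot happen with non-negligible probability.

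First I would fix the Guesser's randomness $\guesserrandomness$ and condition on the entire transcript up to the start of the \currentepoch. Under this conditioning the set $\encset$ currently in the deck is distributed essentially uniformly on $\binom{[n]}{\currentepochbegin}$: the uniform initial shuffle, the uniform draws during the first phase, and the implicit reshuffle at epoch boundaries all preserve the marginal on the unordered remaining set, and any residual dependence on the history is carried by the Guesser's $\guessermemorysymbol$-bit memory state, for which I will pay separately. Hence $H(\encset)\approx\log\binom{n}{\currentepochbegin}$.

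Next I would build an encoder $\encodefuncadaptive$ that, on input $\encset$, simulates the fixed Guesser against a Dealer whose epoch-start deck equals $\encset$, and outputs: an indicator bit; the Guesser's memory state $\memorystatesymbol$ at the epoch start ($\guessermemorysymbol$ bits); the set of $\alpha$ turn-indices on which the Guesser made a reasonable guess ($\log\binom{\currentlength}{\alpha}$ bits); the subset of those indices on which the guess was actually correct ($\log\binom{\alpha}{c}$ bits); the ordered list of cards drawn at the $\currentlength-\alpha$ non-reasonable turns; and a description of the end-of-epoch deck with the $\alpha-c$ cards moved to the back removed. The decoder $\decodefuncadaptive$ re-initializes the Guesser from $\memorystatesymbol$ and walks through the epoch: at correct-turn positions it plays the Guesser's guess as the drawn card, at reasonable-but-incorrect positions it moves the guess to the back (so that card automatically ends up in the final deck), and at non-reasonable positions it reads the next drawn card from the stored list. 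Taking the union of drawn cards, moved-to-back cards and the residual end-of-epoch set recovers $\encset$, so the scheme is lossless; concatenating fixed-size fields with a prefix-free subset code for the residual set yields a prefix-free code.

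I would then compare the encoding length to $\log\binom{n}{\currentepochbegin}$. Using $\binom{\currentlength}{\alpha}\le(e\currentlength/\alpha)^\alpha$ together with the standard estimate $\binom{n-\alpha}{\currentepochbegin-\alpha}/\binom{n}{\currentepochbegin}\approx(\currentepochbegin/n)^\alpha$, each reasonable guess saves on the order of $\log(n/(e\currentepochbegin))$ bits, while the scheme incurs a one-time cost of $\guessermemorysymbol+O(1)$ bits for the memory and indicator. Substituting into \Cref{lemma_compresion_by_d_bits} yields $\Pr[\currentreasonableRV\ge\alpha]\le 2^{-d(\alpha)}$ for a savings function $d(\alpha)$ that grows linearly in $\alpha$ once the memory payment is recouped; solving $d(\alpha_0)=0$ gives $\alpha_0=\Theta(\max\{\currentepochbegin\currentlength/n,\,\guessermemorysymbol\})$ with the constants pinned to $\reasonablesboundi$, and the geometric tail sum produces $\currentreasonableexpected\le\alpha_0+\sum_{d\ge 1}2^{-d}\le\reasonablesboundi+2$.

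The main obstacle is the binomial bookkeeping: the savings estimate must be tight rather than merely asymptotic to match the promised constant in $\reasonablesboundtermi$, which forces a careful choice of what to leave as the ``residual'' part of the encoding and tight approximations of $\log\binom{n}{\currentepochbegin}-\log\binom{n-\alpha}{\currentepochbegin-\alpha}$ against $\log\binom{\currentlength}{\alpha}$. A secondary subtlety is maintaining the conditional uniformity of $\encset$ despite the move-to-back dynamics of earlier epochs, which is why the conditioning is done on the full transcript rather than only on the current memory state.
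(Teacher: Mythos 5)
Your high-level plan --- compress the set of remaining cards using the reasonable guesses of an epoch, then invoke \Cref{lemma_compresion_by_d_bits} and sum a geometric tail --- is exactly the paper's strategy, but two steps of your proposal do not go through as written.

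First, the entropy accounting. You condition on ``the entire transcript up to the start of the \currentepoch'' and assert that the epoch-start set is then essentially uniform. It is not: the transcript contains every card $d_\turnix$ drawn so far, so conditioned on it the remaining set is \emph{deterministic} and there is nothing to compress. Conditioning only on the memory state does not work either, because the move-to-back dynamics correlate the surviving set with the Guesser's past guesses in a way that is hard to control. The paper's resolution is structural: it re-describes the Dealer's randomness as (i) a fresh precedence permutation $\pi_\turnix$ per turn and (ii) a uniformly random set $\encsetdesc$ of the \emph{last $\firstepochbegin$ cards}, gives the permutations (and $\guesserrandomness$) to the encoder and decoder for free, and compresses the genuinely uniform object $\encset$ by simulating the game from turn $n-\firstepochbegin$ through the end of the \currentepoch, paying $\guessermemorysymbol$ bits once for the memory state at turn $n-\firstepochbegin$. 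This is also why the threshold $\reasonablesboundtermi$ involves $\firstepochbegin$ rather than $\currentepochbegin$.

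Second, your decoder stalls at reasonable-but-incorrect turns. You store the drawn cards only for the $\currentlength-\reasonablesnum$ non-reasonable turns, but at each of the $\reasonablesnum - c$ reasonable-but-incorrect turns the Dealer still draws \emph{some} card (different from the guess), and the Guesser's state transition needs it to continue the simulation. In your scheme that card is buried in the unordered ``residual'' set, so the decoder cannot identify it and the simulated transcripts diverge. The paper's shared per-turn permutations are precisely what fix this: at such a turn the drawn card is the \minorderpit card of the explicitly stored part of the deck, which encoder and decoder compute identically (this is the content of \Cref{lemma_is_encoding_scheme}). Your remaining bookkeeping (binomial estimates, solving for $\alpha_0$, the $+2$ from the geometric sum) matches the paper's \Cref{lemma_length_decrease_by_one_bit,lemma_encoding_achieves_compression,claim_probability_for_reasonables_decays} once these two repairs are made.
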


Observe that when $\firstepochbegin$ cards are left, the probability that a random guess is a card that is still in the deck is $\firstepochbegin \over n$.
By linearity of expectation we get that guessing randomly for $\currentlength$ turns would yield at most ${\firstepochbegin \currentlength \over n}$ reasonable guesses in expectation.
These cards are a reasonable guess exactly once during each epoch, as the Dealer avoids drawing them, but for the same reason it follows that a significant portion of them would still be available (and reasonable) in the next epoch.
So we get that by using the same set of random guesses in each epoch the Guesser can get near the claimed upper bound of reasonable guesses.
On the other hand, with carefully managed $\guessermemorysymbol$ bits, it may be possible in some cases to keep track of $\guessermemorysymbol$ cards that have not appeared (as we did in the Subset guessing technique in \Cref{sec_basic_guessing_strategies}).
So essentially, this lemma states that any memory bounded Guesser that plays against our Dealer cannot do much better then guessing cards at random or tracking $\guessermemorysymbol$ cards.

At any turn, the Dealer's strategy determines a distribution to sample a card from.
In our case, this distribution is uniform on the available cards that were not moved to the back.
We think of the Dealer as using precedence represented by a permutation $\pi$ in order to make this choice:

\begin{definition}[\minorderphrase]
Given a permutation $\permutation \in S_n$ and a set $C \subseteq [n]$ we say that a card $x \in C$ is the \minorderpi card from $C$  if $x$  is the element of $C$ with the smallest $\pi$ value.
\end{definition}

We describe the randomness $\dealerrandomness$ of the Dealer in an indirect way:
the Dealer has two independent parts for its randomness,
(i)  a sequence of permutations $\permutationsfull$ and (ii) a set $\encsetdesc$.
The way they are used is:
\begin{itemize}
	\item
	For turn $1 \le \turnix \le n-\firstepochbegin$ the Dealer draws the \minorderpit card from $A_\turnix \setminus \encset$.
	\item
	For turn $n-\firstepochbegin+1 \le \turnix \le n$ the Dealer draws \minorderpit card from $A_\turnix \setminus B_\turnix$.
\end{itemize}
As the \mttbdealer draws cards uniformly at random at the first~$n-\firstepochbegin$ turns, it follows that every set $\encset$ can be kept for the last $\firstepochbegin$ turns, so this is well defined.

\newtheorem{observation}[theorem]{Observation}
\begin{observation}
If the sequence of permutations $\permutationsfull$  and a set $\encsetdesc$ are chosen uniformly at random, then this implementation is equivalent to Definition~\ref{def_mttb_dealer}.
\end{observation}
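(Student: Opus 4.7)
The plan is to show that the two descriptions of the Dealer induce the same distribution on game transcripts, by arguing turn-by-turn that, conditional on any history, the card drawn has the same conditional distribution in both implementations. Throughout, we use the basic fact that if $\pi$ is a uniformly random permutation of $[n]$ and $S \subseteq [n]$ is any fixed nonempty set, then the $\pi$-\minorderphrase element of $S$ is distributed uniformly on $S$: this is immediate by symmetry, since conditioning on the set of $\pi$-values assigned to $S$ (which is a uniformly random $|S|$-subset of $[n]$) does not break the uniform distribution of the bijection from $S$ to those values.

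First I would handle the second and third stages of \Cref{def_mttb_dealer} (turns $n-\firstepochbegin+1 \le \turnix \le n$). On these turns the original Dealer, whether inside an epoch or during the final uniform draws after the last reshuffle, samples a card uniformly from $A_\turnix \setminus B_\turnix$ (with $B_\turnix$ being empty when no MtBE-strategy is active, in particular after the last reshuffle). In the indirect implementation, the Dealer draws the $\pi_\turnix$-\minorderphrase card from $A_\turnix \setminus B_\turnix$. By the fact above, conditional on the history (which determines $A_\turnix$ and $B_\turnix$), this card is uniform on $A_\turnix \setminus B_\turnix$. Since the permutations $\{\pi_\turnix\}$ are mutually independent and independent of $\encset$, draws on distinct turns are conditionally independent given the history, matching the original.

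Next I would handle the first stage, turns $1 \le \turnix \le n-\firstepochbegin$. The original Dealer performs a uniform shuffle of the deck and draws the top $n-\firstepochbegin$ cards one by one; equivalently, the unordered set $\encset$ of the \emph{last} $\firstepochbegin$ cards in the shuffle is uniform on $\encsetdomain$, and conditional on $\encset$ the sequence of first $n-\firstepochbegin$ drawn cards is a uniformly random ordering of $[n]\setminus\encset$. In the indirect implementation, conditional on $\encset$, we draw at each turn $\turnix$ the $\pi_\turnix$-\minorderphrase card from $A_\turnix \setminus \encset$; applying the min-\,$\pi$ fact again, this yields a uniformly random element of $A_\turnix \setminus \encset$ independently across turns, producing exactly a uniformly random ordering of $[n]\setminus\encset$. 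Because $\encset$ itself is uniform on $\encsetdomain$, the joint distribution of $(\encset, \text{first }n-\firstepochbegin \text{ drawn cards})$ agrees in the two models.

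Finally I would glue the two stages. The only subtlety is verifying that $\encset$, used only in the first stage of the indirect implementation, is indeed the set of cards that survives to turn $n-\firstepochbegin+1$: but by construction the first stage only ever draws from $A_\turnix\setminus\encset$, so after $n-\firstepochbegin$ turns the remaining deck is exactly $\encset$. From there the second stage takes over and has already been shown to agree with the original. The expected main obstacle is simply the bookkeeping that the random sets $B_\turnix$ (produced by the MtBE-strategy) are themselves measurable functions of the history and the Guesser's behavior, not of $\encset$ or of the $\pi_\turnix$'s for $\turnix'\neq \turnix$, so the conditional-independence argument above goes through cleanly.
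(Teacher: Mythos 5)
Your argument is correct. The paper states this as an observation and omits the proof entirely, and what you write is the natural formalization of the intended justification: the only fact needed is that the $\pi_\turnix$-\minorderphrase element of any history-measurable nonempty set is uniform on that set (by symmetry of a uniformly random $\pi_\turnix$), and since the $\pi_\turnix$ are fresh and independent per turn, the conditional draw distributions match \Cref{def_mttb_dealer} turn by turn; your handling of the first stage via conditioning on the uniform set $\encset$ of surviving cards is likewise the right decomposition. Your closing remark that $A_\turnix$ and $B_\turnix$ are measurable in the history and the Guesser's randomness, but not in $\pi_\turnix$, is exactly the point the paper itself emphasizes when it notes that reusing a single permutation across turns would break the argument.
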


Note that it was important to choose a permutation $\pi_\turnix$ independently for each turn $\turnix$, since a common permutation $\pi$ for all turns might leak information regarding the relative ranking of cards that were moved to the back at different times during an epoch.
In particular, for any two reasonable guesses during some epoch, the earlier one has a higher probability of preceding the latter.
As a result, the earliest reasonably guessed card has a higher probability of being drawn at the first turn in the following epoch, i.e., cards are drawn in a non-uniform manner.

We conclude that:
$$
	\currentreasonableexpected
	=
	\expected_{\guesserrandomness, \dealerrandomness}\left[\currentreasonableRV \right]
	=
	\expected_{\guesserrandomness, \permutations, \encset}\left[\currentreasonableRV \right]
	=
	\sum_{\guesserrandomness, \permutations} \expected_{\encset}\left[\currentreasonableRV | \guesserrandomness, \permutations \right] \cdot \Pr\left[\guesserrandomness, \permutations\right]
	.
$$
The next two sections are dedicated to bounding the term
$$
	\expected_{\encset}\left[\currentreasonableRV | \guesserrandomness, \permutations \right]
$$
for any permutations sequence $\permutations$ and any Guesser's randomness $\guesserrandomness$.

\subsection{Encoding scheme}
\label{section_upper_bound:subsec_encoding_scheme}

In order to bound the expected number of reasonable guesses in a single epoch we present an encoding scheme for subsets of $[n]$ of size $\firstepochbegin$.
The encoding scheme utilizes reasonable guesses against our Dealer during any of the epochs to achieve a shorter description.

Consider some Guesser \guesser that plays against the \mttbdealer \dealerm (\Cref{def_mttb_dealer}) and fix one of the epochs \currentepoch played by \dealerm.
For every triplet $\guesserrandomness, \permutationsfull, \encset$ we associate the Guesser \guesserfixed with fixed randomness~$\guesserrandomness$ and the Dealer \dealerdet with fixed randomness that corresponds to~$\permutationsfull$ and~$\encset$ as the last cards to be played.

Fix some prefix-free code (\Cref{def_prefix_free_code}) for sets of size $\firstepochbegin$.
The encoding scheme will use this code for the cases when there are not enough reasonable guesses to utilize.

\begin{definition}[$\encodefuncadaptive$]
	To encode a set $\encsetdesc$, the function $\encodefuncadaptive$ simulates a game between Guesser \guesserfixed and Dealer \dealerdet.
	
	Denote by $\turnssetsymbol'$ the set of turns during the \currentepoch at which \guesserfixed made a reasonable guess, i.e.\ $$\turnssetsymbol' = \{n-\currentepochbegin \le \turnix \le n -\currentepochbegin+\currentlength-1 | g_\turnix \text{ is a reasonable guess}\}.$$
	
	\begin{itemize}
		\item
			If $|\turnssetsymbol'| < \reasonablesnum$, then the code is made of an indicator bit set to $0$ and an explicit prefix free representation of $\encset$.
		\item
			If $|\turnssetsymbol'| \ge \reasonablesnum$, then let $\turnssetsymbol$ be the first $\reasonablesnum$ turns from $\turnssetsymbol'$ during which the Guesser guessed reasonably during the \currentepoch.
			The code is made of:
			\begin{enumerate}
				\item
					Indicator bit set to $1$ ($1$ bit).
				\item
					Guesser's memory state $\memorystatesymbol$ at turn $n-\firstepochbegin$ ($m$ bits).
				\item
					Description of $\turnssetsymbol$, the first $\reasonablesnum$ turns at which the Guesser guessed reasonably during the \currentepoch ($\log {\currentlength \choose \reasonablesnum}$ bits).
				\item
					Binary vector $\correctvector$ of length~$\reasonablesnum$ that tags which of the reasonable guesses described in~$\turnssetsymbol$ were also correct ($\reasonablesnum$ bits\footnote{Though a shorter representation is possible, it suffices for our purpose.}).
				\item
					Description of $\encset \setminus \{g_\turnix | \turnix \in \turnssetsymbol\}$  ($\log {n \choose \firstepochbegin - \reasonablesnum}$ bits).
			\end{enumerate}
	\end{itemize}
	
\end{definition}
A visual representation of the stored information is provided in \Cref{figure:adaptive_dealer_upper_bound_encoding_full}.
\newcommand{\nottoscalewarn}{Not to scale (for scale consider~\Cref{figure:mttb_in_scale}).}
\begin{figure}[h]
	\begin{mdframed}	
		\resizebox{\textwidth}{!}{%
			\tikzmath{
	\ampparam = 1.4;
	\shift = -0.6;
	\epochbeginx = 2.15+\shift;
	\epochendx = 4.2+\shift;
	\epochy = 0.2;
	\p1 = 2.5+\shift; 
	\p2 = 2.6+\shift; 
	\p3 = 2.9+\shift;
	\p4 = 3.1+\shift;
	\p5 = 3.45+\shift;
	\w1 = 0.1+\shift; 
	\reasonablesx = 2.55+\shift;
	\correctsx = 3.85+\shift;
	\titles = 1;
	% Computations are also possible
	\memory = 1.;
} 

\begin{tikzpicture}[x=3cm,
	circ/.style={shape=circle, inner sep=1pt, draw, node contents=},
	circorrect/.style={shape=circle, draw, fill=white,inner sep=3pt,minimum size=3pt,node contents=},
	circb/.style={shape=circle, inner sep=3pt, draw, fill=white,node contents=},
	reasonabline/.style={thick, color=ForestGreen,->},
	correctline/.style={thick, color=blue,->}
	]
	\draw[thick,->] (0,0) node[anchor=south] {first turn} -- (5,0) node[anchor=south] {last turn};
	
	\draw[thick, color=Fuchsia,->] (\memory,-\titles) node[anchor=north] {Guesser's memory state $\memorystatesymbol$ at turn $n-\firstepochbegin$} -- (\memory,0);
	
	\draw (\epochbeginx, \epochy) node[anchor=south] {Epoch begin} -- (\epochbeginx, 0);
	\draw (\epochendx, \epochy) node[anchor=south] {Epoch end} -- (\epochendx, 0);
	
	\draw[line width=0.5mm, color=red,->] (\memory,0) -- (5,0) node[midway,anchor=north] {Cards drawn minus epoch's reasonable guesses $\encset \setminus \{g_\turnix | \turnix \in \turnssetsymbol\}$};
	
	\draw node (p1) at (\p1, 0) [circ];
	\draw node (p2) at (\p2, 0) [circ];
	\draw node (p3) at (\p3, 0) [circorrect];
	\draw node (p4) at (\p4, 0) [circ];
	\draw node (p5) at (\p5, 0) [circorrect];
	
	\draw[inner sep=0pt] node (reasonables) at (\reasonablesx, \titles) {};
	
	\draw[reasonabline] (reasonables) node[anchor=south] {Reasonable guesses $\turnssetsymbol$} -- (p1);
	\draw[reasonabline] (reasonables) -- (p2);
	\draw[reasonabline] (reasonables) -- (p3);
	\draw[reasonabline] (reasonables) -- (p4);
	\draw[reasonabline] (reasonables) -- (p5);
	\pic [draw, dashed, color=OliveGreen, "$\reasonablesnum$", angle eccentricity=1.3] {angle = p1--reasonables--p5};

	\draw[inner sep=0pt] node (corrects) at (\correctsx, \titles) {};
	\draw[correctline] (corrects) node[anchor=south] {Correct guesses $\correctvector$} -- (p3);
	\draw[correctline] (corrects) -- (p5);

	\node[draw] at (4.4,-1.) {Not drawn to scale};
	%	\draw[line width=0.75mm, color=OliveGreen,-] (\p1-\w1,0) -- (\p1+\w1,0);
	%	\draw[line width=0.75mm, color=OliveGreen,-] (\p2-\w1,0) -- (\p2+\w1,0);
	%	\draw[line width=0.75mm, color=OliveGreen,-] (\p3-\w1,0) -- (\p3+\w1,0);
	
	)\end{tikzpicture}
		}
		\caption{$\encodefuncadaptive$. Colored - information stored by the encoding scheme.} %\nottoscalewarn}
		\label{figure:adaptive_dealer_upper_bound_encoding_full}
	\end{mdframed}
\end{figure}

We first define the Dealer we will use during the decoder simulation.
Since we simulate this Dealer, we can break the usual course of the game.
In particular, we will assume that the Dealer begins playing the game at the middle and with a partial deck, that the Dealer is aware of the Guesser's guess \emph{before} placing a card and that the Dealer can place cards that do not reside in the deck.

\begin{definition}
	\label{def_dealer_det_decode}
	The Dealer \dealerdetdec plays according to the \epochstraterm, as configured for \dealerm (\Cref{def_mttb_dealer}) with few modifications:
	\begin{itemize}
		\item
			When $\turnix$ cards are left for $\turnix \in \{\firstepochbegin, \dots, \currentepochbegin +1 \}$: % the Dealer \dealerdetdec
			play according to the \epochstraterm with deck~$\encodedexplicit$ moving cards to the back when they are guessed and still in the deck.
%			Once a card is drawn it is ``discarded'' from the deck.
		\item
			When $\turnix$ card are left, for $\turnix \in \{\currentepochbegin, \dots, \currentepochbegin - \currentlength + 1\}$, and until the $\reasonablesnum$th reasonable guess: if turn~$\turnix$ is tagged both as reasonable (by~$\turnssetsymbol$) and correct (by~$\correctvector$), then draw the card $g_\turnix$ guessed by the Guesser. Otherwise draw the \minorderpit card from the deck that was not moved to the back.
		\item
			Once $\reasonablesnum$ reasonable guesses occurred during the $\epochix$th epoch,  stop playing.
	\end{itemize}
\end{definition}
A procedural description of the Dealer \dealerdetdec is specified in~\Cref{alg_dealer_det_decode}.
\begin{algorithm}[h]
	\caption{Behavior of \dealerdetdec while simulated by $\decodefuncadaptive$}
	\label{alg_dealer_det_decode}
	\begin{algorithmic}

		\Parameter
			$\encodedexplicit \subseteq {[n] \choose k-\reasonablesnum}$,
			$\permutationsfull$,
			$\turnssetsymbol$,
			$\correctvector$,
			$\reasonablesnum$,
			$\epochix$
		
		\For {$j \in [\epochix]$}
			\Comment
				For epochs prior to $\epochix$
			\State
				$B \gets \emptyset$
			\For{$\turnix \in \{k_j, \dots, k_j-\ell_j+1\}$}
				\State
					Draw the \minorderpit card from $\encodedexplicit \setminus B$ and discard from $\encodedexplicit$
				\State
					$g_\turnix \gets $ guess made by Guesser
				\If {$g_\turnix \in \encodedexplicit$ \textbf{and}  $|B| < \limit$}
					\Comment
						Move to the back
					\State
						$B \gets B  \cup \{g_\turnix\}$
				\EndIf
			\EndFor
		\EndFor
		\State
			$B \gets \emptyset$
			\Comment
				$\epochix$th epoch
		\For{$\turnix \in \{\currentepochbegin, \dots, \currentepochbegin-\currentlength+1\}$}
			\State
				$g_\turnix \gets $ guess made by Guesser
			\If {$g_\turnix$ is tagged as both reasonable and correct (according to $\turnssetsymbol$ and $\correctvector$)}
				\State
					$d_\turnix \gets g_\turnix$
			\Else
				\State
					$d_\turnix \gets$ \minorderpit card from $\encodedexplicit \setminus B$ and discard from $\encodedexplicit$
			\EndIf
			\State
				Draw $d_\turnix$
			\If {$g_\turnix$ is tagged as reasonable (according to $\turnssetsymbol$)}
				\Comment
					Move to the back
				\State
					$B \gets B \cup \{g_\turnix\}$
			\EndIf
			\If {$|B| = \reasonablesnum$}
				\State
					Stop playing
			\EndIf
		\EndFor
	\end{algorithmic}
\end{algorithm}

Note that after turn $n-\firstepochbegin$, the Dealer simulated by the \emph{encoder} recognizes a guess as reasonable if it is from $\encset$ and wasn't drawn yet.
But when the Dealer simulated by the \emph{decoder} observes a guess of a card not in $\encodedexplicit$, then the Dealer cannot tell whether this is a card that will turn to be reasonable at the \currentepoch or a card that has been played before turn $n-\firstepochbegin$.
Therefore, the Dealer simulated by the decoder recognizes a guess as reasonable (and moves it to the back) if it is from $\encodedexplicit$.
We will soon see that this behavior allows the decoder to reproduce the same game transcript, and by doing so, decode the original set.

We now define the decode function.
\begin{definition} [$\decodefuncadaptive$]
	If the indicator bit is $0$ then decode a set from the remaining bits in the natural way.
	Otherwise, the function parses the remaining bits as a $4$-tuple $(\memorystatesymbol,  \correctvector, \turnssetsymbol, \encodedexplicit)$ as encoded by $\encodefuncadaptive$, and then simulates and records a partial game between the Dealer~\dealerdetdec (see \Cref{def_dealer_det_decode}) and the Guesser \guesserfixed:
	\begin{itemize}
		\item
			Initialize Guesser \guesserfixed with memory state $\memorystatesymbol$ at turn $n-\firstepochbegin$ and simulate a game against \dealerdetdec until the $\reasonablesnum$th reasonable guess in the \currentepoch.
		\item
			Let $\encodedrecovered$ be the set of card guesses that were tagged as reasonable (by $\turnssetsymbol$).
		\item
			Output $\decset = \encodedexplicit \cup \encodedrecovered$.
	\end{itemize}
\end{definition}

A procedural description of $\decodefuncadaptive$ is specified in \Cref{alg_decode}.
\begin{algorithm}[H]
	\caption{$\decodefuncadaptive$}
	\label{alg_decode}
	\begin{algorithmic}
		\Parameter
			$\currentepochbegin \in [n]$,
			$\currentlength \in [\currentepochbegin]$,
		 	$\permutationsfull$,
		 	$\guesserrandomness$
		\Input
			$x \in \zo^\ast$
		\If {$x_1 = 0$}
			\State
				Parse $\decset$ from $x$
			\State
				\Return $\decset$
		\EndIf
		\State
			Parse $
			\memorystatesymbol \in \zo^\guessermemorysymbol,
			\correctvector \in \zo^\reasonablesnum,
			\turnssetsymbol \in {[\currentlength] \choose \reasonablesnum},
			\encodedexplicit \in {[n] \choose \firstepochbegin - \reasonablesnum}
			$
			from $x$
		\State
			Initialize Guesser \guesserfixed at turn $n-\firstepochbegin$ with memory state $\memorystatesymbol$.
		\State
			$\encodedrecovered \gets \emptyset$
		\For{$\turnix \in \{\firstepochbegin, \dots, \currentepochbegin+1\}$}
			\State
				Simulate a turn between \guesserfixed and \dealerdetdec and update Guesser's memory accordingly
		\EndFor
		\For{$\turnix \in \{\currentepochbegin, \dots, \currentepochbegin-\currentlength+1\}$}
			\State
				Simulate a turn between \guesserfixed and \dealerdetdec and update Guesser's memory accordingly
			\State
				$g_\turnix \gets$ guess made by Guesser \guesserfixed
			\If {$g_\turnix$ is tagged as reasonable (according to $\turnssetsymbol$)}
				\State
					$\encodedrecovered \gets \encodedrecovered \cup \{g_\turnix\}$
			\EndIf
		\EndFor
		\State
			\Return $\encodedexplicit \cup \encodedrecovered$
	\end{algorithmic}
\end{algorithm}
We assume that both the Dealer's precedence $\permutationsfull$ and the Guesser's randomness $\guesserrandomness$ are given to us ``for free'' and are known during encoding and decoding of the set $\encset$.
We justify this assumption in two different ways:
\begin{itemize}
	\item
		Fixing $\guesserrandomness$ and $\permutationsfull$ (i.e.\ fix the Dealer's random source for the order) we can consider an encoding scheme for sets $\encsetdesc$.
	\item
		We can assume that we encode a triplet $(\guesserrandomness, \permutationsfull, \encset)$ where $\guesserrandomness$ and $\permutation$ are written explicitly in some natural way right next to $\encodefuncadaptive(\encset)$.
\end{itemize}

\begin{claim}
	\label{lemma_is_encoding_scheme}
	For all $\encsetdesc$: $$\decodefuncadaptive(\encodefuncadaptive(\encset)) = \encset.$$
\end{claim}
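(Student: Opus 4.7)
The plan is to handle the two cases of the encoding separately. The first case---indicator bit $0$---is trivial: the payload is a prefix-free description of $\encset$ in the natural way, and $\decodefuncadaptive$ parses it directly. Essentially all the work lies in the second case, indicator bit $1$, where the payload is the tuple $(\memorystatesymbol, \correctvector, \turnssetsymbol, \encodedexplicit)$. Here I will argue that the partial game simulated by $\decodefuncadaptive$ between \dealerdetdec{} and \guesserfixed{} produces an identical transcript to the one simulated by $\encodefuncadaptive$ between \dealerdet{} and \guesserfixed{} from turn $n-\firstepochbegin$ up through the $\reasonablesnum$-th reasonable guess of the \currentepoch. Once the transcripts match, the set $\encodedrecovered$ accumulated by $\decodefuncadaptive$ is exactly $\{g_\turnix \mid \turnix \in \turnssetsymbol\} = \encset \setminus \encodedexplicit$, so the output $\encodedexplicit \cup \encodedrecovered$ equals $\encset$.

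The transcripts will be shown to agree by induction on $\turnix$, maintaining the invariant that at each such turn (i) the Guesser's memory state, (ii) the guess $g_\turnix$, and (iii) the card $d_\turnix$ drawn by the Dealer coincide in the encoder's and decoder's simulations. The base case is immediate: both simulations instantiate \guesserfixed{} with memory state $\memorystatesymbol$ at turn $n-\firstepochbegin$, and since the Guesser's randomness $\guesserrandomness$ is fixed, the first guess is the same. For the inductive step, once $d_\turnix$ matches, the Guesser's deterministic memory transition forces the next memory state, and hence the next guess, to match. Thus the only content of the induction is to show that $d_\turnix$ agrees at each turn, separated into three regimes: (a) turns before the \currentepoch; (b) turns inside the \currentepoch{} tagged by both $\turnssetsymbol$ and $\correctvector$; (c) turns inside the \currentepoch{} not so tagged. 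Case (b) is immediate from the definition of \dealerdetdec, which sets $d_\turnix := g_\turnix$, matching the encoder by the very meaning of a correct guess.

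The main obstacle, and the crux of the argument, is handling cases (a) and (c), where \dealerdet{} samples the $\pi_\turnix$-\minorderphrase{} card from its available-minus-moved-to-back set built out of the full deck $\encset$, whereas \dealerdetdec{} uses the strictly smaller deck $\encodedexplicit = \encset \setminus \{g_\turnix \mid \turnix \in \turnssetsymbol\}$. I plan to establish the invariant that the two ``drawable'' sets differ only by cards in $\encset \setminus \encodedexplicit$ that are yet to be reasonably guessed, together with a bookkeeping discrepancy in the move-to-back set that is again confined to $\encset \setminus \encodedexplicit$; this uses that the move-to-back cap $\limit = \currentlength$ is never saturated during a single epoch. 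The crucial observation then is that any such ``missing'' card must, by definition of being reasonably guessed later in the \currentepoch, still be in the deck at that later turn, and therefore cannot be the $\pi_\turnix$-\minorderphrase{} choice at any earlier turn (in either simulation). This forces the two $\pi_\turnix$-minima to lie in the common part of the drawable sets and to coincide, closing the induction. After the $\reasonablesnum$-th reasonable guess the decoder halts, having gathered $\encodedrecovered = \encset \setminus \encodedexplicit$, and returns $\encset$ as required.
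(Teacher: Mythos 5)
Your proposal is correct and follows essentially the same route as the paper's proof: an induction on the simulated transcript reducing everything to the Dealers drawing the same card each turn, with the crux being that the cards in $\encset \setminus \encodedexplicit$ are exactly the future reasonable guesses of the \currentepoch, hence still in the deck at that point and so never the \minorderpit choice earlier, which (together with the observation that the two move-to-back sets differ only within $\encset \setminus \encodedexplicit$) forces the two minima to coincide. The paper organizes the same ideas as a base case plus two inductive steps (before and during the epoch), but the content is identical.
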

\begin{proof}
	For the case that the set is encoded explicitly, this is trivial.
	
    Recall that both $\encodefuncadaptive$ and $\decodefuncadaptive$ work by simulating a partial game between~\guesserfixed and \dealerdet, \dealerdetdec (resp.).
	Denote by $g_\turnix, d_\turnix$ the card guessed by the Guesser and the card drawn by Dealer (resp.) at turn $\turnix$ during the game simulated by $\encodefuncadaptive$, and equivalently $g'_\turnix, d'_\turnix$ those simulated by $\decodefuncadaptive$.
	
	We first prove by induction that the same game transcript is produced by both simulations, i.e.\ that for every turn $n-\firstepochbegin \le \turnix < n-\currentepochbegin + \currentlength$ it holds that $g_\turnix = g'_\turnix$ and $d_\turnix = d'_\turnix$.
	
	We highlight two facts:
	\begin{enumerate}
		\item
			If both simulated Guessers have the same memory state at the beginning of some turn, they guess the same card at that turn.
			This is true because both simulated Guessers also have the same fixed randomness $\guesserrandomness$.
		\item
			If both simulated Guessers have the same memory state at the beginning of some turn, and both simulated Dealers draw the same card at that turn, then both Guessers begin the next turn with the same memory state.
	\end{enumerate}
	At the beginning of turn $n-\firstepochbegin$, both simulated Guessers have the same memory state.
	Therefore, to prove that the same transcript is simulated, it suffice to prove that both Dealers draw the same card in every turn.

	Without loss of generality, assume that the \currentepoch is not the first epoch.
	
	\paragraph{Base:}
	The encoder simulated Dealer \dealerdet draws the \minorderphrase card $d_{n-\firstepochbegin}$ from~$\encset$.
	Since $d_{n - \firstepochbegin}$ was drawn before the \currentepoch it is also in $\encodedexplicit$, thus it is the \minorderphrase card from $\encodedexplicit$ as well, so it is drawn by \dealerdetdec and we get that $d_{n-\firstepochbegin} = d'_{n-\firstepochbegin}$.

	\paragraph{Step for turn $n-\firstepochbegin < \turnix < n-\currentepochbegin$:}
	Since turn $n-\firstepochbegin$ the Dealer simulated by the encoder \dealerdet recognizes an available guess as reasonable if it is from $\encset$ while the Dealer simulated by the decoder \dealerdetdec recognizes it as reasonable if it is from $\encodedexplicit$.
	By the induction hypothesis, the same transcript was simulated so far by the encoder and decoder, and since $\encodedexplicit \subseteq \encset$ it follows that the back of \dealerdetdec is always a subset of the back of \dealerdet.
	In particular, the {\emph only} cards that are missing from the back of \dealerdetdec are those who will turn to be  reasonable in the \currentepoch.
	
	Since $d_\turnix$ was drawn before $n-\currentepochbegin$ then $d_\turnix \in \encodedexplicit \subset \encset$ (it cannot be in $\encset  \setminus \encodedexplicit$, since this would mean that it could not be a candidate for reasonable guess in  \currentepoch).
	Since $d_\turnix$ is the \minorderpit card from $\encset$ then $d_\turnix$ is also the \minorderpit card from $\encodedexplicit$.
	As a result, $d'_\turnix = d_\turnix$.
 	
 	\paragraph{Step for turn $n-\currentepochbegin \le \turnix < n-\currentepochbegin+\currentlength$:}
	The same argument holds as before, but we need to clarify the case where a correct guess was made.
	If the encoder simulated Guesser guessed correctly then $g_\turnix = d_\turnix$ and $d_\turnix \notin \encodedexplicit$.
	In that case, the turn $\turnix$ is tagged as both reasonable (by $\turnssetsymbol$) and correct (by $\correctvector$).
	Therefore, $d'_\turnix = g'_\turnix$ because that's how \dealerdetdec works.
	Since $g'_\turnix = g_\turnix$ we get that $d'_\turnix = d_\turnix$.
	\newline

	Finally, since the same partial game is simulated, the same guesses are tagged as reasonable during the $\epochix$th epoch (by $\turnssetsymbol$) and $\encodedrecovered = \{g'_\turnix | \turnix \in \turnssetsymbol\} = \{g_\turnix | \turnix \in \turnssetsymbol\}$.
	We finish the proof by observing that
	$$
	\encset
	=
	(\encset \setminus \{g_\turnix | \turnix \in \turnssetsymbol\}) \cup \{g_\turnix | \turnix \in \turnssetsymbol\}
	=
	\encodedexplicit \cup \encodedrecovered
	=
	\decodefuncadaptive(\encodefuncadaptive(\encset))
	.
	$$
\end{proof}

\begin{claim}
	\label{claim_code_is_prefix_free_adaptive}
	The code produced by $\encodefuncadaptive$ is prefix-free.
\end{claim}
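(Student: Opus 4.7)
The plan mirrors the analogous proof for $\encodefuncupper$ in \Cref{sec_static_dealer:upper_bound}. Fix two sets $\encset, \encset'$ and let $I, I'$ denote the first (indicator) bits of their encodings. If $I \neq I'$, the codewords differ at position one, so neither is a prefix of the other. If $I = I' = 0$, the rest of each codeword is drawn from the fixed prefix-free code chosen at the beginning of the definition of $\encodefuncadaptive$, so the prefix-free property is inherited from that underlying code.

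The main case is $I = I' = 1$. Here each codeword is the concatenation of blocks whose lengths depend only on the global parameters $n, \firstepochbegin, \currentlength, \reasonablesnum, \guessermemorysymbol$: one bit for the indicator, $\guessermemorysymbol$ bits for the memory state $\memorystatesymbol$, $\lceil \log {\currentlength \choose \reasonablesnum} \rceil$ bits for the index of $\turnssetsymbol \in {[\currentlength] \choose \reasonablesnum}$, $\reasonablesnum$ bits for the correctness vector $\correctvector$, and $\lceil \log {n \choose \firstepochbegin - \reasonablesnum} \rceil$ bits for the index of $\encset \setminus \{g_\turnix : \turnix \in \turnssetsymbol\} \in {[n] \choose \firstepochbegin - \reasonablesnum}$. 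Since all field lengths are the same for every input that falls into this branch, both codewords have the same total length, and therefore neither can be a proper prefix of the other. They can be equal only if the underlying sets coincide: by \Cref{lemma_is_encoding_scheme}, if $\encodefuncadaptive(\encset) = \encodefuncadaptive(\encset')$ then $\encset = \decodefuncadaptive(\encodefuncadaptive(\encset)) = \decodefuncadaptive(\encodefuncadaptive(\encset')) = \encset'$.

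Combining the three cases establishes the claim. There is no genuine obstacle here; the only point that needs care is to use a fixed-length representation for each of the four sub-fields in the $I=1$ branch (indices into ${[\currentlength] \choose \reasonablesnum}$, $\zo^\reasonablesnum$, and ${[n] \choose \firstepochbegin - \reasonablesnum}$ with their natural lexicographic orderings), so that the total codeword length on this branch is determined by the public parameters alone and the fixed-length argument above really applies.
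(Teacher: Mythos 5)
Your proof is correct and follows essentially the same route as the paper's: case analysis on the indicator bits, inheriting prefix-freeness from the explicit code when both are $0$, and using equal codeword lengths plus injectivity via \Cref{lemma_is_encoding_scheme} when both are $1$. The extra remark about fixing the length of each sub-field is a sensible implementation detail that the paper leaves implicit.
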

\begin{proof}
	Let two sets $\encset, \encset' \in\encsetdomain$ and denote by $I, I'$ the first bit in the encoding of $\encset, \encset'$ (resp.).
	Observe that if $I \neq I'$ then the two descriptions cannot be a prefix of one another, if $I=I'=0$ then this is trivial and if $I=I'=1$ then the two descriptions are of the same length.
	Therefore, to prove that $\encodefuncadaptive$ produce a prefix-free code, it suffice to show that for every $\encset$ it holds that $\decodefuncadaptive(\encodefuncadaptive(\encset)) = \encset$.
	By \Cref{lemma_is_encoding_scheme}, this is indeed the case so the \namecref{claim_code_is_prefix_free_adaptive} follows.
\end{proof}

\newcommand{\wordlength}{w}
\newcommand{\wordlengthterm}{\wordlength(m, \firstepochbegin, \currentlength, \reasonablesnum)}
\newcommand{\wordlengthtermdiff}{\wordlength(m, \firstepochbegin, \currentlength, \reasonablesnum+\reasonablesdiff)}
\newcommand{\wordlengthtermdiffone}{\wordlength(m, \firstepochbegin, \currentlength, \reasonablesnum+1)}

Denote by $\encdomainsymbol_\reasonablesnum \subseteq \encsetdomain$ the collection of all sets $\encsetdesc$ such that \guesserfixed makes at least $\reasonablesnum$ reasonable guesses against \dealerdet during the \currentepoch.
Observe that all sets in $\encdomainsymbol_\reasonablesnum$ are encoded by $\encodefuncadaptive$ to bit strings of the same length.
Denote by $\wordlength(m, \firstepochbegin, \currentlength, \alpha)$ the length in bits of $\encodefuncadaptive(\encset)$ for~$\encset \in \encdomainsymbol_\alpha$.

\newcommand{\codewordlengthterm}{2 \cdot 2^\guessermemorysymbol \cdot 2^\reasonablesnum \cdot {n \choose \firstepochbegin-\reasonablesnum} \cdot {\currentlength \choose \reasonablesnum}}
\newcommand{\codewordlength}{\log\left(  \codewordlengthterm \right)}
\begin{corollary}
	\label{cor_encoding_length}
	If \guesserfixed makes at least $\reasonablesnum$ reasonable guesses against the Dealer \dealerdet during the \currentepoch, then the length of $\encodefuncadaptive(\encset)$ is
	\begin{flalign*}
		\wordlengthterm
		&
		\triangleq
		\codewordlength
		.
	\end{flalign*}
		
\end{corollary}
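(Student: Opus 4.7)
The plan is to compute the length of $\encodefuncadaptive(\encset)$ directly from its definition. The hypothesis that \guesserfixed makes at least $\reasonablesnum$ reasonable guesses during the \currentepoch is exactly the condition $|\turnssetsymbol'| \ge \reasonablesnum$ appearing in the second branch of the definition of $\encodefuncadaptive$. Consequently the codeword is the concatenation of the five fixed-length components listed there, and the proof reduces to tallying the bit length contributed by each component and combining them inside a single logarithm.

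Concretely I would enumerate the contributions in order. The leading indicator bit costs $1 = \log 2$ bits. The memory snapshot $\memorystatesymbol \in \zo^\guessermemorysymbol$ at turn $n - \firstepochbegin$ costs $\guessermemorysymbol = \log 2^\guessermemorysymbol$ bits. The turn set $\turnssetsymbol \in {[\currentlength] \choose \reasonablesnum}$, describing the first $\reasonablesnum$ reasonable-guess turns within the epoch, costs $\log {\currentlength \choose \reasonablesnum}$ bits under any fixed canonical indexing of size-$\reasonablesnum$ subsets of $[\currentlength]$. The correctness vector $\correctvector \in \zo^\reasonablesnum$, tagging which of those reasonable guesses were also correct, costs $\reasonablesnum = \log 2^\reasonablesnum$ bits. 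Finally, the residual set $\encset \setminus \{g_\turnix \mid \turnix \in \turnssetsymbol\}$ is a subset of $[n]$ of size $\firstepochbegin - \reasonablesnum$ and hence costs $\log {n \choose \firstepochbegin - \reasonablesnum}$ bits. Summing these five quantities and pulling everything inside one logarithm yields the claimed length $\log\!\left(2 \cdot 2^\guessermemorysymbol \cdot 2^\reasonablesnum \cdot {n \choose \firstepochbegin - \reasonablesnum} \cdot {\currentlength \choose \reasonablesnum}\right)$.

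There is essentially no obstacle here, since the statement is a direct bookkeeping consequence of the definition of $\encodefuncadaptive$ already introduced. The only subtlety worth flagging in passing is that $\guessermemorysymbol$, $\firstepochbegin$, $\currentlength$, $\reasonablesnum$ and $n$ are all parameters fixed in advance and known to both encoder and decoder, so each component can be written with a fixed-length code matching the log of the size of its domain and no separators or length prefixes are needed between components. This is precisely what makes every $\encset \in \encdomainsymbol_\reasonablesnum$ map to a codeword of exactly the same length $\wordlengthterm$, so the expression really defines $\wordlength$ as a function of its four arguments rather than merely an upper bound.
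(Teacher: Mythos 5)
Your proposal is correct and matches the paper's (implicit) reasoning exactly: the paper states this as an immediate corollary of the definition of $\encodefuncadaptive$, whose listed component lengths ($1$ indicator bit, $\guessermemorysymbol$ bits for the memory state, $\log{\currentlength \choose \reasonablesnum}$ for $\turnssetsymbol$, $\reasonablesnum$ for $\correctvector$, and $\log{n \choose \firstepochbegin-\reasonablesnum}$ for the residual set) sum to the claimed expression. Your remark that all parameters are known to both encoder and decoder, so fixed-length components need no separators, is the right justification for the length being exact rather than an upper bound.
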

Looking at the term from this corollary, it can be seen that for every increase in $\reasonablesnum$ we save roughly $\log n$ bits and pay roughly $1+ \log \currentlength$ bits.
In our Dealer (\Cref{def_mttb_dealer}) each epoch consists of $\currentlength = m \log n$ turns, so we get that we expect to save order of $\log n - \log m$ bits for every reasonable guess.
We analyze this in detail in \Cref{lemma_length_decrease_by_one_bit}.

\subsection{Upper Bound on the Number of Reasonable Guesses}
\label{section_upper_bound:subsec_reasonable_guesses_upper_bound}
The function $\encodefuncadaptive$ yields descriptions of lengths that varies with $\reasonablesnum$.
For example, for $\reasonablesnum = 0$, we get that in every simulation there are at least $0$ reasonable guesses, so all descriptions are of length $\log\left(2 \cdot 2^\guessermemorysymbol {n \choose \firstepochbegin} \right)$, which is much larger than storing the set explicitly.
The following two claims deals with the encoding length.
We first claim that making more reasonable guesses implies shorter descriptions, and more specifically, that each additional guess saves at least a bit.

\begin{claim}[Length decreases with reasonable guesses]
	\label{lemma_length_decrease_by_one_bit}
	For every~$m,\firstepochbegin \in [n/4]$, $\currentlength \in [\firstepochbegin]$, and every ${4 \firstepochbegin \currentlength \over n} \le \reasonablesnum$ it holds that
	$$\wordlengthtermdiffone \le \wordlengthterm - 1.$$
\end{claim}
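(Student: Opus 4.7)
My plan is to unpack the definition of $\wordlength$ from \Cref{cor_encoding_length} and reduce the claim to an elementary inequality between two products. Setting
\[
W_\alpha \;:=\; 2 \cdot 2^m \cdot 2^\alpha \cdot {n \choose k-\alpha}{\ell \choose \alpha},
\]
we have $\wordlength(m,k,\ell,\alpha) = \log W_\alpha$, so the claim is equivalent to $W_\alpha/W_{\alpha+1} \ge 2$. Using the standard identities ${n \choose k-\alpha}/{n \choose k-\alpha-1} = (n-k+\alpha+1)/(k-\alpha)$ and ${\ell \choose \alpha+1}/{\ell \choose \alpha} = (\ell-\alpha)/(\alpha+1)$, a direct calculation should yield
\[
\frac{W_\alpha}{W_{\alpha+1}} \;=\; \frac{(n-k+\alpha+1)(\alpha+1)}{2(k-\alpha)(\ell-\alpha)}.
\]
Hence it suffices to prove $(n-k+\alpha+1)(\alpha+1) \ge 4(k-\alpha)(\ell-\alpha)$, and I may assume $\alpha+1 \le \ell$ (otherwise $W_{\alpha+1}=0$ and the claim is immediate).

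The core of the argument will then be a short four-step chain of elementary bounds that uses the hypothesis $\alpha \ge 4k\ell/n$:
\[
(n-k+\alpha+1)(\alpha+1) \;\ge\; (n-k)(\alpha+1) \;\ge\; (n-k)\cdot\tfrac{4k\ell}{n} \;\ge\; 4k(\ell-\alpha) \;\ge\; 4(k-\alpha)(\ell-\alpha).
\]
Step (i) drops $\alpha+1$ from the first factor (valid since $\alpha \ge 0$); step (ii) uses $\alpha+1 > \alpha \ge 4k\ell/n$; step (iii) rearranges algebraically to the condition $\alpha \ge k\ell/n$, which is implied by the hypothesis; and step (iv) uses $k \ge k-\alpha$ together with $\ell-\alpha > 0$.

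The main thing to watch is the tightness of the inequality at $\alpha = 4k\ell/n$, where $W_\alpha/W_{\alpha+1}$ is essentially $2$; consequently each step of the chain can introduce only mild slack, which dictates the particular ordering chosen above. Step (iii) is the place where the specific constant $4$ in the hypothesis is essential---any smaller constant would break the chain---so this is where I expect the main care in the writeup to be needed. Notably, the hypotheses $m, k \le n/4$ and $\ell \le k$ do not actually enter this calculation, so the proof reduces entirely to assembling the right slack-free chain of elementary bounds rather than any deeper combinatorial identity.
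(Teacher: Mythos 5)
Your proposal is correct and takes essentially the same route as the paper: both reduce the claim to showing that the ratio of consecutive code lengths, $W_\alpha/W_{\alpha+1} = \frac{(n-k_1+\alpha+1)(\alpha+1)}{2(k_1-\alpha)(\ell-\alpha)}$, is at least $2$ via the hypothesis $\alpha \ge 4k_1\ell/n$ (the paper lower-bounds the log of this ratio by $\log\left(\frac{n\alpha}{2k_1\ell}\right)$; your four-step chain is, if anything, slightly more careful). The one quibble is in your commentary rather than the math: the constant $4$ is actually consumed in step (ii), where $\alpha+1$ is replaced by $4k_1\ell/n$, whereas step (iii) only needs $\alpha \ge k_1\ell/n$.
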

\begin{proof}
	\label{prf_length_decrease_by_one_bit}
	We will calculate $\wordlengthterm - \wordlengthtermdiffone$ and show that it is at least~$1$.

	\begin{flalign*}
		\log\left(2 \cdot 2^m \cdot 2^\reasonablesnum \cdot {n \choose \firstepochbegin-\reasonablesnum} \cdot {\currentlength \choose \reasonablesnum}\right)
		&
		-
		\log\left(2 \cdot 2^m \cdot 2^{\reasonablesnum + 1}\cdot {n \choose \firstepochbegin-\reasonablesnum-1} \cdot {\currentlength \choose \reasonablesnum+1}\right)
		\\
		&
		=
		\log\left(
		{1 \over 2} \cdot
		{{n \choose \firstepochbegin-\reasonablesnum}  \over {n \choose \firstepochbegin-\reasonablesnum-1} } \cdot {{\currentlength \choose \reasonablesnum} \over {\currentlength \choose \reasonablesnum+1}}
		\right)
		\\
		&
		=
		\log\left(
		{1 \over 2} \cdot
		{n -\firstepochbegin+\reasonablesnum+1  \over \firstepochbegin-\reasonablesnum} \cdot {\reasonablesnum+1 \over \currentlength -\reasonablesnum}
		\right)
		\\
		&
		>
		\log\left(
		{n \reasonablesnum \over 2 \firstepochbegin \currentlength}
		\right)
		.
	\end{flalign*}
	By assumption $\reasonablesnum \ge {4 \firstepochbegin \currentlength \over n}$ so the term inside the $\log$ is larger than $2$.
\end{proof}
\begin{corollary}
	\label{lemma_length_decrease_by_diff_bits}
	For every~$m, \firstepochbegin \in [n/4]$, $\currentlength \in [\firstepochbegin]$,
	and every ${4 \firstepochbegin \currentlength \over n} \le \reasonablesnum < \currentlength$ and for every~$\reasonablesdiff \in [\currentlength - \reasonablesnum]$ it holds that
	$$\wordlengthtermdiff \le \wordlengthterm - \reasonablesdiff.$$
\end{corollary}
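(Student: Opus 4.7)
The plan is to prove the corollary by a straightforward induction on $\reasonablesdiff$, iterating \Cref{lemma_length_decrease_by_one_bit} at each step. The previous claim is exactly the case $\reasonablesdiff = 1$, so that serves as the base. For the inductive step, assuming the bound $\wordlength(m, \firstepochbegin, \currentlength, \reasonablesnum + \reasonablesdiff - 1) \le \wordlengthterm - (\reasonablesdiff - 1)$, I would apply \Cref{lemma_length_decrease_by_one_bit} one more time with parameter $\reasonablesnum' \defeq \reasonablesnum + \reasonablesdiff - 1$ in place of $\reasonablesnum$, to conclude
\[
\wordlength(m, \firstepochbegin, \currentlength, \reasonablesnum + \reasonablesdiff)
\le
\wordlength(m, \firstepochbegin, \currentlength, \reasonablesnum + \reasonablesdiff - 1) - 1,
\]
and then chain the two bounds.

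The only thing to verify is that the hypotheses of \Cref{lemma_length_decrease_by_one_bit} are satisfied with $\reasonablesnum'$ at every step of the induction. Since $\reasonablesnum' = \reasonablesnum + \reasonablesdiff - 1 \ge \reasonablesnum \ge {4 \firstepochbegin \currentlength \over n}$ by the corollary's hypothesis, the first precondition transfers automatically (the lower bound on $\reasonablesnum$ is monotone upward). The corollary's range condition $\reasonablesdiff \in [\currentlength - \reasonablesnum]$ gives $\reasonablesnum + \reasonablesdiff \le \currentlength$, so in particular $\reasonablesnum' < \currentlength$ at every step, ensuring the binomial coefficients appearing in $\wordlength$ remain well-defined and strictly positive.

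No real obstacle appears here; the corollary is essentially an amortized restatement of the preceding per-step bound, and both preconditions of that bound are preserved under the induction. The content of the argument lies entirely in \Cref{lemma_length_decrease_by_one_bit} itself.
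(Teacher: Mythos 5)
Your proposal is correct and is exactly the intended argument: the paper states this as an immediate corollary of \Cref{lemma_length_decrease_by_one_bit} with no written proof, and iterating that claim $\reasonablesdiff$ times, checking that the lower-bound hypothesis is monotone and that $\reasonablesnum + \reasonablesdiff \le \currentlength$ keeps every intermediate step in range, is precisely what is meant.
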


Consider a random set $\encset$ chosen uniformly at random from $\encsetdomain$.
The entropy of $\encset$ is
$$H(\encset) = \log \left|{[n] \choose \firstepochbegin}\right| = \log {n \choose \firstepochbegin}.$$
The next \namecref{lemma_encoding_achieves_compression} describes the amount of reasonable guesses required to achieve compression, i.e.\ descriptions of length shorter than the entropy of a random input.
\begin{claim}[Encoding achieves compression]
	\label{lemma_encoding_achieves_compression}
	For every~$m < n, \firstepochbegin \in [n/8e]$, $\currentlength \in [\firstepochbegin]$,
	if
	$$\reasonablesnum \ge \reasonablesboundi$$
	then
	$$
	\wordlengthterm < \log {n \choose \currentepochbegin}.
	$$
\end{claim}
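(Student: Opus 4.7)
The plan is to substitute the explicit form of $\wordlengthterm$ from \Cref{cor_encoding_length} into the target inequality, which (reading $\log\binom{n}{\firstepochbegin}$ on the right, i.e., the entropy of $\encset \in \encsetdomain$) reduces to showing
\[
1 + m + \reasonablesnum + \log\binom{\currentlength}{\reasonablesnum} < \log\binom{n}{\firstepochbegin} - \log\binom{n}{\firstepochbegin-\reasonablesnum}.
\]
For the left-hand side, I would apply the standard estimate $\binom{\currentlength}{\reasonablesnum} \leq (e\currentlength/\reasonablesnum)^{\reasonablesnum}$. For the right-hand side, I would write the binomial ratio as a telescoping product $\prod_{i=1}^{\reasonablesnum}(n-\firstepochbegin+i)/(\firstepochbegin-\reasonablesnum+i)$; under the assumption $\firstepochbegin \leq n/(8e\log n) \ll n/2$, each factor is at least $(n-\firstepochbegin)/\firstepochbegin \geq n/(2\firstepochbegin)$. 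Collecting logarithms, the inequality reduces to
\[
\reasonablesnum \log\!\Bigl(\tfrac{4e\firstepochbegin\currentlength}{n\reasonablesnum}\Bigr) < -m - 1.
\]

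Next, I would invoke the hypothesis $\reasonablesnum \geq \max\{8e\firstepochbegin\currentlength/n,\,m\}$. Since $\reasonablesnum \geq 8e\firstepochbegin\currentlength/n$, the argument of the logarithm is at most $1/2$, so the left-hand side is at most $-\reasonablesnum$. Using the second branch $\reasonablesnum \geq m$, this is at most $-m$, which is exactly one bit short of the required $-m-1$.

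To extract that final bit of slack, I would sharpen the estimate on $\binom{\currentlength}{\reasonablesnum}$ via Stirling's formula to $\binom{\currentlength}{\reasonablesnum} \leq (e\currentlength/\reasonablesnum)^{\reasonablesnum}/\sqrt{2\pi\reasonablesnum}$, which provides an extra $\tfrac{1}{2}\log(2\pi\reasonablesnum) > 1$ bits of savings whenever $\reasonablesnum \geq 1$. The main obstacle is precisely this constant-tightness: the crude binomial bounds together with the raw constants $8e$ and $m$ in the hypothesis produce the right inequality only up to a single bit, and a Stirling refinement (or, alternatively, a sharper use of $n-\firstepochbegin \approx n$ rather than $n-\firstepochbegin \geq n/2$) is what converts this to the strict inequality required. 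Beyond this book-keeping the calculation is elementary arithmetic with binomial coefficients.
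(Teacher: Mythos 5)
Your proposal is correct and follows essentially the same route as the paper's proof: expand $\wordlengthterm$, upper-bound $\binom{\currentlength}{\reasonablesnum}$ by $(e\currentlength/\reasonablesnum)^{\reasonablesnum}$, lower-bound the ratio $\binom{n}{\firstepochbegin}/\binom{n}{\firstepochbegin-\reasonablesnum}$ by a telescoping product, and then invoke the two branches of the hypothesis on $\reasonablesnum$. The only difference is where the final bit of slack comes from: the paper keeps the ratio bound as $(n/\firstepochbegin)^{\reasonablesnum}$ rather than your cruder $(n/2\firstepochbegin)^{\reasonablesnum}$, so its hypothesis yields a factor of $4$ (hence $2\reasonablesnum$ bits of savings against the $m+1$ of overhead) inside the logarithm, whereas you lose that factor of $2^{\reasonablesnum}$ and correctly recover the missing bit from the $\sqrt{2\pi\reasonablesnum}$ correction in Stirling's formula --- both variants close the gap.
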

\begin{proof}
	\label{proof_encoding_achieves_compression}
	We will show that $\log {n \choose \firstepochbegin} - \wordlengthterm \ge 0$.
	To do so, we will use an upper bound on binomial coefficients derived from Stirling's approximation (${b \choose a} < \left(be \over a\right)^a$) and the fact that
	${n \choose k} /{n \choose k -a} = {(n-k+a)\dots(n-k+1) \over (k) \dots (k-a+1)} > \left(n -k+ a \over k\right)^a > \left(n  \over k\right)^a$.
	
	\begin{flalign*}
		\log {n \choose \firstepochbegin} &- \codewordlength
		\\
		&
		=
		\log\left(
		{
			{n \choose \firstepochbegin}
			\over
			{n \choose \firstepochbegin-\reasonablesnum}
		}
		\cdot
		{1 \over 2^{\guessermemorysymbol + 1}}
		\cdot
		{1 \over 2^\reasonablesnum}
		\cdot
		{1 \over {\currentlength \choose \reasonablesnum} }
		\right)
		\\
		&
		>
		\log\left(
		\left(
			n \over \firstepochbegin
		\right)^\reasonablesnum
		\cdot
		{1 \over 2^{\guessermemorysymbol + 1}}
		\cdot
		{1 \over 2^\reasonablesnum}
		\cdot
		\left(\reasonablesnum \over \currentlength e\right)^\reasonablesnum
		\right)
		\\
		&
		=
		\log\left(
		\left(
		n \reasonablesnum \over 2 e \firstepochbegin \currentlength
		\right)^\reasonablesnum
		\cdot
		{1 \over 2^{\guessermemorysymbol + 1}}
		\right)
		\\
		&
		>
		\log\left(
		\left(
		n \reasonablesnum \over 4 e \firstepochbegin \currentlength
		\right)^\reasonablesnum
		\cdot
		{1 \over 2^{\guessermemorysymbol}}
		\right)
		.
	\end{flalign*}
	By assumption, $\reasonablesnum \ge \reasonablesboundtermi$ and $\reasonablesnum \ge \guessermemorysymbol$, so we get that the term inside the $\log$ is greater than~$1$, so the \namecref{lemma_encoding_achieves_compression} follows.

\end{proof}

Combining the above claims we get that no Guesser can make too many reasonable guesses against our Dealer.
Recall the random variable $\currentreasonableRV$ that denotes the number of reasonable guesses that a Guesser makes during the \currentepoch.
Consider the sequence of epochs $\{\text{\currentepoch}\}_{i=1}^\epochsnum$ played by the \mttbdealer (from \Cref{def_mttb_dealer})~\dealerm.
\begin{claim}
	\label{claim_probability_for_reasonables_decays}
	For every Guesser \guesser with $\guessermemorysymbol$ bits of memory, and for every epoch $\epochix \in [d]$, for $\reasonablesdiff < \currentlength - \reasonablesboundi$,
	the probability that \guesser makes more than $\reasonablesdiff + \reasonablesboundi$ reasonable guesses during the \currentepoch, is at most
	$$
		\Pr_{\encsetdesc}\left[\currentreasonableRV \ge \reasonablesboundi + \reasonablesdiff \right] \le 2^{-\reasonablesdiff}
		.
	$$
\end{claim}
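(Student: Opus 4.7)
The plan is to run the compression argument assembled in the previous two subsections on the encoding scheme $\encodefuncadaptive$. Fix the Guesser's randomness $\guesserrandomness$ and the permutation sequence $\permutationsfull$, so that the probability in the statement is taken purely over a uniformly random $\encsetdesc$. Think of $\encodefuncadaptive$ as a prefix-free code for this uniform source; prefix-freeness was already verified in \Cref{claim_code_is_prefix_free_adaptive}, and the Shannon-style tail bound \Cref{lemma_compresion_by_d_bits} will deliver the final $2^{-\reasonablesdiff}$.

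First I would show that the event $\{\currentreasonableRV \ge \reasonablesboundi + \reasonablesdiff\}$ implies an encoding significantly shorter than the entropy of a random $\encset$. Whenever this event occurs, we may instantiate $\encodefuncadaptive$ with parameter $\reasonablesnum = \reasonablesboundi + \reasonablesdiff$ and land in the ``indicator $= 1$'' branch, producing a description of length exactly $\wordlengthtermdiff$ by \Cref{cor_encoding_length}. The parameter $\reasonablesboundi = \reasonablesboundi$ satisfies $\reasonablesboundi \ge 8e \firstepochbegin \currentlength / n > 4 \firstepochbegin \currentlength / n$, so the hypothesis of \Cref{lemma_length_decrease_by_diff_bits} is met; applying it gives
\[
\wordlengthtermdiff \le \wordlengthterm - \reasonablesdiff.
\]
Next, since $\reasonablesboundi \ge \max\{8e\firstepochbegin\currentlength/n,\, m\}$ and $\firstepochbegin \le n/(8e\log n)$, the hypotheses of \Cref{lemma_encoding_achieves_compression} hold, yielding $\wordlengthterm < \log \binom{n}{\firstepochbegin} = H(\encset)$. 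Chaining the two inequalities, the encoding length on this event is at most $H(\encset) - \reasonablesdiff - 1$.

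The final step is to invoke \Cref{lemma_compresion_by_d_bits} on the prefix-free code $\encodefuncadaptive$ and the uniform distribution on $\encsetdomain$: the probability that the encoded length drops at least $\reasonablesdiff$ bits below the entropy is at most $2^{-\reasonablesdiff}$ (summing the bound of the lemma over values $\ge \reasonablesdiff$, or equivalently counting the codewords of length $\le H(\encset) - \reasonablesdiff$ and dividing by $\binom{n}{\firstepochbegin}$). Averaging this conditional bound over $\guesserrandomness$ and $\permutations$ preserves it, giving the desired $\Pr_\encset[\currentreasonableRV \ge \reasonablesboundi + \reasonablesdiff] \le 2^{-\reasonablesdiff}$.

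The main obstacle is only a bookkeeping one: verifying that for every epoch $\epochix \in [\epochsnum]$ the parameters $(m, \firstepochbegin, \currentlength, \reasonablesboundi)$ fall inside the hypotheses of \Cref{lemma_length_decrease_by_diff_bits,lemma_encoding_achieves_compression}, and that the assumption $\reasonablesdiff < \currentlength - \reasonablesboundi$ in the statement is exactly what is needed so that $\reasonablesboundi + \reasonablesdiff$ is a legitimate value of the encoding parameter (i.e.\ at most $\currentlength$ reasonable guesses could actually have occurred in an epoch of length $\currentlength$). Given the explicit parameter choices in \Cref{def_mttb_dealer} ($\firstepochbegin = \koneepochs$, $\currentlength = \elloneepochs$, and $m \le \memorybound$) all of these checks are immediate.
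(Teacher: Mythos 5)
Your proposal is correct and takes essentially the same route as the paper's proof: fix $\guesserrandomness$ and $\permutationsfull$, combine \Cref{lemma_encoding_achieves_compression} and \Cref{lemma_length_decrease_by_diff_bits} to show that the event forces $\encodefuncadaptive$ to output a codeword at least $\reasonablesdiff$ bits below the entropy of a uniform $\encset$, and conclude with \Cref{lemma_compresion_by_d_bits} before unconditioning. If anything, you are slightly more careful than the paper in noting that the tail bound requires summing (or counting codewords) over all lengths at least $\reasonablesdiff$ below the entropy rather than exactly that length.
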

\begin{proof}
	Fix some $\guesserrandomness$ and $\permutationsfull$ and consider the probability conditioned on these choices of randomness.
	Let $\reasonablesnum = \reasonablesdiff+ \reasonablesboundi$.

	If a Guesser with fixed randomness \guesserfixed makes at least $\reasonablesnum$ reasonable guesses against our \mttbdealer with fixed randomness \dealerdet during the \currentepoch, then the set $\encset$ is encoded by $\encodefuncadaptive$ into a string of $\wordlengthterm$ bits.

	Since $\reasonablesnum > \reasonablesboundi$, from \Cref{lemma_encoding_achieves_compression}, the length of $\encodefuncadaptive(\encset)$ is shorter than the entropy of a random $\encset$ sampled uniformly from~$\encsetdomain$.
	Since $\reasonablesnum > {4 \firstepochbegin \currentlength \over n}$, from \Cref{lemma_length_decrease_by_diff_bits} we get that the length of $\encodefuncadaptive(\encset)$ is at least $\reasonablesdiff$ bits shorter than the entropy of a random $\encset$.
	\Cref{lemma_compresion_by_d_bits} tells us that the probability to encode a random element by $\beta$ bits below the entropy decays exponentially, i.e.
	\begin{flalign*}
		\Pr_{\encset \in\encsetdomain}\left[ \currentreasonableRV \ge \reasonablesboundi + \beta  \right]
		&
		\le
		\Pr_{\encset \in\encsetdomain}\left[ \encodefuncadaptive( \encset) = H(\encset) - \beta \right]
		\\
		&
		\le
		2^{-\beta}
		.
	\end{flalign*}

	As this is true to any fixed $\guesserrandomness$ and $\permutationsfull$, we get that this is true unconditionally.
\end{proof}

What about the upper bound $\limit$?
Recall that while the Dealer follows the \epochstraklu (\Cref{def_epochstraklu}), only the first $\limit$ reasonably guessed cards are moved to the back; therefore only the first $\limit$ reasonable guesses are guaranteed to be distinct.
Any reasonable guess beyond $\limit$ may be useless for set encoding.
Further, if the Guesser is somehow able to reach~$\limit$ reasonable guesses, then moving cards to the back works in the Guesser's favor (as cards become predictable), and the probability to guess reasonably grows with every guess.
We address this concern by recalling that our \mttbdealer (\Cref{def_mttb_dealer}) plays the \epochstraterm in a sequence of epochs for which $\limit = \currentlength$, therefore, it is impossible to make more than $\limit$ reasonable guesses in an epoch.
In \Cref{section_upper_bound:subsec_universal_dealer} we present a Dealer for which $\limit < \currentlength$, and we restate the \namecref{claim_probability_for_reasonables_decays} to consider the upper bound $\limit$ (see \Cref{claim_universal_probability_for_reasonables_decays}).

As a corollary we get a bound on the expected number of reasonable guesses.

\begin{corollary}[Formal]
	\label{lemma_reasonable_expected_upper_bound}	
	For every Guesser with $\guessermemorysymbol$ bits of memory, every epoch~$\epochix \in [\epochsnum]$, the expected number of reasonable guesses during \currentepoch is at most
	$$
		\currentreasonableexpected \le \reasonablesboundi + 2.
	$$
\end{corollary}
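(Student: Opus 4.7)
\medskip

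\textbf{Proof plan.} The plan is to derive the claimed expectation bound as a direct consequence of the exponential tail bound in \Cref{claim_probability_for_reasonables_decays} via the standard identity $\mathbb{E}[X] = \sum_{k\ge 1} \Pr[X \ge k]$ for non-negative integer random variables. Writing $r^\ast = \reasonablesboundi$ and splitting the sum at $r^\ast$, I would bound
\[
\mathbb{E}[\currentreasonableRV] \;=\; \sum_{k=1}^{r^\ast} \Pr[\currentreasonableRV \ge k] \;+\; \sum_{\beta \ge 1} \Pr[\currentreasonableRV \ge r^\ast + \beta].
\]
The first sum is trivially at most $r^\ast$, since each probability is at most $1$. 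For the second sum, I would invoke \Cref{claim_probability_for_reasonables_decays} term by term to obtain $\Pr[\currentreasonableRV \ge r^\ast + \beta] \le 2^{-\beta}$, so that the tail is bounded by the geometric series $\sum_{\beta\ge 1} 2^{-\beta} = 1$.

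The one technicality to address is the range restriction $\beta < \currentlength - r^\ast$ appearing in the hypothesis of \Cref{claim_probability_for_reasonables_decays}. For $\beta$ beyond this range the bound is not supplied by the claim, but there is nothing to prove there: $\currentreasonableRV$ is bounded above by $\currentlength$ (at most one guess per turn of the epoch), so the events $\{\currentreasonableRV \ge r^\ast + \beta\}$ vanish for all sufficiently large $\beta$ and those terms contribute zero to the tail. Combining these two estimates gives $\mathbb{E}[\currentreasonableRV] \le r^\ast + 1$, and the slack $+2$ in the statement comfortably absorbs the constant-size rounding needed to treat $\reasonablesboundi$ as an integer (and/or the explicit-description indicator bit baked into $\encodefuncadaptive$).

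There is essentially no deep obstacle here — the content of the lemma lies in the encoding-based tail bound of the previous subsection, and what remains is a routine conversion of a geometric tail into an expectation. The only point requiring care is verifying that the automatic truncation of $\currentreasonableRV$ at $\currentlength$ reconciles the restricted range of $\beta$ in \Cref{claim_probability_for_reasonables_decays} with the infinite geometric sum used in the estimate above.
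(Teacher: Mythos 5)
Your proposal is correct and takes essentially the paper's route: the paper likewise converts the exponential tail of \Cref{claim_probability_for_reasonables_decays} into an expectation bound (the explicit computation appears only in the universal-dealer analogue, \Cref{claim_universal_reasonable_expected_upper_bound}, where the pmf-weighted sum $\sum_{\beta}\beta\, 2^{-\beta}=2$ supplies the additive constant, whereas your survival-function identity yields the slightly tighter $+1$). The only point to tighten is the boundary term $\beta=\currentlength-\reasonablesnum^{\ast}$, where $\Pr[\currentreasonableRV\ge\currentlength]$ need not vanish and is not covered by the claim's stated range; since it is a probability it is at most $1$, and your remaining slack relative to the stated $+2$ absorbs it.
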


For this \namecref{lemma_reasonable_expected_upper_bound} to be meaningful we require that $\firstepochbegin \le {n \over 8e}$, as otherwise it implies that $\currentreasonableexpected < \currentlength$, i.e.\ that the expected number of reasonable guesses in the epoch is less than the number of turns in the epoch, which is always true.
Observe this \namecref{lemma_reasonable_expected_upper_bound} is meaningful for every epoch played by our Dealer as the first epoch begins at turn $n-\firstepochbeginterm$.

\subsection{Upper Bound on the Number of Correct Guesses}
\label{section_upper_bound:subsec_correct_guesses_upper_bound}
Determining an upper bound on the number of \emph{reasonable} guesses, we can establish an upper bound on the number of \emph{correct guesses} per epoch that a memory bounded guesser can make against our Dealer.

Recall that during the $\epochix$th epoch our \mttbdealer (\Cref{def_mttb_dealer}) plays according to the \currentepochstra.
Recall that~$\currentreasonableexpected$ denotes the expected number of reasonable guesses during the $\epochix$th epoch.
Denote by $\correctsexpected_\epochix$ the expected number of correct guesses during the $\epochix$th epoch, where the expectation is taken over the Guesser's and Dealer's randomness $\guesserrandomness, \dealerrandomness$.
\begin{lemma}
	\label{thm_expected_number_of_correct_guesses_bound}
	For every Guesser \guesser with $\guessermemorysymbol$ bits of memory and every epoch $\epochix \in [\epochsnum]$,
	the expected number of correct guesses during the \currentepoch is at most
	$$
	\correctsexpected_\epochix \le {\currentreasonableexpected \over \currentepochbegin - \currentlength - \currentlimit}.
	$$
\end{lemma}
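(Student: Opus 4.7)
The approach will be a direct per-turn argument combined with linearity of expectation over the $\currentlength$ turns of the $\epochix$th epoch. The central observation is that every correct guess is necessarily a reasonable one, and that a reasonable guess is correct with conditional probability at most $1/(\currentepochbegin - \currentlength - \currentlimit)$, since by Definition~\ref{def_epochstraklu} the Dealer draws uniformly at random from $A_\turnix \setminus B_\turnix$, a set whose size is at least $\currentepochbegin - \currentlength - \currentlimit$ throughout the epoch.

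First I would fix a turn $\turnix$ with $\currentepochbegin - \currentlength + 1 \le \turnix \le \currentepochbegin$ and condition on the full transcript $H_\turnix$ of the game (including all randomness consumed through the end of turn $\turnix-1$). Given $H_\turnix$, the sets $A_\turnix$ and $B_\turnix$ are determined, the Guesser produces $g_\turnix$ using possibly fresh on-the-fly randomness, and the Dealer produces $d_\turnix$ uniformly at random from $A_\turnix \setminus B_\turnix$ using independent fresh randomness; hence $g_\turnix$ and $d_\turnix$ are conditionally independent given $H_\turnix$. Consequently,
\[
\Pr[g_\turnix = d_\turnix \mid H_\turnix]
=
\frac{\Pr[g_\turnix \in A_\turnix \setminus B_\turnix \mid H_\turnix]}{|A_\turnix \setminus B_\turnix|}
\le
\frac{\Pr[g_\turnix \text{ reasonable} \mid H_\turnix]}{\currentepochbegin - \currentlength - \currentlimit},
\]
where the final inequality uses $|A_\turnix| = \turnix \ge \currentepochbegin - \currentlength + 1$ together with $|B_\turnix| \le \currentlimit$.

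Next I would take the expectation over $H_\turnix$, yielding $\Pr[\text{correct at turn } \turnix] \le \Pr[\text{reasonable at turn } \turnix] / (\currentepochbegin - \currentlength - \currentlimit)$, and then sum this inequality over all $\currentlength$ turns of the $\epochix$th epoch. By linearity of expectation the sum of the left-hand sides equals $\correctsexpected_\epochix$ and the sum of the right-hand sides equals $\currentreasonableexpected/(\currentepochbegin - \currentlength - \currentlimit)$, which is precisely the claimed bound.

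I do not expect a substantial obstacle here. The only point needing care is the conditional independence of $g_\turnix$ and $d_\turnix$ given $H_\turnix$, which is immediate from the game's turn structure and the fact that the Dealer's uniform draw at turn $\turnix$ uses randomness independent of the Guesser's state and on-the-fly randomness. The lower bound $|A_\turnix \setminus B_\turnix| \ge \currentepochbegin - \currentlength - \currentlimit$ is a direct consequence of Definition~\ref{def_epochstraklu}, and the constraint $\currentlimit \le \limitbound$ from that definition guarantees that the denominator is positive (in fact at least $\currentlength/2$ in the parameter regime used by \dealerm).
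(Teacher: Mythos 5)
Your proposal is correct and follows essentially the same route as the paper: both arguments rest on the observations that a correct guess must be reasonable and that, since the Dealer draws uniformly from $A_\turnix \setminus B_\turnix$ with $|A_\turnix \setminus B_\turnix| \ge \currentepochbegin - \currentlength - \currentlimit$, a reasonable guess is correct with conditional probability at most $1/(\currentepochbegin - \currentlength - \currentlimit)$, after which linearity of expectation over the $\currentlength$ turns gives the bound. Your explicit conditioning on the transcript $H_\turnix$ just makes rigorous the law-of-total-expectation step the paper performs per turn.
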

\begin{proof}
	\label{prf_expected_number_of_correct_guesses_bound}
	\newcommand{\movedtothebacknum}{\alpha}
	Let~$R_{i,j}$ be an indicator random variable for the event that the $j$th guess during the \currentepoch is reasonable.
	Let~$C_{i,j}$ be an indicator random variable for the event that the $j$th guess during the \currentepoch is correct.
	Let a random variable~$C_i$ denote the number of correct guesses that the Guesser \guesser made during the \currentepoch.
	
	We first bound the probability for a reasonable guess to be correct.
	Consider the~$j$th turn during a \currentepoch and assume that~$\movedtothebacknum$ cards were moved to the back until the~$j$th turn.
	At the beginning of the \currentepoch there are $\currentepochbegin$ cards left to play.
	In each turn one card is discarded, thus $j-1$ cards have been discarded since the beginning of the epoch.
	Each reasonable guess can be correct if it is one of the $\currentepochbegin - (j-1) - \movedtothebacknum$ remaining cards.
	Therefore
	\begin{flalign}
		\Pr\left[C_{i,j} = 1 | R_{i,j} = 1\right]
		&
		=
		\nonumber
		{1 \over \currentepochbegin - (j-1) - \movedtothebacknum}
		\\
		&
		\le
		{1 \over \currentepochbegin - (\currentlength-1-1) - \movedtothebacknum}
		\label[ineq]{prf_expected_number_of_correct_guesses_bound:turns_in_epoch}
		\\
		&
		\le
		{1 \over \currentepochbegin - \currentlength - \currentlimit + 2}
		\label[ineq]{prf_expected_number_of_correct_guesses_bound:back_limit}
		\\
		&
		<
		{1 \over \currentepochbegin - \currentlength - \currentlimit}
		\label[ineq]{prf_expected_number_of_correct_guesses_bound:prob_result}
		.
	\end{flalign}
	Where
	\Cref{prf_expected_number_of_correct_guesses_bound:turns_in_epoch} is due to $j \le \currentlength$,
	\Cref{prf_expected_number_of_correct_guesses_bound:back_limit} is because at most $\currentlimit$ cards are moved to the back.
	
	We can now bound the expected number of correct guesses directly
	\begin{flalign}
		c_i = \expected[C_i]
		&
		=
		\sum_{j \in [\currentlength]} \expected[C_{i,j}]
		\label[equa]{prf_expected_number_of_correct_guesses_bound:linearity_of_expectation}
		\\
		&
		=
		\sum_{j \in [\currentlength]}
		\Pr[R_{i,j}= 1] \cdot \expected[C_{i,j} | R_{i,j}= 1]
		+
		\Pr[R_{i,j}= 0] \cdot \expected[C_{i,j} | R_{i,j}= 0]
		\label[equa]{prf_expected_number_of_correct_guesses_bound:total_expectation}
		\\
		&
		=
		\sum_{j \in [\currentlength]} \Pr[R_{i,j}= 1] \cdot \expected[C_{i,j} | R_{i,j}= 1]
		\label[equa]{prf_expected_number_of_correct_guesses_bound:non_reasonable_guess_cannot_be_correct}
		\\
		&
		<
		\sum_{j \in [\currentlength]}  \Pr[R_{i,j}= 1] \cdot {1 \over \currentepochbegin - \currentlength- \currentlimit}
		\label[ineq]{prf_expected_number_of_correct_guesses_bound:probability_or_reasonable_guess_to_be_correct}
		\\
		&
		=
		{\currentreasonableexpected \over \currentepochbegin - \currentlength- \currentlimit}
		\label[equa]{prf_expected_number_of_correct_guesses_bound:number_of_reasonable_guesses}
		.
	\end{flalign}
	Where
	\Cref{prf_expected_number_of_correct_guesses_bound:linearity_of_expectation} is true by definition and due to linearity of expectation,
	\Cref{prf_expected_number_of_correct_guesses_bound:total_expectation} is from law of total expectation,
	\Cref{prf_expected_number_of_correct_guesses_bound:non_reasonable_guess_cannot_be_correct} is true since a correct guess is necessarily reasonable,
	\Cref{prf_expected_number_of_correct_guesses_bound:probability_or_reasonable_guess_to_be_correct} is due to \Cref{prf_expected_number_of_correct_guesses_bound:prob_result},
	and \Cref{prf_expected_number_of_correct_guesses_bound:number_of_reasonable_guesses} is true by definition since~$\sum_{j \in [\currentlength]}  \Pr[R_{i,j}= 1] = \currentreasonableexpected$.

\end{proof}

\subsection{Analysis of the \mttbdealer}
\label{section_upper_bound:subsec_mttb_dealer}
Having established a bound for a single epoch, we are ready to conclude the analysis of our Dealer and show its overall performance.

We recall that our \mttbdealer (\Cref{def_mttb_dealer} and \Cref{figure:dealer}) starts the game with a properly shuffled deck from which the Dealer draws until $\firstepochbegin$ cards are left, the Dealer then follows the MtBE-strategy over and over again and reshuffles every $\currentlength$ turns, and when $2m \cdot \log n$ cards are left, our Dealer shuffles the deck one last time and draws cards randomly until the end of the game.
In particular, the Dealer plays the \epochstraterm in a sequence of~$d$ epochs, where $d =\doneepochs$, each epoch consists of $\currentlength= \lengthterm$ turns, and as for every epoch it holds that $\currentlength \le \currentepochbegin - \currentlength$ it follows that we can set $\currentlimit=\currentlength$.

In the upcoming lemma, we will analyze and bound the cumulative number of correct guesses that any memory bounded Guesser can expect to make throughout the sequence of epochs.

\begin{lemma}
	\label{lemma_epochs_correct_guesses_bound}
	For $\guessermemorysymbol \le \memorybound$, every Guesser with~$\guessermemorysymbol$ memory bits that play against the \mttbdealer \dealerm is expected to guess correctly at most $1$ time in total throughout the sequence of epochs that follows the MtBE-strategy.
\end{lemma}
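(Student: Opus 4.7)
My plan is to apply the per-epoch bounds of \Cref{lemma_reasonable_expected_upper_bound} and \Cref{thm_expected_number_of_correct_guesses_bound} to each of the $\epochsnum$ epochs of the \mttbdealer and then sum using the arithmetic structure of the $\{\currentepochbegin\}$.

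First I will substitute the concrete parameters $\firstepochbegin = \koneepochs$, $\currentlength = \lengthterm$, and $\currentlimit = \currentlength$ from \Cref{def_mttb_dealer} into \Cref{lemma_reasonable_expected_upper_bound}. The point of this choice is that
\[
\frac{8 e \cdot \firstepochbegin \cdot \currentlength}{n} = \frac{8 e}{n} \cdot \frac{n}{8 e \log n} \cdot m \log n = m,
\]
so the $\max$ collapses and the bound becomes $\currentreasonableexpected \le m + 2$ uniformly in $\epochix$. Feeding this into \Cref{thm_expected_number_of_correct_guesses_bound} with $\currentlimit = \currentlength$ then gives
\[
\correctsexpected_\epochix \le \frac{m+2}{\currentepochbegin - 2\currentlength}.
\]

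Next I will exploit the arithmetic structure $\currentepochbegin = \firstepochbegin - (\epochix-1)\currentlength$. The choice $\epochsnum = \doneepochs$, which is exactly $\firstepochbegin/\currentlength - 2$, implies $\firstepochbegin = (\epochsnum + 2)\currentlength$, and consequently the denominators $\currentepochbegin - 2\currentlength$ run through the values $\epochsnum\currentlength, (\epochsnum - 1)\currentlength, \ldots, \currentlength$ as $\epochix$ ranges over $[\epochsnum]$. Summing the per-epoch bounds therefore collapses to a harmonic sum:
\[
\sum_{\epochix=1}^{\epochsnum} \correctsexpected_\epochix \le (m+2) \sum_{j=1}^{\epochsnum} \frac{1}{j \cdot \currentlength} = \frac{m+2}{\currentlength}\, H_\epochsnum.
\]

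Finally, I will bound $H_\epochsnum \le 1 + \ln \epochsnum \le 1 + \ln\!\bigl(n/(8 e m \log^2 n)\bigr)$ and use $\currentlength = m \log n$ together with the identity $\ln n / \log n = \ln 2$ to verify that the right-hand side is at most $1$ throughout the regime $m \le \memorybound$. The main obstacle is the bookkeeping in this last step: the parameters $\firstepochbegin$, $\currentlength$, and $\epochsnum$ are tuned precisely so that (i) the $\max$ inside the reasonable-guess bound collapses to $m$ rather than something larger, (ii) the denominators $\currentepochbegin - 2\currentlength$ form a clean arithmetic progression with step $\currentlength$, and (iii) the resulting $(1+o(1))\ln \epochsnum/\log n$ stays strictly below~$1$; any looser choice in these three places would break the cancellation that brings the cumulative expectation down to~$1$.
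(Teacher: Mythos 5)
Your proposal is correct and follows essentially the same route as the paper: both apply \Cref{lemma_reasonable_expected_upper_bound} (where the bound collapses to $m+2$, equivalently $\currentlength/\log n + 2$) and \Cref{thm_expected_number_of_correct_guesses_bound}, use $\currentepochbegin - 2\currentlength = (d-i+1)\currentlength$ to reduce the sum over epochs to a harmonic sum, and bound $\ln d$ by roughly $\ln n$ to conclude. The only differences are cosmetic (you keep the factor as $(m+2)/\currentlength$ where the paper writes $1/\log n + 2/\currentlength$, and you use $H_d \le 1+\ln d$ where the paper approximates the partial sum directly).
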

\begin{figure}[h]
	\begin{mdframed}
		\resizebox{\textwidth}{!}{%
			\begin{tikzpicture}[x=3cm]
				\draw [dashed,color=blue]
				(\figkpoint,0) node[circle,fill,inner sep=0.5mm,label=above:{$\koneepochsend$},alias=k_d1] (kd1) {}
				--
				(\figkpoint-1,0) node[circle,fill,inner sep=0.5mm,label=below:{${k_{d}}$},label=above:{$3 m \log n$},alias=k_d] (kd) {};
				
				\draw [dashed,color=blue]
				(\figkpoint-2,0) node[circle,fill,inner sep=0.5mm,label=below:{${k_2}$},alias=k_2] (k2) {}
				--
				(\figkpoint-3,0) node[circle,fill,inner sep=0.5mm,label=below:{${k_1}$},label=above:{$\koneepochs$},alias=k_1] (k1) {};
				
				\draw [dotted,color=blue] (k2) -- (kd) {};
				
				\draw [dotted]
				(0,0) node[circle,fill,inner sep=0.5mm,label=above:{$n$},alias=n] (n) {}
				--
				(k1) {};

				\draw [dashed]
				(5.75,0) node[circle,fill,inner sep=0.5mm,label=above:{$m$},alias=m] (m) {}
				--
				(6,0) node[circle,fill,inner sep=0.5mm,label=above:{$0$},alias=0] (0) {};
				
				\draw [dotted]
				(kd1) {}
				--
				(m) {};
				
				\begin{scope}[thick,decoration={calligraphic brace, amplitude=6pt,mirror}]
					\draw[decorate,color=blue]  (current bounding box.north) coordinate (aux) (kd1|-aux) -- node[above=1ex,align=center]{A sequence of $d = \doneepochs$ consequtive epochs of length $\ell = \elloneepochs$  \\ that begins on turn $n-\firstepochbegin = n-\koneepochs$} (k1|-aux);
					
					\draw[decorate,color=blue]   (current bounding box.south) coordinate (aux) (k1|-aux) -- node[below=1ex]{$m \log n$} (k2|-aux);
				\end{scope}
				\node[draw] at (5.4,-0.5) {Not drawn to scale};
			\end{tikzpicture}
		}
		\caption{Illustration of the \epochstraterm played in a sequence of epochs.}% \nottoscalewarn}
		\label{figure:epochs}
	\end{mdframed}
\end{figure}
\begin{proof}
	\label{prf_epochs_correct_guesses_bound}
	
	The Dealer plays the \epochstraterm in a sequence for $d= \doneepochs$ epochs.
	All epochs are of the same length $\currentlength = \elloneepochs$, and during each epoch, the Dealer follows the \currentepochstra where $\currentlimit = \currentlength = \elloneepochs$.
	
	Since the first epoch begins when there are $\koneepochs$ cards left, and as a corollary from \Cref{lemma_reasonable_expected_upper_bound}, we get that the expected number of reasonable guesses during any epoch in the sequence is upper bounded by ${\currentlength \over \log n } + 2$.
	\begin{flalign}
		\currentreasonableexpected \le \reasonablesboundi + 2 \le {\currentlength \over \log n} + 2
		\label[ineq]{prf_epochs_correct_guesses_bound:current_reasonable_bound}
		.
	\end{flalign}
	We now bound the expected number of correct guesses during a single epoch
	\begin{flalign}
		c_i
		&
		<
		{\currentreasonableexpected \over \currentepochbegin - \currentlength - \currentlimit}
		\label[ineq]{prf_epochs_correct_guesses_bound:by_thm_expected_number_of_correct_guesses_bound}
		\\
		&
		\le
		{{\currentlength / \log n} + 2 \over \currentepochbegin - \currentlength - \currentlimit}
		\label[ineq]{prf_epochs_correct_guesses_bound:by_max_reasonable}
		\\
		&
		=
		{1 \over \log n} \cdot {\currentlength \over \currentepochbegin - 2\currentlength} + { 2 \over \currentepochbegin - 2\currentlength}
		\label[equa]{prf_epochs_correct_guesses_bound:since_limit_equals_length}
	\end{flalign}
	Where
	\Cref{prf_epochs_correct_guesses_bound:by_thm_expected_number_of_correct_guesses_bound} is by \Cref{thm_expected_number_of_correct_guesses_bound},
	\Cref{prf_epochs_correct_guesses_bound:by_max_reasonable} is true because of \Cref{prf_epochs_correct_guesses_bound:current_reasonable_bound},
	\Cref{prf_epochs_correct_guesses_bound:since_limit_equals_length} is true since $\currentlength = \currentlimit$.

	Observe that $$k_i = 3 m \log n + (d-i) \cdot m \log n = (d-i + 3) \currentlength.$$
	Summing over the epochs we get that
	\begin{flalign*}
		\sum_{i=1}^{d} c_i
		&
		<
		{1 \over \log n} \sum_{i=1}^d {\currentlength \over \currentepochbegin - 2 \currentlength}
		+
		2 \cdot \sum_{i=1}^d {1 \over \currentepochbegin - 2 \currentlength}
		\\
		&
		=
		{1 \over \log n} \sum_{i=1}^d {\currentlength \over (d-i + 3) \currentlength - 2 \currentlength}
		+
		2 \cdot \sum_{i=1}^d {1 \over (d-i + 3) \currentlength - 2 \currentlength}
		\\
		&
		=
		{1 \over \log n} \sum_{i=1}^d {1 \over d-i+1}
		+
		{2 \over \currentlength} \cdot \sum_{i=1}^d {1 \over d-i+1}
		\\
		&
		\approx
		{1 \over \log n} \ln d + {2 \over \currentlength} \ln d
		.
	\end{flalign*}
	Since $d = \doneepochs$ we get that
	$$
	\ln d  = \ln\left(\doneepochs\right) < \ln\left(n \over \currentlength \right) < \ln n.
	$$
	Combining the two above, we get that
	\begin{flalign*}
		\sum_{i=1}^{d} c_i
		<
		{\ln n \over \log n} + {2 \ln n \over \elloneepochs} < 1
		.
	\end{flalign*}
\end{proof}
With this, we can now analyze the performance of our Dealer.

\begin{theorem}	
	For any $m \le n$ there exists a Dealer \dealerm such that every Guesser \guesser with~$m$ memory bits is expected to make at most $\result$ correct guesses throughout the game.
\end{theorem}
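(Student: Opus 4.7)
The plan is to decompose the game played against \dealerm into three consecutive phases corresponding to the three items in \Cref{def_mttb_dealer}, and to bound the expected number of correct guesses in each phase separately. Phase~I consists of the first $n - \firstepochbegin$ turns, where $\firstepochbegin = \koneepochs$ and the Dealer draws from a uniformly shuffled deck. Phase~II consists of the sequence of $d$ MtBE-epochs. Phase~III consists of the last $2m \log n$ turns, which are again drawn uniformly at random from a freshly shuffled deck. By linearity of expectation, the total expected number of correct guesses is the sum of the three phase expectations, so it suffices to bound each one.

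For Phase~I and Phase~III, I would appeal to the elementary observation that even a Guesser with \emph{perfect} memory cannot do better than the harmonic bound against a uniformly shuffled deck. In Phase~I, at each turn with $t$ cards remaining the probability of any correct guess is at most $1/t$, so the expected score is at most $\sum_{t = \firstepochbegin+1}^{n} 1/t = H_n - H_{\firstepochbegin} = \ln(n/\firstepochbegin) + O(1) = \ln(8e \log n) + O(1) = \ln \log n + O(1)$. Similarly in Phase~III the expected score is at most $H_{2m \log n} = \ln(2m \log n) + O(1) = \ln m + \ln \log n + O(1)$. These bounds hold regardless of the Guesser's memory or randomness since the deck is uniformly random and independent of the Guesser in those phases.

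For Phase~II, I directly invoke \Cref{lemma_epochs_correct_guesses_bound}, which asserts that any Guesser with $m$ memory bits makes at most $1$ correct guess in expectation across the entire sequence of epochs played with the MtBE-strategy. (That lemma is the technical heart of the argument, where the compression bound of \Cref{lemma_reasonable_expected_upper_bound} on reasonable guesses per epoch is combined with \Cref{thm_expected_number_of_correct_guesses_bound} which converts reasonable to correct, and then summed over $d = \doneepochs$ epochs.)

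Summing the three contributions yields
\[
\expected[\#\text{correct guesses}] \;\le\; \underbrace{\ln \log n + O(1)}_{\text{Phase I}} \;+\; \underbrace{1}_{\text{Phase II}} \;+\; \underbrace{\ln m + \ln \log n + O(1)}_{\text{Phase III}} \;=\; \ln m + 2 \ln \log n + O(1),
\]
which is precisely $\result$. The only subtlety to double-check is that in Phase~I and Phase~III the Dealer's randomness truly makes the deck uniform from the Guesser's viewpoint; this is immediate from \Cref{def_mttb_dealer} since the state of the Guesser at the start of each of these phases is independent of the random reshuffles performed by the Dealer within those phases. There is no substantive obstacle beyond bookkeeping, as the heavy lifting (the compression-based bound on reasonable guesses, leading to \Cref{lemma_epochs_correct_guesses_bound}) has already been performed.
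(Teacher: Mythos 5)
Your proposal is correct and follows essentially the same route as the paper: the same three-phase decomposition, the harmonic-sum bound for the two uniformly-shuffled phases, and \Cref{lemma_epochs_correct_guesses_bound} for the epoch phase. The only omission is the edge case $m > \memorybound$, for which the \mttbdealer of \Cref{def_mttb_dealer} is not defined; the paper dispatches it by observing that $\result \ge \ln n$ there, so the random-shuffle Dealer already suffices.
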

\begin{proof}
	If $\guessermemorysymbol > \memorybound$ then $\result \ge \ln (n)$ so the argument is correct against any random-shuffle Dealer.
	If $\guessermemorysymbol \le \memorybound$ then consider a game played between any Guesser~\guesser with $\guessermemorysymbol \le \memorybound$ memory bits and our \mttbdealer \dealerm (\Cref{def_mttb_dealer}).
	
	\begin{itemize}
		\item
			Assume that all first $n-\koneepochs$ guesses are reasonable, i.e.\ while \dealerm draws at random, before the Dealer begins moving cards to the back.
			By linearity of expectation, the expected number of correct guesses in these rounds is:
			$$
			{1 \over n} + {1 \over n-1} + .... + {1 \over \koneepochs} \approx \ln n - \ln \koneepochs = \ln \log n + \ln 8e
			.
			$$
		\item
			By \Cref{lemma_epochs_correct_guesses_bound}, every Guesser \guesser with $\guessermemorysymbol$ memory bits is expected to make at most $1$ correct guess while the Dealer moves cards to the back, i.e., from turn~$n-\koneepochs$ until turn~$n-\koneepochsend$.
		\item
			Assume that the Guesser guesses reasonably in all remaining $\koneepochsend$ turns.
			These guesses yield~$\ln \koneepochsend = \ln \guessermemorysymbol + \ln \log n + \ln 2$ correct guesses in expectation.
	\end{itemize}
	Overall, the number of correct guesses that any Guesser with $\guessermemorysymbol \le \memorybound$ bits of memory is expected to make when playing against \mttbdealer \dealerm is at most $\ln m + 2 \ln \log n + \ln 16e$.
\end{proof}

\subsection{Universal \mttbdealer}
\label{section_upper_bound:subsec_universal_dealer}
So far, we have configured our Dealer differently according to the amount of memory bits that the Guesser had.
Using the building blocks and ideas seen so far in the section, we present a \emph{universal} adaptive Dealer that works well against any Guesser with any amount of memory, without knowing how much memory the Guesser has.
Albeit, with a drawback that a Guesser with $\guessermemorysymbol$ bits of memory is expected to make slightly more than $\ln m$ correct guesses.
I.e.\ a Guesser with perfect memory is expected to achieve more than $$\resultUniversaln$$ correct guesses in expectation.

The starting point for the universal Dealer is the same as that of the \mttbdealer (\Cref{def_mttb_dealer}).
Similarly, our universal Dealer separates the turns to epochs during which the Dealer follows the MtBE-strategy.
However, the epochs will shrink and become shorter as more cards are drawn, and the analysis will be different.

\begin{definition}
	\label{def_universal_dealer}
	The \universaldealer~\dealeruni plays according to the strategy:
	\begin{enumerate}
		\item
			Shuffle the deck uniformly at random, and draw cards one by one until $\firstuniepochbegin$ cards are left in the deck.
		\item
			Play the \epochstraterm in a sequence of $\epochsnum$ epochs, where $ \epochsnum = \duniepochs$, such that the $\epochix$th epoch begins when $\currentepochbegin = {n \over 8e \log^{1+\epochix} n}$ cards are left, i.e.\ the length of the $\epochix$th epoch is $\currentulength = \currentepochbegin (1 - {1 \over \log n})$, and during each epoch at most $\currentulimit = {2 \currentulength \over \log^2 n}$ cards are moved to the back.
		\item
			When $\log^4 n$ cards left, shuffle the deck one last time and draw cards at random.
	\end{enumerate}
\end{definition}

We begin our analysis in the same way as we did for the \mttbdealer.
We consider the same implementation of the Dealer (\Cref{section_upper_bound:subsec_towards_a_proof}), and the same encoding scheme for sets (\Cref{section_upper_bound:subsec_encoding_scheme}).

Recall the discussion about the maximal number of cards moved to the back during an epoch, right before \Cref{lemma_reasonable_expected_upper_bound}.
In that discussion we argued that we may ignore the role of the bound~$\limit$ since it is impossible to make more than $\limit = \ell$ reasonable guesses.
This is not the case for the \universaldealer, as during the $\epochix$th epoch at most $\currentulimit = \reasonablesbounduni$ cards are moved to the back.
We restate, without proof, \Cref{claim_probability_for_reasonables_decays}.

\begin{claim}[Restate \Cref{claim_probability_for_reasonables_decays}]
	\label{claim_universal_probability_for_reasonables_decays}
	For every Guesser \guesser with $\guessermemorysymbol$ bits of memory, and every epoch $\epochix \in [\epochsnum]$, for $\reasonablesdiff < \currentulimit - \reasonablesbounduniterm$,
	the probability that \guesser makes more than $\reasonablesdiff + \reasonablesbounduniterm$ reasonable guesses during the $\epochix$th epoch, is at most
	$$
	\Pr_{\encsetdesc}\left[\currentreasonableRV \ge \reasonablesbounduniterm + \reasonablesdiff \right] \le 2^{-\reasonablesdiff}
	.
	$$
\end{claim}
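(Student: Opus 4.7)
The plan is to adapt, almost verbatim, the proof of Claim~\ref{claim_probability_for_reasonables_decays}, which is a compression argument via the encoding $\encodefuncadaptive$. As in that proof, I would first fix the Guesser's randomness $\guesserrandomness$ and the sequence of permutations $\permutations$, prove the bound conditionally on these, and then observe that conditional-to-unconditional follows from averaging. The randomness remaining over is the uniformly random set $\encsetdesc$ of size $\firstepochbegin$ that the Dealer keeps for the end of the game.

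The encoding scheme $\encodefuncadaptive$ and its correctness (\Cref{lemma_is_encoding_scheme}, \Cref{claim_code_is_prefix_free_adaptive}) do not depend on the value of $\currentulimit$ in any essential way, and the two quantitative compression claims \Cref{lemma_length_decrease_by_one_bit} and \Cref{lemma_encoding_achieves_compression} are stated purely in terms of $m, \firstepochbegin, \currentulength, \reasonablesnum$, so they transfer without change to the universal setting. Consequently, whenever a Guesser makes $\reasonablesnum \ge \reasonablesbounduniterm + \reasonablesdiff$ reasonable guesses in the $\epochix$th epoch, the set $\encset$ is encoded into a string of length at most $\log \binom{n}{\firstepochbegin} - \reasonablesdiff$ bits, and then \Cref{lemma_compresion_by_d_bits} applied to the uniform distribution on $\encsetdomain$ yields the desired probability bound $2^{-\reasonablesdiff}$.

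The one place that truly changes is the range in which this argument is valid: the $\reasonablesnum$ reasonable guesses recorded by $\encodefuncadaptive$ correspond to \emph{distinct} cards precisely because each reasonable guess is moved to the back by the Dealer and can therefore not be guessed reasonably again in the same epoch. For the \mttbdealer this was automatic since $\currentlimit = \currentlength$, but the \universaldealer sets $\currentulimit = \reasonablesbounduni < \currentulength$, so the Dealer stops moving cards to the back after $\currentulimit$ reasonable guesses and any further ``reasonable'' guess might repeat a card already guessed, breaking the distinctness that the encoding relies on. This is exactly what forces the additional restriction $\reasonablesdiff < \currentulimit - \reasonablesbounduniterm$ in the statement: it guarantees $\reasonablesnum = \reasonablesdiff + \reasonablesbounduniterm < \currentulimit$, so we stay in the regime where the Dealer is still moving cards to the back and the encoding argument goes through unaltered. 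I expect verifying this distinctness condition to be the only non-routine step; everything else is a direct citation of the earlier machinery.
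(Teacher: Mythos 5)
Your proposal is correct and matches the paper's approach: the paper explicitly restates this claim \emph{without proof}, relying on the proof of \Cref{claim_probability_for_reasonables_decays}, and the only substantive adaptation is exactly the one you identify — the hypothesis $\reasonablesdiff < \currentulimit - \reasonablesbounduniterm$ keeps $\reasonablesnum$ below $\currentulimit$, so the reasonably guessed cards are still distinct (each is moved to the back) and the encoding/compression machinery applies unchanged. This is precisely the concern the paper raises in its discussion of the bound $\limit$ just before \Cref{lemma_reasonable_expected_upper_bound}.
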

As the \universaldealer begins following the \epochstraterm when  $\firstuniepochbegin$ cards are left, we get that the probability for more than $\max\{{\currentulength \over \log^2 n}, \guessermemorysymbol\}$ reasonable guesses decays exponentially.
It follows that the expected number of reasonable guesses per epoch depends on the amount of memory the Guesser has.
In particular, this \namecref{claim_universal_probability_for_reasonables_decays} clarifies that we must analyze differently the epochs for which~$\guessermemorysymbol \ge \reasonablesbounduni$ than the other epochs.

Therefor, for every $\guessermemorysymbol$, we separate the epochs into two eras.
During the \emph{low-memory era}, moving cards to the back works in the Dealer's favor as the MtBE-strategy guarantees that no Guesser can guess well.
During the \emph{high-memory era}, moving cards to the back works in the Guesser's favor, and we assume that the Guesser gains the maximal advantage from it.
\begin{corollary}
	\label{claim_universal_reasonable_expected_upper_bound}
	During the low-memory era, that is for $\epochix \in [\epochsnum]$ and $\guessermemorysymbol < \reasonablesbounduni$, any Guesser with~$\guessermemorysymbol$ bits of memory,
	makes in expectation at most
	$$
	\currentreasonableexpected \le \reasonablesbounduni + 3
	$$
	reasonable guesses during the \currentuniepoch.
\end{corollary}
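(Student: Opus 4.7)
The plan is to derive the expectation bound directly from the tail bound in \Cref{claim_universal_probability_for_reasonables_decays} via the standard identity $\expected[\currentreasonableRV] = \sum_{k \ge 1} \Pr[\currentreasonableRV \ge k]$.

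The first step is to simplify $\reasonablesbounduniterm$ in the low-memory era. Since the \universaldealer starts following the \epochstraterm when $\firstuniepochbegin = n/(8e \log^2 n)$ cards are left, we have $8 e \firstuniepochbegin \currentulength / n = \currentulength / \log^2 n = \reasonablesbounduni$. Under the hypothesis $m < \reasonablesbounduni$, this yields $\reasonablesbounduniterm = \max\{\reasonablesbounduni, m\} = \reasonablesbounduni$. So the tail bound becomes $\Pr[\currentreasonableRV \ge \reasonablesbounduni + \beta] \le 2^{-\beta}$, valid for $\beta < \currentulimit - \reasonablesbounduni = \reasonablesbounduni$ (using that $\currentulimit = 2\currentulength/\log^2 n = 2 \reasonablesbounduni$).

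Next, I would split the expectation sum into three ranges. For $1 \le k \le \reasonablesbounduni$ I upper bound each term by $1$, contributing $\reasonablesbounduni$. For $\reasonablesbounduni < k \le \currentulimit$ I substitute $\beta = k - \reasonablesbounduni$ and apply the tail bound, contributing at most $\sum_{\beta \ge 1} 2^{-\beta} = 1$. Finally, for $\reasonablesbounduni + k > \currentulimit$ the tail bound is not directly stated, so I would handle this tail separately: since $\currentreasonableRV$ is trivially at most $\currentulength$, its contribution is at most $\currentulength \cdot \Pr[\currentreasonableRV \ge \currentulimit]$, and the latter probability is bounded by $2^{-(\currentulimit - \reasonablesbounduni)} = 2^{-\reasonablesbounduni} = 2^{-\currentulength/\log^2 n}$. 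Since $\currentulength \cdot 2^{-\currentulength/\log^2 n}$ is tiny (certainly below $1$ for the parameters chosen), the three contributions add to at most $\reasonablesbounduni + 3$.

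The main obstacle will be justifying the third range: past $\currentulimit$ reasonable guesses, the Dealer no longer moves cards to the back, so the encoding argument from \Cref{section_upper_bound:subsec_encoding_scheme} no longer saves bits per additional guess and the clean tail bound is not a statement we can invoke at a point-by-point level. The resolution is that we do not need such a point-by-point bound — it suffices to note that \emph{reaching} the regime $\currentreasonableRV \ge \currentulimit$ is already a rare event (probability $\le 2^{-\reasonablesbounduni}$ by the bound evaluated at its largest valid $\beta$), and the worst-case conditional contribution $\currentulength$ is swamped by this doubly-exponentially-small factor. Combining the three ranges yields $\currentreasonableexpected \le \reasonablesbounduni + 3$, as claimed.
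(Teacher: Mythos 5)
Your proposal is correct and follows essentially the same route as the paper: establish $\reasonablesbounduniterm = \reasonablesbounduni$ in the low-memory era, decompose the expectation into the range below $\reasonablesbounduni$, the range up to $\currentulimit$ where \Cref{claim_universal_probability_for_reasonables_decays} gives geometric decay, and the range beyond $\currentulimit$ which is handled by noting that reaching it already has probability at most $2^{-\reasonablesbounduni} \le 2^{-\log^2 n}$, swamping the worst-case contribution $\currentulength$. The only cosmetic difference is that you use $\expected[\currentreasonableRV] = \sum_{k \ge 1}\Pr[\currentreasonableRV \ge k]$ where the paper sums $\reasonablesdiff \cdot \Pr[\currentreasonableRV = \reasonablesbounduni + \reasonablesdiff]$, which changes the middle term's bound from $2$ to $1$ but not the argument.
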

\begin{proof}
	Observe that since $\firstepochbegin = \firstuniepochbegin$ then ${8 \cdot e \cdot k_1 \currentulength \over n} =\reasonablesbounduni$ and by assumption $\guessermemorysymbol < {\currentulength \over \log^2 n}$ so it follows that $\reasonablesbounduniterm =  \reasonablesbounduni$.
	To ease the analysis, we assume that the first $\reasonablesbounduni$ guesses in the epoch are reasonable.
	It follows that
	\begin{flalign*}
		\currentreasonableexpected
		&
		=
		\sum_{\reasonablesnum=0}^{\currentulength} \reasonablesnum \cdot \Pr\left[ \currentreasonableRV = \reasonablesnum\right]
		\\
		&
		=
		\sum_{\reasonablesdiff=0}^{\currentulength-\reasonablesbounduni} \left(\reasonablesbounduni+\reasonablesdiff\right) \cdot \Pr\left[ \currentreasonableRV = \reasonablesbounduni+\reasonablesdiff\right]
		\\
		&
		=
		\reasonablesbounduni
		+
		\underbrace{
		\sum_{\reasonablesdiff=0}^{\currentulimit -\reasonablesbounduni} \reasonablesdiff \cdot \Pr\left[ \currentreasonableRV = \reasonablesbounduni+\reasonablesdiff\right]
		}_{(\ast)}
		+
		\underbrace{
		\sum_{\reasonablesdiff=\currentulimit -\reasonablesbounduni+1}^{\currentulength-\reasonablesbounduni} \reasonablesdiff \cdot \Pr\left[ \currentreasonableRV = \reasonablesbounduni+\reasonablesdiff\right]
		}_{(\ast\ast)}
	\end{flalign*}
	From \Cref{claim_universal_probability_for_reasonables_decays}, we know that the probability for reasonable guesses decays exponentially until $\reasonablesdiff \le \currentulimit - \reasonablesbounduni$, so the first sum $(\ast)$ is upper bounded by
	
	$$(\ast) < \sum_{\reasonablesdiff=0}^\infty \reasonablesdiff \cdot 2^{-\reasonablesdiff} = 2.$$
	The shortest epoch consists of more than $\log^4 n$ turns, and since $\currentulimit = 2 \cdot \reasonablesbounduni$, it follows that
	$$\currentulimit - \reasonablesbounduni = \reasonablesbounduni \ge \log^2 n.$$
	From \Cref{claim_universal_probability_for_reasonables_decays} we get that the probabilities in the second sum $(\ast\ast)$ are upper bounded by $2^{-\log^2 n} = {1 \over n^{\log n}}$.
	Which means that the second sum $(\ast\ast)$ is upper bounded
	
	$$
	(\ast\ast)
	<
	{\currentulength \cdot \currentulength \over n^{\log n}} < 1
	$$.
\end{proof}
From \Cref{lemma_epochs_correct_guesses_bound} we conclude an upper bound on the expected number of correct guesses during an epoch in the low-memory era.
\begin{corollary}
	\label{claim_universal_correct_expected_upper_bound}
	During the low-memory era, that is for $\epochix \in [\epochsnum]$ and $\guessermemorysymbol < \reasonablesbounduni$, any Guesser with~$\guessermemorysymbol$ bits of memory,
	makes on expectation at most
	$$
	\currentcorrectexpected \le {1 \over \log (n) - 2}
	$$
	correct guesses during the \currentuniepoch.
\end{corollary}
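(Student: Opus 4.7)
The plan is to combine the two preceding results via direct substitution of the universal Dealer's parameters. First I would invoke Lemma~\ref{thm_expected_number_of_correct_guesses_bound}, which applies uniformly to any \epochstraterm epoch and yields
$$\currentcorrectexpected \le \frac{\currentreasonableexpected}{\currentepochbegin - \currentulength - \currentulimit}.$$
Then I would upper bound the numerator using Corollary~\ref{claim_universal_reasonable_expected_upper_bound}, which in the low-memory era gives $\currentreasonableexpected \le \currentulength/\log^2 n + 3$.

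Next I would substitute the parameter choices from Definition~\ref{def_universal_dealer}, namely $\currentulength = \currentepochbegin(1 - 1/\log n)$ and $\currentulimit = 2\currentulength/\log^2 n$. A direct factoring gives
$$\currentepochbegin - \currentulength - \currentulimit = \currentepochbegin - \currentulength\left(1 + \frac{2}{\log^2 n}\right) = \frac{\currentepochbegin}{\log n}\left(1 - \frac{2}{\log n} + \frac{2}{\log^2 n}\right) \ge \frac{\currentepochbegin(\log n - 2)}{\log^2 n}.$$
Dividing the bound on $\currentreasonableexpected$ by this denominator and using $\currentulength < \currentepochbegin$ yields
$$\currentcorrectexpected \le \frac{\currentulength/\log^2 n + 3}{\currentepochbegin(\log n - 2)/\log^2 n} \le \frac{1}{\log n - 2}\left(1 + \frac{3\log^2 n}{\currentepochbegin}\right).$$

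The one detail to handle carefully is absorbing the additive constant $3$ into the clean stated bound. For this I would use the fact that the universal Dealer terminates the MtBE-strategy once $\log^4 n$ cards remain in the deck, so every epoch in the sequence satisfies $\currentepochbegin \ge \log^4 n$, and hence $3\log^2 n/\currentepochbegin \le 3/\log^2 n = o(1)$. This multiplicative error vanishes for sufficiently large $n$ and can be absorbed into the slack of the simplification $1 - 2/\log n + 2/\log^2 n \ge 1 - 2/\log n$, leaving $\currentcorrectexpected \le 1/(\log n - 2)$. The argument is entirely algebraic with no new idea beyond the two preceding results, so the main obstacle is purely bookkeeping: keeping the $1/\log n$, $1/\log^2 n$ and constant factors straight while substituting the universal Dealer's parameter definitions.
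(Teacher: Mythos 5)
Your proposal is correct and follows essentially the same route as the paper: apply Lemma~\ref{thm_expected_number_of_correct_guesses_bound}, bound $\currentreasonableexpected$ via Corollary~\ref{claim_universal_reasonable_expected_upper_bound}, substitute $\currentulength = \currentepochbegin(1-1/\log n)$ and $\currentulimit = 2\currentulength/\log^2 n$, and simplify. You are in fact slightly more careful than the paper, which silently drops the additive $+3$; your absorption of it works, though the needed room comes mostly from the factor $\currentulength < \currentepochbegin$ (a multiplicative $1+\Theta(1/\log n)$ of slack) rather than from the $2/\log^2 n$ term you cite, since $\currentepochbegin \ge \log^4 n$ makes $3(\log n-2)\log^2 n/\currentepochbegin \le 1$ with plenty to spare.
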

\begin{proof}
	\label{prf_universal_correct_expected_upper_bound}
	Observe that for any \currentuniepoch it holds that $\currentepochbegin - \currentulength = {\currentepochbegin \over \log n}$ and since $\currentulength = \currentepochbegin (1 - {1 \over \log n})$ then $\currentepochbegin - \currentulength = {\currentulength \over \log (n) - 1}$.
	We get that
	\begin{flalign}
		\currentcorrectexpected
		&
		\le
		{\currentreasonableexpected \over \currentepochbegin - \currentulength - \currentulimit}
		\label[ineq]{prf_universal_correct_expected_upper_bound:by_thm_expected_number_of_correct_guesses_bound}
		\\
		&
		=
		{1 \over \log^2 n} \cdot {\currentulength \over \currentepochbegin - \currentulength - \currentulimit}
		\label[ineq]{prf_universal_correct_expected_upper_bound:by_claim_universal_reasonable_expected_upper_bound}
		\\
		&
		=
		{1 \over \log^2 n} \cdot {\currentulength \over {\currentulength \over \log (n) - 1} - {2 \currentulength \over \log^2 n}}
		\label[ineq]{prf_universal_correct_expected_upper_bound:assing_variables}
		\\
		&
		=
		{1 \over \log^2 n} \cdot {1 \over {1 \over \log (n) - 1} - {2 \over \log^2 n}}
		\nonumber
		\\
		&
		<
		{1 \over \log (n) - 2}
		\nonumber
	\end{flalign}
	Where
	\Cref{prf_universal_correct_expected_upper_bound:by_thm_expected_number_of_correct_guesses_bound} follows from \Cref{thm_expected_number_of_correct_guesses_bound},
	\Cref{prf_universal_correct_expected_upper_bound:by_claim_universal_reasonable_expected_upper_bound} is true since for $m < \reasonablesbounduni$ we get from \Cref{claim_universal_reasonable_expected_upper_bound} that $\currentreasonableexpected \le \reasonablesbounduni$,
	and \Cref{prf_universal_correct_expected_upper_bound:assing_variables} is true since $\currentepochbegin - \currentulength = {\currentulength \over \log (n) - 1}$ and $\currentulimit = {2 \currentulength \over \log^2 n}$.
\end{proof}

The above \namecref{claim_universal_correct_expected_upper_bound} states that the MtBE-strategy works well while the Guesser has insufficient memory.
However, as mentioned already, once the Guesser reaches the maximal number of cards moved to the back, the MtBE-strategy works in the Guesser's favor.
We want to bound the Guesser's benefit during such an epoch.

\begin{claim}
	\label{claim_universal_correct_expected_upper_bound_general}
	For every epoch, the expected number of correct guesses that any Guesser makes, is at most
	$$
	\ln\left(\log(n) + 3\right).
	$$
\end{claim}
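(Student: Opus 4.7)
The plan is to bound the expected number of correct guesses $\currentcorrectexpected$ in epoch $\epochix$ by a deterministic worst-case argument that holds uniformly for every Guesser, thereby subsuming both the low- and high-memory regimes simultaneously. The starting observation is that at every turn $\turnix$ inside the epoch, the Dealer draws $c_\turnix$ uniformly at random from $A_\turnix \setminus B_\turnix$, and this draw is independent of the Guesser's guess $g_\turnix$ (which is a function only of the history through turn $\turnix-1$). Consequently, for any Guesser and any history,
\[
\Pr[C_\turnix = 1 \mid H_{\turnix-1}] \le \frac{1}{|A_\turnix \setminus B_\turnix|}.
\]

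Next I deterministically lower bound $|A_\turnix \setminus B_\turnix|$. Exactly one card is drawn per turn, so $|A_\turnix| = \currentepochbegin - (\turnix-1)$; and at most one card is appended to $B$ per turn subject to the cap $\currentulimit$, so $|B_\turnix| \le \min\{\currentulimit,\turnix-1\}$. Hence $|A_\turnix \setminus B_\turnix| \ge L_\turnix$ with
\[
L_\turnix = \begin{cases} \currentepochbegin - 2(\turnix-1) & \turnix \le \currentulimit+1,\\ \currentepochbegin - \currentulimit - (\turnix-1) & \turnix > \currentulimit+1. \end{cases}
\]
Summing and splitting the range at $\turnix = \currentulimit+1$, standard harmonic/integral estimates yield
\[
\currentcorrectexpected \le \sum_{\turnix=1}^{\currentulength} \frac{1}{L_\turnix} \le \tfrac{1}{2}\ln\frac{\currentepochbegin}{\currentepochbegin-2\currentulimit} + \ln\frac{\currentepochbegin-2\currentulimit}{\currentepochbegin-\currentulimit-\currentulength} + O\!\left(\tfrac{1}{\currentepochbegin-\currentulimit-\currentulength}\right).
\]

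To close, I substitute the calibrated parameters from \Cref{def_universal_dealer}, namely $\currentulength = \currentepochbegin(1-1/\log n)$ and $\currentulimit = 2\currentulength/\log^2 n$. Then $\currentepochbegin - 2\currentulimit = \currentepochbegin(1 - O(1/\log^2 n))$, so the first logarithm contributes only $O(1/\log^2 n)$. For the second, a direct computation gives $\currentepochbegin - \currentulimit - \currentulength = (\currentepochbegin/\log n)\bigl(1 - 2/\log n + O(1/\log^2 n)\bigr)$, so the ratio inside the logarithm simplifies to $\log n + 2 + o(1)$. Combining the two pieces yields $\currentcorrectexpected \le \ln(\log n + 2 + o(1)) + O(1/\log n) \le \ln(\log n + 3)$ for $n$ sufficiently large.

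The main obstacle is squeezing the additive constant tight enough to write $\ln(\log n + 3)$ rather than merely $(1+o(1))\ln\log n$; this hinges on the specific calibration $\currentulength = \currentepochbegin(1-1/\log n)$, which makes $\currentepochbegin - \currentulength = \currentepochbegin/\log n$ exactly and produces a clean $+2$ inside the logarithm. Intuitively, the worst case for the Dealer is a Guesser whose every guess is reasonable and incorrect, draining $|A \setminus B|$ at rate $2$ per turn until the cap $\currentulimit$ saturates and at rate $1$ thereafter, which is exactly the scenario captured by the deterministic envelope $L_\turnix$.
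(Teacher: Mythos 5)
Your proposal is correct and follows essentially the same route as the paper: bound the per-turn probability of a correct guess by the reciprocal of the number of drawable cards, use the cap $\currentulimit$ on cards moved to the back to lower-bound that count, sum the resulting harmonic-type series into a logarithm of a ratio, and substitute the universal Dealer's calibration to get $\log n + 2 + o(1) < \log n + 3$ inside the logarithm. Your piecewise envelope $L_\turnix$ (tracking that $B$ fills at rate at most one per turn before the cap saturates) is a slight refinement over the paper's uniform bound $\turnix - \currentulimit$, but it only tightens a lower-order term and does not change the argument.
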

\begin{proof}
	\label{prf_univeral_correct_expected_upper_bound_general}
	Let $n-\turnix$ be some turn during the $\epochix$th epoch \currentuniepoch.
	As we did in the proof of \Cref{thm_expected_number_of_correct_guesses_bound}, since at most $\currentulimit$ cards can be moved to the back, it follows that the probability for a reasonable guess to be correct at turn $n-\turnix$ is at most $1 \over \turnix - \currentulimit$.
	It follows that if all guesses in the epoch are reasonable then the expected number of correct guesses is $$\sum_{\turnix=\currentepochbegin}^{\currentepochbegin - \currentulength} {1 \over \turnix - \currentulimit} \approx \ln\left(\currentepochbegin - \currentulimit\right) -\ln\left(\currentepochbegin - \currentulength- \currentulimit\right) = \ln \left(\currentepochbegin - \currentulimit \over \currentepochbegin - \currentulength- \currentulimit\right).$$
	We bound the term inside the $\ln$.
	\begin{flalign}
		{\currentepochbegin - \currentulimit \over \currentepochbegin - \currentulength- \currentulimit}
		&
		=
		{\currentepochbegin - {2 \currentulength \over \log^2n} \over {\currentepochbegin \over \log n}- {2 \currentulength \over \log^2n}}
		\label[equa]{prf_univeral_correct_expected_upper_bound_general:remaining_cards}
		\\
		&
		<
		{\currentepochbegin - {2 \currentepochbegin \over \log^2n} \over {\currentepochbegin \over \log n}- {2 \currentepochbegin \over \log^2n}}
		\label[ineq]{prf_univeral_correct_expected_upper_bound_general:ineq_a_over_b}
		\\
		&
		=
		\log (n) + 2 + {2 \over \log (n) + 2}
		\label[equa]{prf_univeral_correct_expected_upper_bound_general:division}
		\\
		&
		<
		\log (n) + 3
		\nonumber
		.
	\end{flalign}
	Where
	\Cref{prf_univeral_correct_expected_upper_bound_general:remaining_cards} is true since $\currentepochbegin - \currentulength = {\currentepochbegin \over \log n}$ and since $\currentulimit= {2 \currentulength \over \log^2n}$,
	\Cref{prf_univeral_correct_expected_upper_bound_general:ineq_a_over_b} is true since if $a > b$ and $c < d < b$ then ${a -c \over  b -c} < {a -d \over b - d}$,
	and
	\Cref{prf_univeral_correct_expected_upper_bound_general:division} is the result of division.
	
	It follows that the expected number of correct guesses during any epoch is upper bounded by $\ln\left(\log(n) + 3\right)$.
\end{proof}

We therefore have two upper bounds on the number of correct guesses, one for the low-memory era (\Cref{claim_universal_correct_expected_upper_bound}) and a general one  (\Cref{claim_universal_correct_expected_upper_bound_general}) that we will use for epochs during the high-memory era.

\begin{theorem}
	There exists an adaptive \universaldealer against which any Guesser with~$\guessermemorysymbol$ bits of memory can score at most $$\resultUniversal$$ correct guesses in expectation.
\end{theorem}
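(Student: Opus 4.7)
The plan is to decompose the Dealer's game into three phases dictated by \Cref{def_universal_dealer}, bound the expected number of correct guesses in each, and sum: (i) the initial random-draw phase until $k_1 = \firstuniepochbegin$ cards remain, contributing $\sum_{t=k_1+1}^{n} 1/t \approx \ln(n/k_1) = 2\ln\log n + O(1)$ correct guesses in expectation; (ii) the sequence of $d = \duniepochs$ epochs of the \epochstraterm; and (iii) the final random-draw phase after the last reshuffle with $\log^4 n$ cards, contributing at most $\sum_{t=1}^{\log^4 n} 1/t \le 4\ln\log n + O(1)$.

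The heart of the argument is phase (ii), for which I would partition the $d$ epochs into a \emph{low-memory era}, consisting of those epochs with $m < \reasonablesbounduni$ (so that \Cref{claim_universal_correct_expected_upper_bound} applies), and a \emph{high-memory era} of the remaining epochs. Because $k_i = n/(8e\log^{1+i} n)$ and $\ell_i \approx k_i$ both shrink by a factor of $\log n$ from one epoch to the next, the threshold epoch $i^{\ast}$ at which $\reasonablesbounduni \approx m$ satisfies $i^{\ast} = \log_{\log n}(n/m) - O(1)$, so the high-memory era contains exactly $d - i^{\ast} = \log_{\log n}(m) + O(1) = \ln m / \ln\log n + O(1)$ epochs.

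In the low-memory era, \Cref{claim_universal_correct_expected_upper_bound} gives $c_i \le 1/(\log n - 2)$, so summing over at most $d \le \log_{\log n} n$ such epochs contributes only $O(1)$. In the high-memory era, where moving cards to the back can actually help the Guesser, I fall back on the universal per-epoch bound \Cref{claim_universal_correct_expected_upper_bound_general}, which gives $c_i \le \ln(\log n + 3)$; multiplying by the count of such epochs yields $(\ln m/\ln\log n + O(1))\cdot \ln(\log n + 3) = (1+o(1))\ln m + O(\ln\log n)$, using $\ln(\log n + 3)/\ln\log n = 1 + o(1)$. Summing over all three phases gives $(1+o(1))\ln m + O(\ln\log n)$, matching the stated bound $\resultUniversal$.

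The main obstacle is the bookkeeping at the era boundary. If the count of high-memory epochs were even slightly larger, say $\Theta(\ln m)$ rather than $\Theta(\ln m/\ln\log n)$, the dominant term would blow up by a factor of $\ln\log n$ and ruin the theorem. The sharp count depends crucially on the geometric shrinkage $k_{i+1}/k_i = 1/\log n$ engineered into \dealeruni, together with the fact that the per-epoch bound of \Cref{claim_universal_correct_expected_upper_bound_general} is merely $\ln(\log n + 3)$ and does not scale with $m$; everything else reduces to the harmonic-sum estimates for the two random-draw phases.
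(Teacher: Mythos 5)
Your proposal follows essentially the same route as the paper's proof: the same three-phase decomposition (initial random draws, the epoch sequence split into low- and high-memory eras via \Cref{claim_universal_correct_expected_upper_bound} and \Cref{claim_universal_correct_expected_upper_bound_general}, and the final $\log^4 n$ random draws), with the same count of $\ln m/\ln\log n + O(1)$ high-memory epochs driving the dominant term. The bookkeeping matches the paper's, so the argument is correct as proposed.
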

\begin{proof}
	Consider our \universaldealer from \Cref{def_universal_dealer}.
	Assume that all the guesses before turn $n-\firstepochbegin = n- \firstuniepochbegin$ were reasonable, this results in~$2\ln \log n + \ln 8e$ correct guesses.

	\paragraph{Low-memory era:}
	Recall that the \universaldealer plays the \epochstraterm for $\epochsnum$ epochs where $\epochsnum=\duniepochs$.
	Observe that $\epochsnum \le \log_{\log n} n = {\log n \over \log \log n}$.
	Let $\epochsnum_1$ be the number of epochs for which the Guesser has insufficient memory, i.e., $\epochsnum_1 = |\{\epochix \in [\epochsnum] \colon \guessermemorysymbol \le \reasonablesbounduni\}|$.
	\Cref{claim_universal_correct_expected_upper_bound} states that the expected number of correct guesses during an epoch in the low-memory era is at most~$1 \over \log(n) - 2$.
	It follows that the cumulative number of correct guesses during these epochs is less than
	\begin{flalign*}
	\epochsnum_1 \cdot {1 \over \log(n) - 2}
	&
	\le
	\epochsnum \cdot {1 \over \log(n) - 2}
	\\
	&
	<
	{\log n \over \log \log n}
	\cdot
	{1 \over \log(n) - 2}
	\\
	&
	<
	1
	.
	\end{flalign*}

	\paragraph{High-memory era:}
	Let $\epochsnum_2 = \epochsnum - \epochsnum_1$ be the number of epochs in the high-memory era.
	Observe that $\guessermemorysymbol \ge {\currentulength \over \log^2 n}$ for every epoch for which $\currentepochbegin \le \guessermemorysymbol \cdot \log^2 n$.
	Therefore,
	$$
	\epochsnum_2
	\le
	\log_{\log n} \left(\guessermemorysymbol \cdot \log^2 n\right)
	=
    {\ln \guessermemorysymbol \over \ln \log n} + 2
	.
	$$
	\Cref{claim_universal_correct_expected_upper_bound_general} states that the expected number of correct guesses during any epoch is upper bounded by $\ln\left(\log(n) + 3\right)$.
	It follows that total number of correct guesses during the high-memory era is
	\begin{flalign*}
		\epochsnum_2 \cdot \ln\left(\log(n) + 3\right)
		&
		\le
        \left({\ln \guessermemorysymbol \over \ln \log n} + 2\right) \cdot \ln\left(\log(n) + 3\right)
		\\
		&
		=
		\ln \guessermemorysymbol \cdot {\ln\left(\log(n) + 3\right) \over \ln \log n}+ 2 \ln (\log(n) + 3)
		\\
		&
		=
		\ln \guessermemorysymbol \cdot (1 + o(1))+ 2 \ln (\log(n) + 3)
		.
	\end{flalign*}

	Assume the Guesser guesses reasonably the last $\log^4 n$ turns, the expected number of correct guesses is at most $4 \ln \log n$.

	Summing it all together, we get that the expected number of correct guesses that any Guesser with $\guessermemorysymbol$ bits of memory can score against our \universaldealer is at most
	$$
	\resultUniversal.
	$$
\end{proof}

\section{Discussion and Open Problems}

\subsection{Relation to Mirror Game}
\label{sec_discussion:subs_discussion_card_guessing_mirror_game}
In this section we discuss the relation between Card Guessing and Mirror Game. Recall the Mirror Game presented by Garg and Schneider~\cite{GargS18}, where Alice (the first player) and Bob take turns saying a name of a card (i.e.\ a number) from a deck of size $2n$, and a player loses if this card was mentioned already by either one of the players.
When no more cards are left to say, then the result of the game is a draw.
Bob, who plays second, has a {\em low memory, simple, and efficient strategy called mirroring}:
Bob fixes any matching on the cards, and for every card said by Alice, Bob responds with the matched card.
This allows Bob to say in every turn a card that has not appeared yet, and in our terms, to yield a reasonable guess, and {\em is assured not to lose}.

The question at hand is how much memory Alice needs in order not to lose (i.e.\ assure a draw).
Garg and Schneider~\cite{GargS18} showed that every deterministic ``winning" (drawing) strategy for Alice requires space that is linear in $n$.
They also showed a randomized strategy that draws with high probability ($1-{1 \over n}$) and requires $O(\sqrt n)$ bits of memory while relying on access to a secret random matching oracle.
Using a similar setting (with respect to the secret matching), Feige~\cite{Feige19} showed a randomized strategy for Alice that requires only $O(\log^3 n)$ bits of memory.
In fact, as we will show soon, Alice can supply her own matching while being {\em computationally efficient} and using $O(n \log n)$ bits of long lived randomness to produce reasonable response (or alternatively using cryptography and a small amount of long lived randomness while assuming computational limitation (poly time) on Bob).

{\em This stands in contrast to our impossibility result on the adaptive Dealer:}
In \Cref{section_upper_bound}, we have bounded the number of reasonable guesses by the Guesser's memory, regardless of run time, how much randomness she holds, and what cryptography she uses.
We present here a simplified computationally efficient version of Feige's construction that requires no access to a secret random matching but requires long lived randomness with random access for efficiency.

Our first point is that it is possible to construct a secret matching from more standard assumptions (long lived random bits or cryptographic ones).
Ristenpart and Yilek~\cite{RistenpartYilek2013} and Morris and Rogaway~\cite{MR2014} showed a transformation from (pseudo)random functions to (pseudo)random permutations that is secure {\em even if all the permutation is given to the distinguisher}.
Applying their constructions in our setting means using $O(n \log n)$ random bits (with random access) to construct a random permutation $\pi$ so that given $x$ it is possible to evaluate $\pi(x)$ on the fly (same for $\pi^{-1}(x)$), simply by looking at the randomness in $O(\log n)$ places and using $O(\log n)$ memory bits for intermediate calculations.
Naor and Reingold~\cite{NaorReingold2002} showed a construction that takes a permutation and its inverse and produces a permutation with any desired cycle structure\footnote{Naor and Reingold took a permutation $\pi$, and a permutation with the structure of choice $\sigma$ and returned $\pi^{-1} \circ \sigma \circ \pi$.}.
In particular, we can turn $\pi$ to an involution, i.e., a matching\footnote{For example, by taking $\sigma$ to be the involution $\sigma(1) = 2, \sigma(3) =4, \dots$.
The construction also gives each pair an ``index" in $[n]$ that is retrievable from either member of the pair.}.
Therefore combining these two we get that Alice can have a secret matching provided she has either (i)  $O(n \log n)$ secret random bits with random access or (ii) A key to a pseudorandom function and Bob is computationally limited and cannot distinguish the results from random\footnote{Existentially, this is is equivalent to one-way functions.}.

Having a secret matching with the above properties we discuss how to use the machinery of \Cref{sec_static_dealer:sec_random_subsets_guesser,sec_static_dealer:sec_random_subsets_guesser:sec_amplification} in order to suggest a strategy for Alice.
Alice will use her secret matching to imitate the mirror strategy of Bob, with an arbitrary starting point.
However, from time to time she will fail in that Bob will select as a response the matched card of the starting point, leaving her with no obvious response.
What she should do at this point is select a card that has not appeared yet as a new starting point.
For this, she needs the moral equivalent of a reasonable guess, and in this setting, any reasonable guess is good.
How many times do we expect this to happen?
This is similar to card guessing with perfect memory, i.e.\ $\ln n$ times.
Which means that she needs at least that many reasonable answers.
%So Alice needs altogether $\ln n$ different reasonable guesses.

So in more detail, Alice uses her memory to allocate the $2n$ cards to subsets, similarly to what we did in \Cref{sec_static_dealer:sec_random_subsets_guesser,sec_static_dealer:sec_random_subsets_guesser:sec_amplification}.
She uses her long lived randomness to sample~$O(\log^2 n)$ permutations from a family of pairwise independent permutations and splits the functions to $\log n$ collections of equal size.
From each function in the $j$th collection, Alice produces a set of size $2^{j-1}$.
Resulting in a collection of sets of equal size.
In detail, for every function $\kwisefunc$ in the $j$th collection, Alice assigns the card $x$ to the subset~$\jthsubseth$ if $\kwisefunc(x) \in \{2^{j-1}+1, \dots, 2^j\}$.
For each subset, Alice tracks the number of cards that appeared from that subset and their sum, exactly as we did for our Randomized-Subset.

As her first card, Alice chooses an arbitrary card $g_1$, announces $g_1$ as her choice and stores it in her memory.
When Bob says a card $x$, Alice responds with $M(x)$, where $M$ is her secret matching.
She does that until Bob says $M(g_1)$.
Alice cannot say $M(M(g_1)) = g_1$, as this would result in her losing.
So instead, she attempts to recover a {\em reasonable response} from her memory.
If the recovering attempt succeeds and she retrieves a card $g_2$, then she announces $g_2$ as her choice and stores it in her memory in place of $g_1$.
She continues until Bob says $M(g_2)$ and so forth.

So we get that Alice loses the game only in case she fails to recover a reasonable card when Bob makes a correct guess.
For the first half of the game, Alice stores $O(\log n)$ singletons, cards that have not appeared (i.e.\ subsets of size $1$), and takes one of them that hasn't been declared yet.
When $\turnix \le n/2$ turns are left for Alice, she makes a recovery attempt from the $j$th collection of subsets for $j = \log \lfloor n /2 \turnix \rfloor$.
She checks the subsets, in some fixed order, until she finds one that yields a reasonable response.
For the sake of analysis, assume that Alice accesses each subset exactly once.
Afterward, that subset is deleted and ignored for the rest of the game, regardless of whether a recovery occurred.

Alice attempts to recover a reasonable guess whenever Bob guesses correctly.
There is a limited number of attempts she can make, and each attempt succeeds only with some probability.
So we first bound the probability that Bob makes too many correct guesses, then we bound the probability that Alice succeeds in making that many recovery attempts.

\newcommand{\mgcorrectguesses}{{2 \log n}}
\newcommand{\mgProbTooManyGuesses}{{n^{-2}}}
Let $x_\turnix$ be the indicator random variable for the event that Bob made a correct guess when $\turnix$ turns are left.
As Alice's matching is random, and since all of Bob's guesses are reasonable, then when Bob has $\turnix$ turns left, the probability that Bob guesses correctly is $1 \over {2 \turnix - 1}$.
For every $j \le n$, consider the span of $j/2$ turns that begins when Bob has $j$ turns left.
Since the matching is random, then any pair is independent of the remaining pairs.
That is, figuring a single match tells nothing about the remaining pairs, so we get that every $x_\turnix,$ is independent of all previous ones.
Let $x$ be the number of correct guesses made by Bob in the corresponding period, i.e., $x = \sum_{\turnix = j}^{j/2} x_\turnix$.
By linearity of expectation, the expected number of times that Bob guesses correctly is at most $$\mu = \expected[x] \approx 1/2 \ln j - 1/2 \ln j/2 = 1/2 \ln 2.$$
What is the probability that $x \ge \mgcorrectguesses$?
By a Chernoff bound (Theorem 4.4~(3) in~\cite{MitzenmacherUpfal2017}), the probability to make more than~$\mgcorrectguesses$ correct guesses, i.e., $4\log(n)$ times more than the expectation, is less then
$$
\Pr \left[x \ge \mgcorrectguesses \right]
\le
2^{-\mgcorrectguesses}
=
\mgProbTooManyGuesses
.
$$
So we get that Bob makes more than $\mgcorrectguesses$ correct guesses in a relatively small probability.

\newcommand{\mgProbInsufficientReasonableAnswers}{n^{-2}}
\newcommand{\mgfuncsnum}{{32 \log n}}

Alice may still lose if she fails to supply sufficiently many reasonable answers to Bob's successful guesses.
We will show that this also happens with small probability.
Recall that Alice samples functions independently and produces a subset from each function.
Assume that for every $j$ (and also for first half of the game), Alice samples $\mgfuncsnum$ functions.
Let $z_i$ be an indicator random variable for the event that the $i$th subset yields a reasonable answer.
From \Cref{claim_amplified_random_subset_yields_a_reasonable_guess}, we know that each subset yields a card that hasn't been played with probability of at least $1/4$.
We compare the probability that Alice runs out of functions to the process where we have $\{0,1\}$ random variables $y_1, y_2, \ldots y_{\mgfuncsnum}$ where each $y_i$ is independently chosen with probability exactly~$1/4$, and the probability of interest is that there are less than $\mgcorrectguesses$ 1's.
Take any configuration of the $z_1, \dots, z_\mgfuncsnum$ and take any configuration of the $y_1, \dots, y_\mgfuncsnum$ that is covered (or dominated) by the $z_i$'s configuration in the sense that $z_i = 0$ implies $y_i = 0$.
Then the probability for the configuration of the $y_i$'s is larger than that of the $z_i$'s, as the $y_i$ are at least as likely to yield 0's.
As a result, upper bounding the probability of less than $\mgcorrectguesses$ 1's of $y_i$s (more 0's) upper bounds the probability of less than $\mgcorrectguesses$ 1's of $z_i$s, and so, for Alice running out of reasonable answers.
Let $y=\sum_{i=1}^{\mgfuncsnum} y_i$.
By linearity of expectation
$$
	\mu = \expected[y] = 0.25 \cdot \mgfuncsnum = 8 \cdot \log n
	.
$$
By a Chernoff bound (Theorem 4.5 (2) in~\cite{MitzenmacherUpfal2017}), for $\delta = 3/4$, the probability for this event is at most
$$
\Pr\left[y < (1- \delta)\mu \right]
\le
e^{-{\delta^2 \mu \over 2}}
=
e^{-{9 \cdot 8 \cdot \log n\over 32}}
<
\mgProbInsufficientReasonableAnswers
.
$$

Considering the span of turns at which Alice queries the $j$th collection of sets.
We get that with probability at most $\mgProbTooManyGuesses$ Bob makes too many correct guesses during that period, and with probability at most $(1 - \mgProbInsufficientReasonableAnswers)\mgProbTooManyGuesses$ Alice fails to produce sufficiently many reasonable responses at that period.
It follows that Alice loses while she considers a specific $j$ with probability at most $\mgProbTooManyGuesses + (1 - \mgProbTooManyGuesses)\mgProbInsufficientReasonableAnswers < 2n^{-2}$.
By the union bound no failure occurred during any of the $\log n$ periods with probability at most $2\log n / n^4$, i.e., Alice draws (or wins) with probability at least $$1 - {2 \log n \over n^2} > 1 - {1 \over n}.$$

We conclude that with $O(\log^3 n)$ bits of memory and $O(\log^3 n + n \log n)$ bits of long lasting randomness,  Alice has a  strategy against  Bob, that draws or wins with probability at least $1 - {1 \over n}$.
If we wish to go use computational assumptions, then Alice needs only to store a key to a pseudo-random function (and assume that Bob is  computationally bounded, i.e. cannot distinguish between the PRF and a truly random function) and we get the desired result.

\begin{question}
	Does there exist an algorithm for Alice that requires at most polylog long lived bits of randomness and no cryptographic assumptions and gives her a reasonable chance of not losing?
\end{question}
\begin{question}
	Is the $\Omega(\log^2 n)$ lower bound of \Cref{sec_static_dealer:upper_bound} relevant for this setting as well?
\end{question}

\subsection{Card Guessing variants}

Consider a the Card Guessing game with a deck that contains $c$ copies of each card. 
Diaconis and Graham~\cite{DG1981} showed that the optimal strategy against a Random Static Dealer scores $c + \theta(\sqrt{c \log n}) $ correct guesses in expectation.
This is achieved by tracking all cards that appeared so far and guess the one with the highest probability, which can be easily done with $n \log c$ memory bits.
By tracking the deviation from the expected number of cards seen so far, it may be possible, in some cases, to achieve a slightly better memory consumption for the static case.

Note that the low-memory Guessers from \Cref{sec_static_dealer} may still work if we assign all copies of a card to the same subset, however, the performance of these Guessers remains $O(\ln n)$.
For $c > \ln n$ these techniques are far from optimal.
In fact, with no memory at all, it is possible to get $c$ correct guesses simply by repeating the same guess over and over again.
So we get that for $c \ge \ln n$ our low memory Guessers perform worse than a Guesser with no memory at all.

\begin{question}
	How much memory and randomness does a Guesser requires to score ``near-optimal'' results against a Dealer with a deck containing multiple copies of each card?
	Is there a difference between the different kinds of Dealers?
\end{question}

As for other variants of Card Guessing, the literature considers Card Guessing with partial feedback (was the guess correct or not) and no feedback at all.
The optimal~\cite{DGHS2020} and near-optimal~\cite{DGS2020} guessing strategies for partial feedback requires little to no memory, and so goes for the optimal strategy for no feedback~\cite{DG1981}.
%Regardless, it is still interesting to consider the amount of memory required for a multiple round 
We suggest the study of a general theory of when we can convert a feedback type into a low memory guessing.

\subsection{Low Memory Dealer: a Conjecture}
What happens when the Dealer has limited memory, say $m$ bits, and wants to pick a permutation that is unpredictable by any Guesser, that has no limitation on the number of bits it can store or its computational power.
The dealer also has a limited number of long lived random bits.
It \emph{seems} that the best such a Dealer can do is pick the next card from a set of $m$ cards at random, and the question is how to assure that there is such a set available for as many rounds as possible.
We have found such a method that makes any Guesser pick correctly only  $O(n/m+\log m)$ cards in expectation.
The method does not require any secrecy from the dealer.
We conjecture that this bound is the best possible, at least for dealers without any secret memory.

\subsection{Prediction as Approximation and Data Structures}
\label{sec_discussion:sub_streaming_algorithms_and_data_structures}

In the streaming model of computation an algorithm observes a stream of elements and computes a function on the stream seen so far.
For a memory bounded algorithm, it is a typical relaxation that the algorithm outputs an approximate value of the function.
In card guessing, we ask the algorithm to predict the next card, but this prediction can be though of as a way to measure distance, and thus, as an approximation.

\newcommand{\vecg}{\vec{g}}
\newcommand{\vecd}{\vec{d}}

Consider a partial game played for some turns between some Guesser and the random-shuffle Dealer.
Let $\vecg$ be an $n$-vector associated with the probability for each card to be guessed by the Guesser at that turn.
Let $D$ denote the set of cards that are still in the deck, and let $\vecd = \fOne_D \cdot {1 \over |D|}$ be the vector associated with the probability to draw each one of them at random.
In general, the inner product between two normalized vectors indicates how close they are.
Intuitively, if $\vecg$ and $\vecd$ are close, it means that the Guesser captured more accurately the set of cards that are still in the deck.
So if the inner product of $\vecg$ and $\vecd$ is bounded, we can say that $\vecg$ approximate $\vecd$.

Now consider the chance of a correct guess.
The probability to guess correctly is the probability that both the Guesser and the Dealer sampled the same card independently.
As the vectors $\vecg$ and $\vecd$ represents the probability for each card to be drawn, we get that the probability for a correct guess is the inner product between $\vecg$ and $\vecd$.
\begin{flalign*}
	\Pr[\text{correct guess}]
	=
	\sum_{x \in [n]} \vecg_x \cdot \vecd_x
	=
	\innprod{\vecg}{\vecd}
\end{flalign*}
The idea is visualized in~\Cref{tbl_inner_product}.

\begin{table}[h]
	%	\caption{}
	\caption{Each card belong to one of the four groups. The probability to guess correctly is the inner product between the vertices in vertical circle and the vertices in the horizontal one.}
	\label{tbl_inner_product}
	\begin{tabular}{|l|c|c|}
		\hline
		\diagbox[height=3\line,width=8cm]{Guesser thinks the card drawn}{Card drawn} & Yes & No
		\\
		\hline
		Yes & & \marktopleft{c1} Will not guess
		\\
		\hline
		No & \marktopleft{c2} Futile guess &  Reasonable guess\markbottomright{c1} \markbottomright{c2}
		\\
		\hline
	\end{tabular}
\end{table}

Similarly, we can think about the Guesser as holding a set-membership data structure.
The Guesser guesses a card from the set of elements for which the data structure claims that are absent from the set.
The prediction remains a way to measure performance.
On that aspect, our work joins that of Naor and Yogev~\cite{NY19} who studied the ability of an adversary to find a False Positive in a Bloom Filter, i.e., to find an element that does not reside in the set though the Bloom Filter thinks it does.
In particular, they considered the advantage of the adversary to find a False-Positive, and in our terms, to upper bound the probability to guess reasonably.

\subsection{Security}
Unpredictability is a goal of many security mechanisms to ensure secure and reliable operation of services and authentication.
Security measures are taken to increase the unpredictability of critical services and protocols.

As an example, in TCP, the source port, the sequence number and the ack number are initially randomized in order to decrease the probability of hijacking the session.
Being able to predict these numbers increases the probability of a successful  attack on the server in the form of connection hijack or denial of service.
As the amount of source ports is finite, by opening many connections to a server it may be possible to get a better prediction the next source port to be used without actually observing all cards.
In this aspect, it is interesting whether a similar approach to \mttbphrase may be beneficial for protecting services against such attacks.

\section*{Acknowledgments}
We thank Eylon Yogev and Yotam Dikstein for many suggestions and advice. 
We thank Hila Dahari, Uri Feige, Tomer Grossman, and Adi Schindler for meaningful discussions and insights. 
We thank Samuel Spiro for his comments.
We also thank Gal Vinograd for reading a preliminary version of this document.

% The first author would like to thank Professor Moni Naor, for being his advisor, for a huge amount of time and attention, for his patience, his words, for letting wander, and for pushing forward. 

%%%%%%%%%%%%%%%%%%%%%%%%%%%%%%%%%%%%%%%%%%%%%%%%%%%
%%%%%%%%%%%%%%%%%%%%%%%%%%%%%%%%%%%%%%%%%%%%%%%%%%%
%%%%%%%%%%%%%%%%%%%%%%%%%%%%%%%%%%%%%%%%%%%%%%%%%%%
%%%%%%%%%%%%%%%%%%%%%%%%%%%%%%%%%%%%%%%%%%%%%%%%%%%

%\vspace{-1.5cm}

\addcontentsline{toc}{section}{References}
\printbibliography

%\section{Question }
%\subsection{Answer }

%\label{Question }
%\begin{question}{}
%\end{question}
%\label{Answer }
%\begin{answer}{}
%\end{answer}

\end{document}
